\newcommand{\yi}{{\sc Yes-Instance}\xspace}
\renewcommand{\ni}{{\sc No-Instance}\xspace}
\newcommand{\size}[1]{\ensuremath{\left|#1\right|}}
\newcommand{\NP}{\mbox{\sf NP}}
\newcommand{\APX}{\mbox{\sf APX}}
\newcommand{\polylog}[1]{\mathrm{polylog(#1)}}
\newcommand{\DTIME}{\mbox{\sf DTIME}}
\newcommand{\opt}{\mbox{\sf OPT}}
\newcommand{\set}[1]{\left\{ #1 \right\}}
\newcommand{\sse}{\subseteq}
\newcommand{\tset}{{\mathcal T}}
\newcommand{\iset}{{\mathcal{I}}}
\newcommand{\pset}{{\mathcal{P}}}
\newcommand{\lset}{{\mathcal{L}}}
\newcommand{\cset}{{\mathcal{C}}}
\newcommand{\xset}{{\mathcal{X}}}
\newtheorem{theorem}{Theorem}[section]
\newtheorem{lemma}[theorem]{Lemma}
\newtheorem{observation}[theorem]{Observation}
\newtheorem{corollary}[theorem]{Corollary}
\newtheorem{claim}[theorem]{Claim}
\newtheorem{openproblem}{Open Problem}
\def\square{\vbox{\hrule height.2pt\hbox{\vrule width.2pt height5pt \kern5pt
\vrule width.2pt} \hrule height.2pt}}
\theoremstyle{definition}\newtheorem{definition}[theorem]{Definition}
\newenvironment{prog}[1]{
\begin{minipage}{5.8 in}
{\sc\bf #1}
\begin{enumerate}}
{
\end{enumerate}
\end{minipage}
}
\renewcommand{\phi}{\varphi}
\newcommand{\poly}{\operatorname{poly}}
\newcommand{\reals}{{\mathbb R}}
\newcommand{\R}{\ensuremath{\mathbb R}}
\newcommand{\expect}[2]{\text{\bf E}_{#1}\left [#2\right]}
\newcommand{\pr}[2]{\text{\bf Pr}_{#1}\left [#2\right ]}
\newcommand{\cI}{\mathcal{I}}
\newcommand{\cC}{\mathcal{C}}
\newcommand{\cP}{\mathcal{P}}
\newcommand{\draftonly}[1]{}
\newcommand{\fullonly}[1]{}
\newcommand{\sodaonly}[1]{#1}
\def\danupon#1{}
\def\parinya#1{}
\def\khaled#1{}
\def\he#1{}
\newcommand{\squishlist}{
 \begin{list}{$\bullet$}
  { \setlength{\itemsep}{0pt}
     \setlength{\parsep}{3pt}
     \setlength{\topsep}{3pt}
     \setlength{\partopsep}{0pt}
     \setlength{\leftmargin}{1.5em}
     \setlength{\labelwidth}{1em}
     \setlength{\labelsep}{0.5em} } }
\newcommand{\squishend}{
  \end{list}  }
\newcounter{Lcount}
\newcommand{\squishlisttwo}{
\begin{list}{\arabic{Lcount}. }
{ \usecounter{Lcount} \setlength{\itemsep}{0pt}
\setlength{\parsep}{0pt} \setlength{\topsep}{0pt}
\setlength{\partopsep}{0pt} \setlength{\leftmargin}{2em}
\setlength{\labelwidth}{1.5em} \setlength{\labelsep}{0.5em} } }
\newcommand{\squishendtwo}{
\end{list} }
\newcommand{\Item}{{\bf I}}
\newcommand{\Consumer}{{\bf C}}
\newcommand{\vol}{\mathsf{vol}}
\title{Geometric Pricing: How Low Dimensionality Helps in Approximability}
\date{}
\author{
Parinya Chalermsook\thanks{IDSIA, Lugano, Switzerland, Email:  {\tt parinya@uchicago.edu}. Work partially done while the author was at the University of Chicago, Chicago, IL, USA, and Max-Planck-Institut f\"ur Informatik, Saarbr\"ucken, Germany.}
\and Khaled Elbassioni\thanks{Max-Planck-Institut f\"ur Informatik,
Saarbr\"ucken, Germany. Email: {\tt
\{elbassio,hsun\}@mpi-inf.mpg.de}}
%
%
\and Danupon Nanongkai\thanks{Theory and Applications of Algorithms Research Group, University of Vienna, Vienna, Austria. Email: {\tt
danupon@gmail.com}. Work partially done while the author was at Georgia Institute of Technology, Atlanta, GA, USA and Max-Planck-Institut f\"ur Informatik, Saarbr\"ucken, Germany.}
\and He Sun\addtocounter{footnote}{-3}\footnotemark
}
\begin{document}

\begin{titlepage}

\maketitle

\thispagestyle{empty}



\begin{abstract}
Consider the following toy problem. There are $m$ rectangles and $n$ points on the plane. Each rectangle $R$ is a consumer with budget $B_R$, who is interested in purchasing the cheapest item (point) inside R, given that she has enough budget. Our job is to price the items to maximize the revenue. This problem can also be defined on higher dimensions. We call this problem the {\em geometric pricing} problem.

In high dimensions, the above problem is equivalent to the {\em unlimited-supply profit-maximizing pricing} problem, which has been studied extensively in approximation algorithms and algorithmic game theory communities. Previous studies suggest that the latter problem is too general to obtain a sub-linear approximation ratio (in terms of the number of items) even when the consumers are restricted to have very simple purchase strategies.

In this paper, we study a new class of problems arising from a geometric aspect of the pricing problem. It intuitively captures typical real-world assumptions that have been widely studied in marketing research, healthcare economics, etc. It also helps classify other well-known pricing problems, such as the {\em highway pricing} problem and the {\em graph vertex pricing} problem on planar and bipartite graphs. Moreover, this problem turns out to have close connections to other natural geometric problems such as the geometric versions of the {\em unique coverage} and {\em maximum feasible subsystem} problems.

We show that the low dimensionality arising in this pricing problem does lead to improved approximation ratios, by presenting sublinear-approximation algorithms for two central versions of the problem:  {\em unit-demand uniform-budget min-buying} and {\em single-minded} pricing problems. Our algorithm is obtained by combining algorithmic pricing and geometric techniques. These results suggest that considering geometric aspect might be a promising research direction in obtaining improved approximation algorithms for such pricing problems. To the best of our knowledge, this is one of very few problems in the intersection between geometry and algorithmic pricing areas. Thus its study may lead to new algorithmic techniques that could benefit both areas.
\end{abstract}

\thispagestyle{empty}





\thispagestyle{empty}

\end{titlepage}

\newpage

\setcounter{page}{1}

\section{Introduction}

This paper studies a geometric version of two central {\em unlimited-supply pricing} problems. We are given a set $\iset$ of $n$ consumers and a set $\cset$ of $m$ items.
Every item ${\bf I}\in\iset$ is represented by a point ${\bf I} = ({\bf I}[1],\ldots, {\bf I}[d]) \in \R^d_{\geq 0}$, where $\R_{\geq 0}$ denotes the set of non-negative reals and ${\bf I}[j]$ expresses the quality of item ${\bf I}$ in the $j$-th attribute. Every consumer ${\bf C}\in\cset$ is represented by a point ${\bf C}= ({\bf C}[1],\ldots, {\bf C}[d]) \in \R^d_{\geq 0}$, where ${\bf C}[j]$ is the criterion of consumer ${\bf C}\in\mathcal{C}$ in the $j$-th attribute. Each consumer $\Consumer$ is additionally equipped with budget $B_{\Consumer}\in\mathbb{R}_{\geq 0}$ and a consideration set
\begin{equation}\label{eq:SC for UDP-MIN}
S_\Consumer=\set{{\bf I}: {\bf I}[j] \geq {\bf C}[j], \mbox{for all } 1\leq j \leq d}.
\end{equation}
\begin{wrapfigure}{r}{0.3\textwidth}
\center
  \vspace{-.9 cm}
  \includegraphics[width=1.1\linewidth, clip=true, trim= 0.5cm 1cm 1cm 0.8cm]{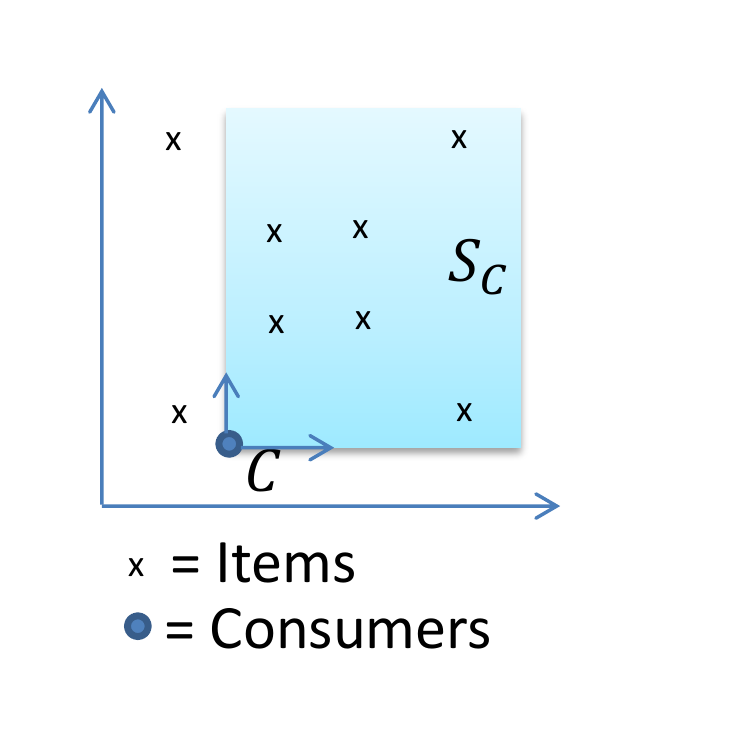}\\
  \caption{Problem visualization}\label{fig:visualization}\vspace{-.3cm}
\end{wrapfigure}
In the {\em $d$-dimensional uniform-budget unit-demand min-buying pricing problem} ($d$-{\sf UUDP-MIN}), once we assign prices to items, each consumer ${\bf C}$ will buy the cheapest item ${\bf I}$ in $S_{\Consumer}$ if the price of item  ${\bf I}$ is at most $B_{\Consumer}$. In the {\em $d$-dimensional single-minded pricing problem} ($d$-{\sf SMP}), consumer ${\bf C}$ will buy the {\em all} items in $S_{\Consumer}$ if the total price of those items  is at most $B_{\Consumer}$.
The objective is to set the price of items in $\iset$ in order to maximize the revenue. That is, we want to find $p: \iset\rightarrow \reals_{\geq 0}$ that maximizes $\sum_{{\bf C}\in \cset, \min_{\Item\in S_\Consumer} p(\Item)\leq B_\Consumer} \min_{\Item\in S_\Consumer} p(\Item)$ in the case of $d$-{\sf UUDP-MIN} and $\sum_{{\bf C}\in \cset, \sum_{\Item\in S_\Consumer} p(\Item)\leq B_\Consumer} \sum_{\Item\in S_\Consumer} p(\Item)$ in the case of $d$-{\sf SMP}.
Fig.~\ref{fig:visualization} illustrates the problem: Each item corresponds to a point in the plane. The consideration set of each consumer $\Consumer$ is represented by an (unbounded) axis-parallel rectangle with point ${\bf C}$ as a lower-left corner.



The above problems when $d$ is unbounded (called {\sf UUDP-MIN} and {\sf SMP}) have been widely studied recently
(e.g.,
\cite{Rusmevichientong03,GuruswamiHKKKM05,RusmevichientongRG06,BriestK11,ChalermsookPricing,AggarwalFMZ04,BalcanB07})
and are known to be $O(\log m)$-approximable~\cite{AggarwalFMZ04}; so we have a reasonable approximation guarantee when there are not many consumers.
However, in many cases, one would expect the number of consumers to be much larger than the number of items~$n$. In this case, we are still stuck at the trivial $O(n)$ approximation ratio, and there are evidences that suggest that getting a sub-linear approximation ratios might be impossible: Unless $\NP \subseteq \DTIME(n^{\poly  \log n})$, these problem are hard to approximate within a $2^{\log^{1-\epsilon} n}$ for any constant $\epsilon>0$~\cite{ChalermsookPricing}. Moreover, assuming a stronger (but still plausible) assumption, these problems are hard to approximate to within a factor of $n^{\epsilon}$ for some $\epsilon >0$ \cite{BriestK11}.


Motivated by various types of assumptions, the pricing problems with special structures have been studied (e.g., when there is a {\em price-ladder constraint}~\cite{Rusmevichientong03,AggarwalFMZ04,RusmevichientongRG06,RusmevichientongRG06-2,BriestK11}, consideration sets are small~\cite{BalcanB07,BriestK11} or consideration sets correspond to paths on graphs~\cite{BalcanB07,ElbassioniSZ07,GrandoniR10,ElbassioniRRS09,DBLP:conf/icalp/GamzuS10}). In these cases, better approximation ratios are usually possible.

In this paper we consider the geometric structure of pricing problems arising naturally from real-world scenarios, which turns out to be quite general. Our motivation is two-fold: We hope that the geometric structures will lead to better approximation algorithms, and we found these problems interesting on their own as they have connections to other pricing and geometric problems.

Our problems are motivated by the following simple observation on the consumers' behavior.
Consider a setting where we sell cars. If a consumer has car $A$ with horse power $130$HP in her consideration set, she would not mind buying car $B$ with horse power $150$HP. Maybe she does not want $B$ because it is less energy-efficient or has lower reputation. But, if we list {\em all} attributes of the cars that people care about and it happens that $B$ is not worse than $A$ in all other aspects, then $B$ should also be in the list.


In particular, instead of looking at a full generality where each consumer $\Consumer$ considers any set of items $S_\Consumer$, it is reasonable to assume that each consumer has some criterion in mind for each attribute of the cars, and her consideration set consists of any car that passes all her criteria, i.e.  consumers judge items according to their attributes.
This natural assumption  has been a model of study in other fields such as marketing research, healthcare economics and urban planning. It is referred to as the {\em attribute-based screening process}. In particular, using criteria to define consideration sets as in Eq.~\eqref{eq:SC for UDP-MIN} is called {\em conjunctive screening rule}. Besides being natural, this assumption has been supported by a number of studies where it is concluded that consumers typically use a conjunctive screening rule in obtaining their consideration sets (see further detail in Section~\ref{sec:relatedwork}).

It is also interesting that $d$-{\sf SMP} captures many previously studied problems as special cases. For example, $2$-{\sf SMP} generalizes the highway pricing problem~\cite{GuruswamiHKKKM05,BalcanB07,ElbassioniSZ07,GrandoniR10} and thus our algorithmic results on $2$-{\sf SMP} can immediately be applied to this problem. Moreover, $3$-{\sf SMP} generalizes the upward case of the tollbooth pricing problem \cite{ElbassioniRRS09,KortsarzRR11} as well as the graph vertex pricing problem on planar graphs \cite{BalcanB07,ChalermsookKLN11}. $4$-{\sf SMP} generalizes the unlimited-supply version of the {\em exhibition} problem \cite{ChristodoulouEF10}, the graph vertex pricing problem on bipartite graphs \cite{BalcanB07,KhandekarKMS09}, and the ``rectangle version'' of the {\em unique coverage problem} ({\sc UC}) \cite{DemaineFHS08}, which are the geometric variants of {\sc UC} studied recently in~\cite{ErlebachL08,ItoNOOUUU12}.



Moreover, {\sf SMP} is a special case of the {\em maximum feasible subsystem with 0/1 coefficients} problem ({\sc Mrfs}) \cite{ElbassioniRRS09}. Elbassioni et~al. \cite{ElbassioniRRS09} showed that a very special geometric version of {\sc Mrfs} (the ``interval version'') admits much better approximation ratios than the general one. A geometric {\sc Mrfs} can be seen as a special case of ``$2$-{\sc Mrfs}'' in our terminologies, and it is thus interesting whether ``$d$-{\sc Mrfs}'' is easier than general {\sc Mrfs} for other values of $d$. Our geometric {\sf SMP} is a special case of $d$-{\sc Mrfs}. Thus, solving $d$-{\sf SMP} serves as the first step towards solving  $d$-{\sc Mrfs}.

%


\subsection{Our Results and Techniques}

We show that geometric structures lead to breaking the linear-approximation barrier: While the pricing problems are likely to be hard to approximate within a factor of $n^{1-\epsilon}$ in the general cases, we obtain an $o(n)$-approximation algorithms in the geometric setting, as follows.

\begin{theorem}\label{thm:intro thm udp smp}
For any $d>0$, there is an $\tilde O_d\left(n^{1-\epsilon(d)}\right)$-approximation algorithm for $d$-{\sf UUDP-MIN} and $d$-{\sf SMP} where function $\epsilon(d): = \frac{1}{4^{d-1}}$ and $\tilde O_d$ treats $d$ as a constant and hides a $\mathrm{polylog}(n)$ factor.
\end{theorem}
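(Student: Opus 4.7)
The plan is to prove Theorem~\ref{thm:intro thm udp smp} by induction on the dimension $d$, establishing the recurrence
\[
A(n,d) \;\leq\; \tilde{O}\!\left(n^{3/4}\right) \cdot A\!\left(n^{1/4},\, d-1\right),
\]
where $A(n,d)$ denotes the approximation ratio of the algorithm for the $d$-dimensional problem on $n$ items. The base case $d = 1$ is the classical one-dimensional pricing problem (each $S_{\Consumer}$ is an upward ray on the line), for which $A(n,1) = O(\log n)$ is standard. Substituting the inductive hypothesis $A(n,d-1) = \tilde{O}(n^{1 - 1/4^{d-2}})$ into the recurrence yields $A(n,d) \leq \tilde{O}(n^{3/4 + (1 - 1/4^{d-2})/4}) = \tilde{O}(n^{1 - 1/4^{d-1}})$, which is the bound claimed in the theorem.

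Before setting up the recursion, I would reduce to a \emph{uniform-price} instance: by guessing the price level that the optimum collects most of its revenue at (in powers of two, losing an $O(\log n)$ factor) we may assume that all items are priced at a single value $p^{\star}$. The task then becomes a purely combinatorial selection: for $d$-\textsf{UUDP-MIN}, pick a subset $\iset^{\star}\subseteq \iset$ of items priced at $p^{\star}$ maximizing the number of consumers with $S_{\Consumer}\cap \iset^{\star}\neq \emptyset$ and $B_{\Consumer}\geq p^{\star}$; for $d$-\textsf{SMP}, the analogous counting objective is used, where an additional power-of-two guess is made for the number of items each consumer is expected to buy in order to linearize the budget constraint. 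This reduction is standard in the algorithmic-pricing literature.

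For the inductive step, I would sort items by their first coordinate and partition them into $k := n^{3/4}$ slabs $S_1, \dots, S_k$ of $n^{1/4}$ items each. Writing the uniform-price optimum as $\opt = \sum_i \opt_i$, where $\opt_i$ is the revenue contributed by consumers whose optimum purchase lies in $S_i$, an averaging argument picks a slab $S_{i^{\star}}$ with $\opt_{i^{\star}} \geq \opt/k$, costing a factor of $n^{3/4}$. The key observation is that consumers with $\Consumer[1]$ strictly below the first-coordinate range of $S_{i^{\star}}$ satisfy their first-coordinate constraint automatically for every item in $S_{i^{\star}}$; restricted to these consumers and to the $n^{1/4}$ items of $S_{i^{\star}}$, the problem becomes genuinely $(d-1)$-dimensional, and the inductive hypothesis closes the argument for this contribution.

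The main obstacle will be the remaining \emph{crossing} consumers whose $\Consumer[1]$ lies inside the range of $S_{i^{\star}}$: for them the first coordinate is still an active constraint, and an unguarded recursion within $S_{i^{\star}}$ on the same dimension would force a geometric sum $n^{(3/4)(1 + 1/4 + 1/16 + \cdots)}$ that converges to the trivial bound $n$. I plan to resolve this by showing that, up to a constant loss, the ``low-$\Consumer[1]$'' consumers capture a constant fraction of $\opt_{i^{\star}}$. The argument will use a shifted partition trick: running the slab partition with $O(1)$ independent horizontal offsets (or, equivalently, averaging over a random shift), every crossing consumer becomes a low-$\Consumer[1]$ consumer in one of the shifted partitions, and a single additional averaging over the $O(1)$ offsets assigns her to a slab in which she is dimension-dropping. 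The approximation loss from this step is absorbed into a constant inside the $\tilde{O}(n^{3/4})$ prefactor, and the recurrence goes through uniformly for both $d$-\textsf{UUDP-MIN} and $d$-\textsf{SMP}, with the only variant being how the uniform-price reduction treats the budget-packing constraint in the \textsf{SMP} case.
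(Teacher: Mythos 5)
Your route is genuinely different from the paper's (which partitions the items into chains and antichains via Dilworth's theorem, dispatches consumers with small consideration sets by the Balcan--Blum $O(k)$-approximation, and uses an epsilon-net to find a small hitting set $H$ so that each remaining group of consumers shares a dominated item $\Item^*$, whence the antichain property makes one coordinate inactive on each part $A_i^j$ simultaneously). Your recurrence arithmetic does reproduce $\epsilon(d)=1/4^{d-1}$, and the dimension drop for the ``low-$\Consumer[1]$'' consumers is sound. But the step that handles the crossing consumers --- which is exactly where the difficulty sits, as you yourself observe --- does not work. Rank the items by first coordinate, let $r(\Item)$ be the rank of $\Item$, and let $\rho(\Consumer)$ be the smallest rank of an item whose first coordinate is at least $\Consumer[1]$. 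A consumer $\Consumer$ buying item $\Item$ in the optimum is low for the slab containing $\Item$ precisely when that slab begins at a rank $\geq\rho(\Consumer)$; since slabs have width $n^{1/4}$, over the $n^{1/4}$ possible offsets this happens for exactly $\min\{r(\Item)-\rho(\Consumer)+1,\,n^{1/4}\}$ of them. A consumer whose optimal purchase is the leftmost item of her first-coordinate range ($r(\Item)=\rho(\Consumer)$) is therefore low for only \emph{one} offset in $n^{1/4}$, and nothing earlier in your argument excludes an instance in which essentially all of $\opt$ comes from such consumers (they may still have huge consideration sets, so no small-set subroutine applies). Hence $O(1)$ offsets cannot cover the crossing consumers, and averaging over the $n^{1/4}$ offsets that would cover them loses a factor $n^{1/4}$, turning the recurrence into $A(n,d)\leq n\cdot A(n^{1/4},d-1)$, i.e., the trivial bound. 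Some structurally different mechanism is needed here; the paper's hitting-set-plus-antichain device is one such, and it is not a cosmetic variant of coordinate slabs.

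A secondary but real problem is the uniform-price reduction. For {\sf UUDP-MIN} the residual single-price problem is solved exactly by pricing \emph{every} item at $p^{\star}$, so if the reduction really cost only $O(\log n)$ the whole theorem would collapse to a $\mathrm{polylog}(n)$-approximation, contradicting the $n^{\Omega(1)}$-hardness the paper cites. The resolution is that the bucketing loses $O(\log(\mbox{price range}))=O(\log m)$ in the number of \emph{consumers}, which can be exponential in $n$ and is not absorbed by the $\tilde O_d$ of the theorem (which hides only $\mathrm{polylog}(n)$). The paper avoids any such loss: its only logarithmic overheads are the epsilon-net size and the factor $d$, and its Balcan--Blum subroutine is a genuine $O(k)$-approximation.
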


The essential idea behind our algorithm is to partition the problem instance into sub-instances without decreasing the optimal revenue (we call this {\em consideration-preserving decomposition}). This is done by using Dilworth's Theorem (partitioning items into chains and anti-chains) and epsilon-nets to find subsets of items satisfying certain structural properties. Subsequently, we show that the dimensions of these sub-instances can be reduced through the notion of {\em consideration-preserving embedding}. In the end of our algorithm, we are left with a sub-linear number of sub-instances, each of which can be solved almost optimally in polynomial time. Returning the best solution among the solutions of these sub-instances guarantees a sub-linear approximation ratio.

The spirit of our technique is in some sense in a similar flavor to Chan's algorithm \cite{Chan12} which computes a {\em conflict-free} coloring of $d$-dimensional points (w.r.t. rectangle ranges) using $O(n^{1-0.632/(2^{d-3}-0.368)})$ colors. In particular, in the 2-dimensional cases of both our geometric pricing and Chan's conflict-free coloring problems, the upper bounds of $O(\sqrt{n})$ can be obtained by a simple application of Dilworth's theorem (Ajwani et al. \cite{AjwaniEGR12} obtained a better bound in this case for the latter problem). However, the techniques of the two results are different in higher dimensions.

\paragraph{QPTASs} We also obtain {\sf QPTAS}s for $2$-{\sf UUDP-MIN} and $2$-{\sf SMP}. We present this in Appendix~\ref{sec: qptas 2 udp} and \ref{sec:2-SMP}. These results, together with a widely-believed assumption that the existence of a {\sf QPTAS} for any problem implies that {\sf PTAS} exists for the same problem (e.g., \cite{BansalCES06,ElbassioniSZ07}), imply that the value of $\epsilon(d)$ in Theorem~\ref{thm:intro thm udp smp} could be improved slightly to $1/4^{d-2}$.
As a by-product of these results, we show a {\sf QPTAS} for $2$-{\sf SMP} which subsumes the previous {\sf QPTAS} for highway pricing \cite{ElbassioniSZ07}.

\paragraph{Hardness} We also study the hardness of approximation of our problems. We show that $3$-{\sf UUDP-MIN} and $2$-{\sf SMP} are {\sf NP}-hard, and $4$-{\sf UUDP-MIN} and $4$-{\sf SMP} are {\sf APX}-hard. Hence, our problem is already non-trivial for small $d$.
Our hardness proofs establish a cute connection between our problem and the vertex cover problem on graphs of low {\em order dimensions}~\cite{Schnyder89,DBLP:conf/soda/Schnyder90}.
%
%
%
Moreover, we show that the hardness of our problem tends to increase as we increase $d$, and the whole generality is captured when $d=n$. In particular, we show that when the dimension is sufficiently high (i.e. $d \geq \log^2 n$), the problems are hard to approximate to within a factor of $d^{1/4-\epsilon}$ for any $\epsilon>0$. 
Table~\ref{table:current status} concludes our results for $d$-{\sf UUDP-MIN} and $d$-{\sf SMP}.


\begin{table}
\begin{center}
\footnotesize
\begin{tabular}{|c|c|c|c|c|c|c|}
  \hline
  {\bf Problem} & & $\mathbf{d=1}$ & $\mathbf{d=2}$ & $\mathbf{d=3}$ & $\mathbf{d=4}$ & {\bf large $\mathbf{d}$  \{range \} }\\
  \hline
  \multirow{2}{*}{$d$-{\sf UUDP-MIN}} & Upper bound & Polytime & {\sf QPTAS} &  & & $n^{1-\frac{1}{4^{d-1}}}$ \{constant $d$\}\\
  & Lower bound & & & {\sf NP}-hard &  {\sf APX}-hard & $d^{\frac{1}{4}-\epsilon}$ \{$d=\omega(\log n)$\} \\
  \hline
  \multirow{2}{*}{$d$-{\sf SMP}} & Upper bound & Polytime & {\sf QPTAS} &  & & $n^{1-\frac{1}{4^{d-1}}}$ \{constant $d$\}\\
  & Lower bound & & {\sf NP}-hard & & {\sf APX}-hard & $d^{\frac{1}{4}-\epsilon}$ \{$d=\omega(\log n)$\} \\
  \hline
\end{tabular}
\end{center}
\vspace{-.5cm}
\caption{Results of $d$-{\sf UUDP-MIN} and $d$-{\sf SMP} for small values of $d$.}
\label{table:current status}
\vspace{-.3cm}
\end{table}

\fullonly{

\subsection{General Framework}\label{sec:framework}

We explain the general framework of the proofs in this subsection, and we will solve $d$-{\sf UUDP-MIN} to give the key ideas of our techniques in Section~\ref{sec:d udp min}. The proofs of our results can be divided into two parts.

In the first part, we essentially show that a large class of $d$-attribute pricing problems is sublinear-approximable {\em if} it is sublinear-approximable for some small $d$. This class is the class of pricing problems with {\em subadditive revenue} (informally described in the previous subsection), which includes all the aforementioned problems. Thus, in one shot we reduce our task to solving the pricing problems on a very simple input! The following informal theorem shows the essence of the first part (detail in Section~\ref{sec:dim reduction statement} and \ref{sec:dim reduction proof}).


\begin{theorem}[Dimension Reduction Theorem (Informal)] Let $\cP$ be any pricing problem with subadditive revenue. For any $d$ and $d'<d$, if there is an $\tilde O_d(1)$ approximation algorithm for the $d'$-attribute version of $\cP$ then there is an $\tilde O_d(n^{1-\varepsilon(d,d')})$-approximation algorithm for its $d$-attribute version, where $\varepsilon(\cdot)$ is a function defined as $\varepsilon(t,t') = 1/4^{t-t'}$.
\end{theorem}

In the second part we show that the aforementioned problems can be solved in the case of one and two attributes. First, the cases of  $1$-{\sf UUDP-MIN} and $1$-{\sf SMP} can be solved optimally by simple dynamic programs and sublinear-approximation algorithms of both problems thus follow. Furthermore, we show quasi-polynomial time approximation schemes ({\sf QPTAS}s) for $2$-{\sf UUDP-MIN} and $2$-{\sf SMP} as well as (1) {\em unit-demand utility-maximizing} $1$-attribute pricing problem where valuations depend only on attributes and (2) $1$-attribute pricing problem with {\em symmetric valuations} and {\em subadditive revenues}. These results rule out the possibility of these problems being {\sf APX}-hard unless $\NP \subseteq \DTIME(n^{\poly \log n})$. Thus {\sf PTAS}s for these problems are likely to exist. This, along with the Dimension Reduction Theorem, implies Theorem~\ref{thm:intro all}.

On a technical side, we note that our {\sf QPTAS} for $2$-{\sf SMP} generalizes the {\sf QPTAS} result in \cite{ElbassioniSZ07} for the Highway pricing problem as the Highway pricing problem is a special case of $2$-{\sf SMP}. However, $2$-{\sf SMP} seems to have a more complicated structure and is harder to handle. A good evidence of this is that while the Highway problem has a very simple $O(\log n)$-approximation algorithm \cite{BalcanB07}, getting a polynomial-time algorithm with $o(\sqrt{n})$ approximation guarantee for $2$-{\sf SMP} without assuming anything is already a challenging task. Obtaining $O(\log n)$-approximation algorithm or extending the {\sf PTAS} technique in \cite{GrandoniR10} to $2$-{\sf SMP} (or $2$-{\sf UUDP-MIN}) is an interesting open problem.
}


\subsection{Related Work}\label{sec:relatedwork}

Rusmevichientong et al.~\cite{Rusmevichientong03,RusmevichientongRG06,RusmevichientongRG06-2} defined the {\em non-parametric multi-product
pricing problem}, motivated by the possibility that the data about consumers' preferences and budgets can be predicted based on previous data, which can be gathered and mined by web sites designed for this purpose, e.g., \cite{HaublTrifts00,RusmevichientongRG06-2}. This problem is what we call uniform-budget unit-demand pricing problem ({\sf UUDP}). Rusmevichientong et al. proposed many decision rules such as min-buying, max-buying and rank-buying and showed that {\sf UUDP-MIN} allows a polynomial-time algorithm, assuming the {\em price-ladder constraint}, i.e., a predefined total order on the prices of all products. Aggarwal et al.~\cite{AggarwalFMZ04} later showed that the price ladder constraint also leads to a $4$-approximation algorithm for the max-buying case, even in the case of limited supply.

We note that the price ladder constraint is closely related to our notion of attributes in the following sense. It can be shown that $1$-{\sf UUDP-MIN} satisfies the price ladder constraint (this is the reason we can solve it in polynomial time). Moreover, although $2$-{\sf UUDP-MIN} does not satisfy this constraint, it {\em partially} satisfies the constraint in the sense that if one item is better than another item in all attributes then we can assume that it has a higher price. This property plays an important role in obtaining {\sf QPTAS} for $2$-{\sf UUDP-MIN} and also holds for general $d$.

Other variants defined later include non-uniform and utility-maximizing unit-demand, single-minded ({\sf SMP}), tollbooth and highway models~\cite{AggarwalFMZ04,GuruswamiHKKKM05}.
These problems were later found to have important connections to
algorithmic mechanism design~\cite{AggarwalH06,BalcanB05,GuruswamiHKKKM05} and online pricing
problems~\cite{BalcanB07,BlumH05}.
As we mentioned in the introduction, many problems can be approximated within the factor of $O(\log m + \log n)$ and $O(n)$, and these seem to be tight.

The observation that consumers make decisions based on attributes has been used in  other areas outside computer science. For example, most pricing models are captured by the {\em two-stage consider-then-choose} model (e.g., \cite{Gensch87,Payne82,PayneBJ88,GilbrideAllenby,HaublTrifts00,JedidiK05,HauserTEBS10,LiuArora2011}) in marketing research: Each consumer first screens out some undesirable items ({\em screening process}) and is left with the consideration set which is used to make a final decision. Pricing problems such as {\sf UUDP-MIN} are the case where consideration sets are arbitrary (as defined in, e.g. \cite{Shocker91,HauserW90}) while the final decision is simplified to, e.g., buying the cheapest item.

The idea of using the consideration sets defined from attributes is called {\em attribute-based screening process}~\cite{GilbrideAllenby} in marketing research where it was shown to be a rational choice for trading off between accuracy and cognitive effort~\cite{Bettman79,BettmanJP90,BettmanPark80,Shugan80}. Our model is equivalent to the attribute-based screening process  with {\em conjunctive screening rules} (e.g., \cite{GilbrideAllenby,LiuArora2011}). This type of rules was justified by many studies that it is what consumers typically use when making decisions (e.g., \cite{Bettman79,GilbrideAllenby,HauserTEBS10}).


\section{Sub-linear Approximation Algorithm (Proof of Theorem~\ref{thm:intro thm udp smp})}\label{sec:d udp min}

To simplify the presentation, we present the algorithm for $d$-{\sf UUDP-MIN} in this section. The algorithm for $d$-{\sf SMP} is almost identical. Let $\cset$ and $\iset$ be the set  of points in $\mathbb{R}^d$, where every consumer $\Consumer\in{\cal C}$ has budget $B_{\Consumer}$ and consideration set $S_{\Consumer}$ which is specified by coordinates of the input point. For any subset $\mathcal{C}'\subseteq\cset$ and ${\cal I}'\subseteq \iset$, let
$\mathcal{P}(\cset',\iset')$ be the $d$-{\sf UUDP-MIN} problem with input $\cset'$ and $\iset'$. Moreover, for any $\cset'$ and $\iset'$, we use $\opt(\cset', \iset')$ to express the optimal revenue of the instance $(\cset', \iset')$. At a high level, our algorithm proceeds in an inductive manner and obtains a solution of $d$-{\sf UUDP-MIN} problem by invoking the algorithms for $(d-1)$-{\sf UUDP-MIN} and $1$-{\sf UUDP-MIN} as a subroutine. Our result is summarized in the following theorem.

\begin{theorem}
\label{thm: dimension reduction for UDP}
For any $\epsilon \in (0,1]$, if there is an $\tilde O_d(n^{1-\epsilon})$-approximation algorithm for $(d-1)$-{\sf UUDP-MIN} then there is an $\tilde O_d(n^{1-\epsilon/4})$-approximation algorithm for $d$-{\sf UUDP-MIN} as well.
\end{theorem}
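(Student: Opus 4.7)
The plan is to establish a dichotomy based on the dominance poset of items, resolving each branch either by an exact chain-based algorithm or by a dimension-reducing embedding that invokes the assumed $(d-1)$-{\sf UUDP-MIN} algorithm. Define the partial order $\Item \preceq \Item'$ iff $\Item[j] \le \Item'[j]$ for every $j \in [d]$, and let $a$ be the size of a maximum antichain in $(\iset, \preceq)$, computed in polynomial time by standard matroid/flow techniques.

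In the \emph{small-antichain case} $a \le n^{1-\epsilon/4}$, I would apply Dilworth's theorem to partition $\iset$ into $a$ chains $C_1, \ldots, C_a$. On each chain the items are totally ordered under $\preceq$, so for every consumer $\Consumer$ the set $S_{\Consumer} \cap C_j$ is a \emph{suffix} of $C_j$: if some item $\Item \in C_j$ lies in $S_\Consumer$, then every item $\succeq \Item$ in $C_j$ dominates $\Consumer$ in all coordinates too. The restricted instance $\pset(\cset, C_j)$ (setting $p \equiv \infty$ outside $C_j$) is therefore equivalent to a $1$-{\sf UUDP-MIN} instance, which is solvable exactly by a simple dynamic program over the sorted chain. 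Since the original optimum decomposes over chains, $\opt \le \sum_j \opt(\cset, C_j)$ (each consumer contributes to the chain containing the item she buys), so the best chain sub-instance achieves ratio $\le a \le n^{1-\epsilon/4}$.

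In the \emph{large-antichain case} $a > n^{1-\epsilon/4}$, I would partition $\iset$ into antichains (via Mirsky's theorem or iterative peeling of maximal antichains) and show that each antichain sub-instance admits a \emph{consideration-preserving embedding} into a $(d-1)$-{\sf UUDP-MIN} instance. For an antichain $A$, the plan is to sort $A$ by coordinate $d$ and exploit the antichain property — no item in $A$ dominates another — to re-encode each consumer's coordinate-$d$ threshold either into a remaining coordinate or into a ``position on the antichain'' handled by an $\epsilon$-net, so that the resulting instance uses only $d-1$ coordinates while exactly preserving the consumer-item consideration relation. Applying the given $(d-1)$-algorithm yields ratio $|A|^{1-\epsilon}$, and the number of antichains in the partition is bounded by the longest chain $\ell$, with $a \cdot \ell \ge n$. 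The best antichain sub-instance then captures revenue $\ge \opt/\ell$, giving overall ratio $\le \ell \cdot a^{1-\epsilon}$. Minimizing $\max\{a, \ell \cdot a^{1-\epsilon}\}$ subject to $a\ell \ge n$ gives $n^{1/(1+\epsilon)}$, and the calculation $1/(1+\epsilon) \le 1 - \epsilon/4 \iff \epsilon(3-\epsilon) \ge 0$ (valid for $\epsilon \in (0,1]$) yields the desired $n^{1-\epsilon/4}$ bound after absorbing an $O(\log n)$ factor from price-class bucketing.

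The main obstacle is the consideration-preserving embedding in the antichain case: the coordinate we collapse must be absorbed without introducing spurious consideration pairs or destroying the pricing objective, and the $\epsilon$-net on sorted positions is the mechanism that makes this possible while keeping the reduction polynomial. A secondary obstacle is the bookkeeping connecting the best antichain sub-instance to $\opt$ of the original instance; the antichain-peeling decomposition is what controls this loss, and getting the balance right between $a$ and $\ell$ is what forces the particular factor $\epsilon/4$ in the inductive statement.
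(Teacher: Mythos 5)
Your small-antichain branch is fine and matches the paper's handling of chains: each chain sub-instance is a $1$-{\sf UUDP-MIN} instance solvable exactly, and the decomposition loses only a factor equal to the number of chains. The large-antichain branch, however, rests on a claim that is false as stated: that an arbitrary antichain sub-instance $(\cset, A)$ admits a consideration-preserving embedding into $\R^{d-1}$. The antichain property only says that no item in $A$ dominates another; it does not make the consumer--item incidence structure $(d-1)$-dimensional. For $d=3$, say, an antichain together with arbitrarily placed consumers can realize consideration patterns that genuinely require all three coordinates, so "sort by coordinate $d$ and re-encode the threshold into a remaining coordinate" has no way to avoid creating or destroying consideration pairs. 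Your $\epsilon$-net is invoked for the wrong purpose (handling "positions on the antichain") and the subsequent $\ell\cdot a^{1-\epsilon}$ accounting inherits the unproven premise.

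What actually makes the dimension drop work in the paper is a conjunction of three conditions that your proposal never arranges: (i) the items form an antichain $A_i$, (ii) \emph{all} consumers under consideration share a common item $\Item^*$ in their consideration sets, and (iii) one restricts to the items $A_i^j=\{\Item\in A_i: \Item[j]\ge \Item^*[j]\}$. Then $\Consumer[j]\le \Item^*[j]\le \Item[j]$ holds automatically for every relevant pair, so coordinate $j$ can be deleted without changing any consideration set (Lemma~\ref{lem:dim reduction}), and $A_i=\bigcup_j A_i^j$ precisely because $A_i$ is an antichain. To reach condition (ii) the paper must first peel off consumers with small consideration sets (handled by the Balcan--Blum/Briest--Krysta $O(k)$-approximation, Theorem~\ref{theorem:BalcanBlum}) and then apply the $\epsilon$-net theorem to the remaining consumers to extract a hitting set $H$ of size $\tilde O(n^{2\epsilon/4})$, so that some $\Item^*\in H$ is common to a $1/|H|$ fraction of the remaining revenue. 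The exponent $\epsilon/4$ is exactly the product of these losses ($n^{\epsilon/4}$ for the antichain, $n^{2\epsilon/4}$ for the hitting set, $n^{1-\epsilon}$ for the recursive call, balanced against $n^{1-2\epsilon/4}$ for the small-consideration-set branch), not the outcome of the $a$-versus-$\ell$ optimization you describe. Without the hitting-set/common-item mechanism your reduction to $(d-1)$-{\sf UUDP-MIN} does not go through.
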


Theorem~\ref{thm:intro thm udp smp} then follows from the fact that $1$-{\sf UUDP-MIN} can be solved optimally in polynomial time (see Appendix~\ref{sec:one udp min}\fullonly{ and its generalization in Section~\ref{sec: algorithms for small d}}). As we noted earlier, it can be improved slightly since $2$-{\sf UUDP-MIN} admits {\sf QPTAS} \sodaonly{(see Appendix~\ref{sec: qptas 2 udp})}\fullonly{(see Section~\ref{subsection: qptas minbuying})}.
%

\subsection{Consideration-preserving Decomposition}

Our algorithm partitions the input instance into many subinstances and tries to collect the profit from some of them. The notion of consideration-preserving decomposition, defined below, allows us to do so without losing revenue.

\begin{definition}\label{def:decomposition} We call a collection $\set{(\cset'_1,\iset'_1),\ldots, (\cset'_k, \iset'_k)}$  a {\em consideration-preserving decomposition} of the problem $(\cC, \cI)$ if and only if for any $\Consumer \in \cset$ and $\Item \in S_{\Consumer}$, there exists (not necessarily unique) $i$ such that $\Consumer \in \cset'_i$ and $\Item \in \iset'_i$.
\end{definition}

By definition, for any consumer $\Consumer$ and item $\Item$ the fact that consumer $\Consumer$ considers item $\Item$ is preserved by at least one instance $(\cset'_i, \iset'_i)$. The following lemma says that this decomposition preserves the total revenue.

\begin{lemma}
\label{lemma: UDP decomposition}
For any consideration-preserving decomposition $\left\{(\cset'_1,\iset'_1), \ldots, (\cset'_k, \iset'_k)\right\}$ of $(\cset, \iset)$, it holds that
$\sum_{i=1}^k \opt(\cset'_i, \iset'_i)\geq \opt(\cset, \iset)
\,.$
Moreover, any price function for $\pset(\cset'_i, \iset'_i)$ can be extended to a price function for the original problem $\pset(\cset, \iset)$ that gives revenue at least $\opt(\cset'_i, \iset'_i)$.
\end{lemma}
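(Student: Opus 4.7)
The statement has two parts, and both will follow directly from the defining property of a consideration-preserving decomposition together with the fact that restricting the item set can only increase the minimum price in a consumer's consideration set. I will start from an optimum pricing of $(\cC,\cI)$ and redistribute its revenue across the sub-instances.

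\emph{Part 1 (revenue lower bound).} Let $p^\ast$ be an optimal pricing of $\pset(\cC,\cI)$ with revenue $\opt(\cC,\cI)$. For each consumer $\Consumer\in\cC$ that actually buys, let $\Item^\ast(\Consumer)\in S_\Consumer$ be the item bought, so $p^\ast(\Item^\ast(\Consumer))\le B_\Consumer$ and the total revenue equals $\sum_\Consumer p^\ast(\Item^\ast(\Consumer))$. By Definition~\ref{def:decomposition} applied to the pair $(\Consumer,\Item^\ast(\Consumer))$, there exists some index $i(\Consumer)$ with $\Consumer\in\cC'_{i(\Consumer)}$ and $\Item^\ast(\Consumer)\in\iset'_{i(\Consumer)}$; fix any such assignment. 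Now, for each $i$, consider the restricted pricing $p^\ast|_{\iset'_i}$ on the sub-instance $\pset(\cC'_i,\iset'_i)$. For every consumer $\Consumer$ with $i(\Consumer)=i$, her sub-instance consideration set is $S_\Consumer\cap\iset'_i$, which is a subset of her original one and contains $\Item^\ast(\Consumer)$; hence the cheapest item there under $p^\ast|_{\iset'_i}$ has price at least $p^\ast(\Item^\ast(\Consumer))$ and at most $p^\ast(\Item^\ast(\Consumer))\le B_\Consumer$, so she buys in the sub-instance and contributes at least $p^\ast(\Item^\ast(\Consumer))$ to its revenue. Summing over such consumers,
\[
\opt(\cC'_i,\iset'_i)\;\ge\;\sum_{\Consumer:\,i(\Consumer)=i} p^\ast(\Item^\ast(\Consumer)),
\]
and summing over $i$ recovers $\sum_i\opt(\cC'_i,\iset'_i)\ge \sum_\Consumer p^\ast(\Item^\ast(\Consumer))=\opt(\cC,\cI)$.

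\emph{Part 2 (extension of a sub-instance pricing).} Given any price function $p_i$ on $\iset'_i$, define an extension $\tilde p$ on $\iset$ by $\tilde p(\Item)=p_i(\Item)$ for $\Item\in\iset'_i$ and $\tilde p(\Item)=M$ for $\Item\notin\iset'_i$, where $M$ is chosen larger than every budget $B_\Consumer$ (so these extra items are never the cheapest affordable option). For any consumer $\Consumer\in\cC'_i$, the items priced below $M$ in her original consideration set $S_\Consumer$ are exactly those in $S_\Consumer\cap\iset'_i$, so the outcome of the min-buying rule under $\tilde p$ on the full instance coincides with her outcome under $p_i$ in the sub-instance. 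Consumers outside $\cC'_i$ only add nonnegative revenue. Thus the revenue of $\tilde p$ on $(\cC,\cI)$ is at least the revenue of $p_i$ on $(\cC'_i,\iset'_i)$; taking $p_i$ optimal yields the claimed bound of $\opt(\cC'_i,\iset'_i)$.

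\emph{Potential obstacles.} There is no deep technical step here; the only subtlety is the direction of the inequality when passing to a sub-instance: shrinking the item set raises the cheapest price seen by each consumer, which is exactly what lets the sub-instance recover at least as much revenue from her as the original. One should just be slightly careful that the consumer still finds the cheapest item affordable in the sub-instance, which holds because the witness item $\Item^\ast(\Consumer)$ lies in $\iset'_{i(\Consumer)}$ by the consideration-preserving property and already satisfies $p^\ast(\Item^\ast(\Consumer))\le B_\Consumer$.
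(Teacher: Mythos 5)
Your proof is correct and follows essentially the same route as the paper's: restrict the optimal pricing to each sub-instance, use the consideration-preserving property to ensure every purchased item--consumer pair is witnessed by some sub-instance, and extend a sub-instance pricing by making outside items unaffordable. The only cosmetic differences are that you assign each buying consumer to a single sub-instance (a partition) where the paper allows overlapping sets $\cset^*_i$ and uses a union bound over them, and you use a large finite price $M$ instead of $\infty$ for the extension, which is if anything slightly cleaner given that prices are required to be reals.
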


This is simply by applying the optimal price function of one problem to the other (see Appendix~\ref{sec: proof of UDP decomposition lemma} for the full proof). In the rest of our discussion, we mainly use two different types of consideration-preservation decomposition, as explained in the following observation.

\begin{observation}\label{observation: conseration preserving decomposition}
Given an input instance $(\cset',\iset')$, let $\cset' = \bigcup_{i=1}^k \cset'_i$. Then $\{(\cset'_1,\iset')$, $\ldots$, $(\cset'_k, \iset')\}$ is a consideration-preserving decomposition of $(\cset', \iset')$. Similarly, if $\iset' = \bigcup_{i=1}^k \iset'_i$, then we have that $\set{(\cset',\iset'_1), \ldots, (\cset', \iset'_k)}$ is a consideration-preserving decomposition of $(\cset', \iset')$.
\end{observation}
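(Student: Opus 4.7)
The plan is to verify Observation~\ref{observation: conseration preserving decomposition} by a direct unwinding of Definition~\ref{def:decomposition}, which asks that every witnessed pair $(\Consumer,\Item)$ with $\Consumer\in\cset'$ and $\Item\in S_\Consumer\cap\iset'$ be covered by some subinstance that contains $\Consumer$ on the consumer side and $\Item$ on the item side. For the first half, I would fix an arbitrary such $(\Consumer,\Item)$ and observe that because the item set is held fixed at $\iset'$ across all $k$ subinstances, the item-side condition $\Item\in\iset'$ is automatic; only the consumer-side condition requires attention, and it is handed to us by the covering hypothesis $\cset'=\bigcup_{i=1}^k\cset'_i$, which guarantees some index $i$ with $\Consumer\in\cset'_i$. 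That index then certifies $(\Consumer,\Item)$ for the decomposition.

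For the second half I would run the symmetric argument: with $\iset'=\bigcup_{i=1}^k\iset'_i$, the consumer-side condition is automatic (every subinstance uses the same $\cset'$), and the hypothesis supplies some $i$ with $\Item\in\iset'_i$, which then witnesses the pair.

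There is no genuine obstacle here; the statement is essentially a restatement of Definition~\ref{def:decomposition} once one observes that freezing either the consumer set or the item set across the decomposition makes one of the two membership conditions trivial. The only care needed is in the quantifier structure: the definition quantifies over \emph{all} pairs $(\Consumer,\Item)$ with $\Item\in S_\Consumer$, so one must make sure that the surviving membership condition is discharged uniformly in the choice of pair, which it is, since the covering hypotheses in the two halves are independent of $(\Consumer,\Item)$.
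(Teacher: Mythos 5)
Your proof is correct and is exactly the intended argument: the paper states this as an immediate consequence of Definition~\ref{def:decomposition} without further proof, and your unwinding (one membership condition is frozen, the other is discharged by the covering hypothesis) is the same reasoning. No issues.
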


\subsection{Algorithm}

%

\begin{wrapfigure}{r}{0.5\textwidth}
  \vspace{-1.5cm}
  \center\includegraphics[width=\linewidth]{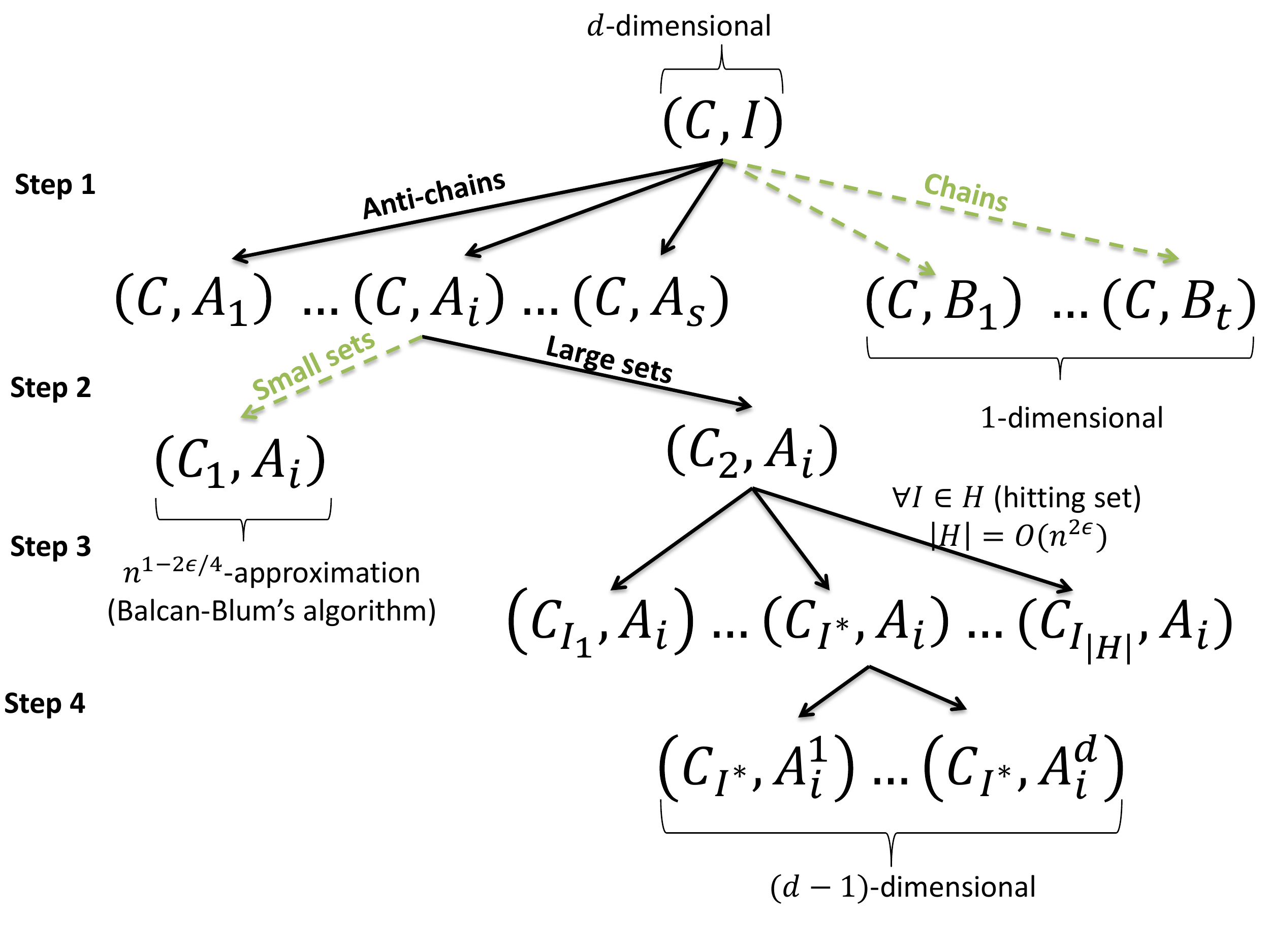}
  \vspace{-.5cm}
  \caption{Decomposition overview}\label{fig:overview}
  \vspace{-.5cm}
\end{wrapfigure}
At a high level, the algorithm proceeds in four steps where each step involves consideration-preserving decomposition (see Fig.~\ref{fig:overview} for an overview). In Step 1, we partition $\iset$ into different subsets where
 every subset satisfies certain properties, i.e. the elements in each subset either form a chain or an antichain. The problem on those subsets
  in which elements form a chain can be solved easily, and we deal with the antichains in later steps. In Step 2, we partition consumers in $\cset$ into two types, those with large and small consideration sets. We use the algorithm of \cite{BalcanB07,BriestK11} to deal with consumers with small consideration sets and handle the rest consumers in later steps. In Step 3, we find a subset of items, i.e. a ``hitting set'',  and partition consumers further into several sets. Each set of consumers has the following property: There is some item desired by all consumers in the set. Using this property, we show in Step 4 that the problem can be further partitioned into a few problems where each of them can be viewed as a $(d-1)$-{\sf UUDP-MIN} problem. (We call this a ``consideration-preserving embedding''.)




\paragraph{\underline{Step 1:} Partitioning items into chains and antichains}

Let $(\cset, \iset)$ be an input of $d$-{\sf UUDP-MIN}. First we define a partially ordered set $(\iset, \leq)$ on the item set as follows. We say that $\Item_1 \leq \Item_2$ if and only if $\Item_1$ has a lower quality than $\Item_2$ in every attribute, i.e. $\Item_1[d'] \leq \Item_2[d']$ for all $d' \in [d]$.
We say that a subset $\iset' \subseteq \iset$ is a chain if $\iset'$ can be written as $\iset'= \set{\Item_1,\ldots, \Item_z}$ such that $\Item_j \leq \Item_{j+1}$ for all $j\in[z-1]$. We say that $\iset' \subseteq \iset$ is an antichain if and only if for any pair of items $\Item, \Item' \in \iset'$, neither $\Item \leq \Item'$ nor $\Item' \leq \Item$.

\begin{lemma}
\label{lemma: chain decomposition}
For any $\epsilon >0$ and any $s = n^{\epsilon/4}, t= n^{1-\epsilon/4}$, we can partition $\iset$ into $A_1, \ldots, A_s$ and $B_1, \ldots, B_t$ in polynomial-time. Moreover, each $A_i$ is an antichain and each $B_j$ is a chain.
\end{lemma}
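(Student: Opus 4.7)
The plan is to use a greedy chain-extraction procedure followed by an application of Mirsky's theorem (the dual of Dilworth's theorem) on what remains. The underlying combinatorial fact is that $s\cdot t = n$, so if we can keep pulling out ``long'' chains of length at least $s$, we can only do so at most $t$ times before the poset is exhausted or its longest chain becomes short.

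First I would run the following loop on the poset $(\iset,\leq)$: while the current remaining subposet contains a chain of length at least $s$, find one such chain, designate it as the next $B_j$, and remove its elements. Finding a longest chain in a finite poset is equivalent to finding a longest path in the associated DAG and can be done in polynomial time (e.g. by topological sorting and dynamic programming on in-neighbors). Since each extracted $B_j$ contains at least $s$ items and the extractions are disjoint, the loop terminates after at most $\lfloor n/s\rfloor = t$ iterations, giving at most $t$ chains $B_1,\ldots,B_{t'}$ with $t'\le t$.

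Once the loop ends, the remaining subposet $\iset'' \sse \iset$ has no chain of length $\geq s$, i.e.\ its longest chain has length at most $s-1$. Here I would invoke Mirsky's theorem: any finite poset whose longest chain has length $h$ can be partitioned into $h$ antichains, and this partition is constructive — assign each element $\Item$ to level $\ell(\Item)$, defined as the length of the longest chain in $\iset''$ ending at $\Item$; the sets $A_\ell := \{\Item \in \iset'' : \ell(\Item)=\ell\}$ are antichains because if $\Item<\Item'$ lay in the same level, extending a maximum chain ending at $\Item$ by $\Item'$ would contradict the level of $\Item'$. Computing all level labels is again polynomial time via the DAG. This yields at most $s-1 \le s$ antichains $A_1,\ldots,A_{s'}$ with $s'\le s$.

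Finally, to match the exact counts $s$ and $t$ demanded by the statement, I would pad the collections with empty sets so that there are precisely $s$ antichains and $t$ chains. The sets clearly partition $\iset$ by construction, each $B_j$ is a chain (as extracted), and each $A_i$ is an antichain (by the Mirsky level argument), giving the desired decomposition in polynomial time. I do not expect a real obstacle here: the argument is a textbook combination of greedy chain peeling and Mirsky's theorem, and the parameter choice $s\cdot t = n$ is exactly what makes the two bounds match. The only mild subtlety is observing that ``longest chain $<s$'' in the residual poset suffices to bound the Mirsky partition by $s$ rather than $s+1$, which drops out automatically from the stopping condition of the loop.
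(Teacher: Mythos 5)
Your proof is correct, but it is the exact dual of the paper's argument. The paper runs the loop on \emph{antichains}: it repeatedly extracts a maximum antichain as long as its size is at least $t=n^{1-\epsilon/4}$ (so at most $n/t=s$ extractions), and then, since the residual poset has maximum antichain size below $t$, invokes Dilworth's theorem to decompose the remainder into at most $t$ chains. You instead peel off \emph{chains} of length at least $s$ (at most $n/s=t$ of them) and finish with Mirsky's theorem on the residual poset of height below $s$. The two arguments are mirror images and both rest on the same counting identity $s\cdot t=n$. Your version has a mild algorithmic advantage: every step reduces to longest-path computations in the comparability DAG (both the chain extraction and the Mirsky level labels), whereas the paper's version needs a maximum antichain and a constructive Dilworth decomposition, which go through bipartite matching / K\"onig's theorem. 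The paper's version has the cosmetic advantage that the antichains $A_i$ it produces are maximum antichains of the successive residual posets, but nothing in the rest of the algorithm uses maximality --- Steps 2--4 only need each $A_i$ to be an antichain --- so your decomposition plugs into the remainder of the argument without change. Your observation about padding with empty sets and about the strict inequality giving at most $s-1$ Mirsky levels is fine (the paper is equally casual about exact counts versus upper bounds).
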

\sodaonly{\begin{proof}[Proof Idea]}
\fullonly{\begin{proof}[Proof idea]}
(See Section~\ref{sec:detail one} for detailed definitions and proofs.)
By Dilworth's theorem \cite{Dilworth,Fulkerson-Dilworth}, the minimum chain decomposition equals to the maximum antichain size. We will use the fact that both minimum chain decomposition and maximum-size antichain can be computed in polynomial time as follows:
As long as the maximum-size antichain is bigger than $n^{\epsilon/4}$, we repeatedly extract such an antichain out of the input; otherwise, we would have the decomposition into at most $n^{\epsilon/4}$ chains, so we stop.
\end{proof}

By Observation~\ref{observation: conseration preserving decomposition}, the collection
$\{(\cset, A_1), \ldots, (\cset, A_s), (\cset, B_1), \ldots, (\cset, B_t)\}$
 is a consideration-preserving decomposition of $(\cset, \iset)$. It follows by Lemma~\ref{lemma: UDP decomposition} that
$\sum_{i=1}^s\opt(\cset, A_i)+\sum_{j=1}^t \opt(\cset, B_j) \geq \opt(\cset, \iset).$
Further, observe that if there exists $j$ such that $\opt(\cset, B_j)\geq\opt(\cset, \iset)/(2n^{1-\epsilon/4})$, then we would be done: the $d$-{\sf UUDP-MIN} problem $\pset(\cset, B_j)$ can be seen as a $1$-{\sf UUDP-MIN} problem (since $B_j$ is a chain) and hence can be solved optimally! (See  Lemma~\ref{lem:chain to one dim} for detailed analysis)
 Otherwise $\opt(\cset, B_j)\leq \opt(\cset, \iset)/ (2n^{1-\epsilon/4})$ for every $j$. Therefore
$\sum_{j=1}^t \opt(\cset, B_j) \leq n^{1-\varepsilon/4}\cdot\opt(\cset,\iset)/ (2n^{1-\varepsilon/4})< \opt(\cset,\iset)/2.$
 If this is not the case then we know that there must be an antichain $A_i$ such that
$\opt(\cset, A_i)\geq \opt(\cset,\iset)/2n^{\epsilon/4}\,.$
%


\begin{wrapfigure}{r}{.5\textwidth}
  \vspace{-.9cm}
  \center\includegraphics[width=\linewidth]{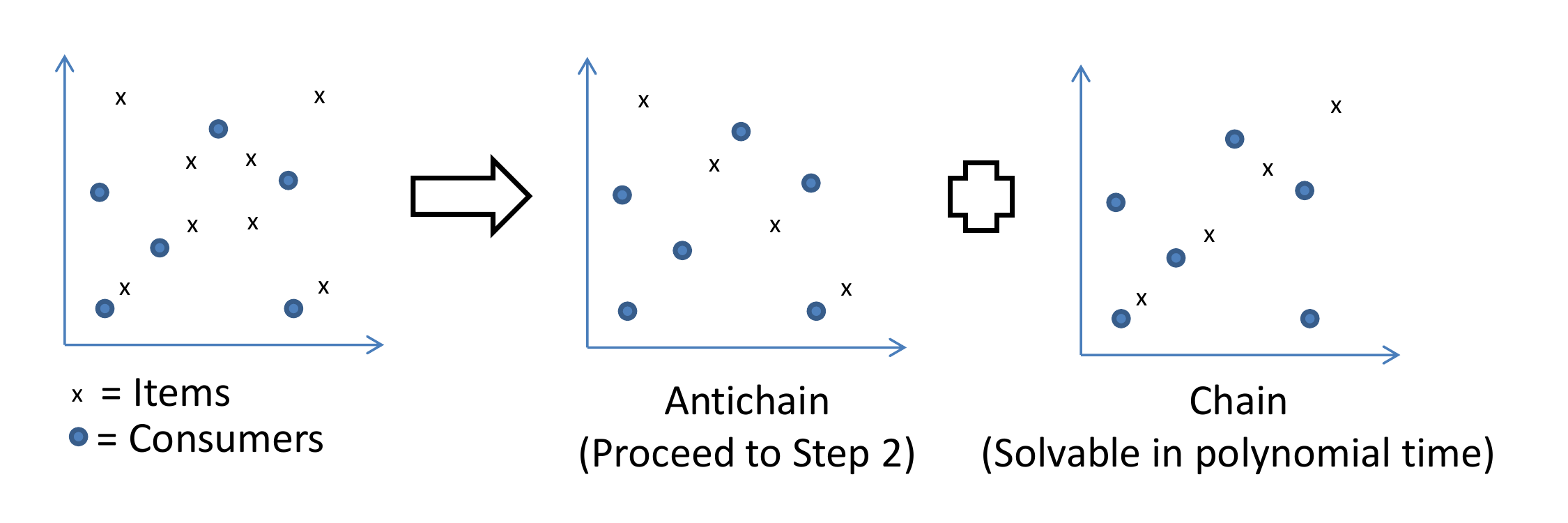}\\
  \vspace{-.3cm}\caption{Example of Step 1}\label{fig:step1}\vspace{-.2cm}
\end{wrapfigure}

\paragraph{\underline{Step 2:} Dealing with small consideration sets}
For simplicity, let us assume that we know $i$ such that $\opt(\cset, A_i)\geq \opt(\cset, \iset)/(2n^{\epsilon/4})$. Now we focus on collecting revenue from the subproblem $\pset(\cset, A_i)$. Let $\cset_1 \subseteq \cset$ be the set of consumers who are interested in at most $n^{1-2\epsilon/4}$ items in $A_i$, i.e. $\cset_1=\left\{\Consumer\in \cset: |S_\Consumer\cap A_{i}|\leq n^{1-2\epsilon/4}\right\}$, and define $\cset_2=\cset\setminus \cset_1$.
Since $\{(\cset_1, A_i), (\cset_2, A_i)\}$ is a consideration-preserving decomposition of $(\cset, A_i)$, we have
$\opt(\cset_1, A_i)+\opt(\cset_2, A_i) \geq \opt(\cset, A_i) \geq \frac{\opt(\cset, \iset)}{2n^{\epsilon/4}}.$
%
%
Now we need an algorithm of \cite{BalcanB07,BriestK11}. Balcan and Blum give an approximation algorithm for {\sf SMP} whose approximation guarantee depends on the sizes of consideration sets. Briest and Krysta, by using a slight modification of this algorithm, give an approximation algorithm with the same guarantee for {\sf UDP-MIN}. Their result, stated in terms of {\sf UUDP-MIN}, is summarized in the following theorem. (For completeness, we provide the proof in Appendix~\ref{sec:balcanblum for udp min}.)

\begin{theorem}\label{theorem:BalcanBlum}\cite{BalcanB07,BriestK11}
Given a {\sf UUDP-MIN} instance $(\cset, \iset, \set{S_{\Consumer}}_{\Consumer \in \cset})$, there is a deterministic $O(k)$-approximation algorithm of {\sf UUDP-MIN}, where  $k: = \max_{\Consumer \in \cset} \size{S_{\Consumer}}$.
\end{theorem}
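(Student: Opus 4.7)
The plan is to use a simple randomized two-price rounding scheme and derandomize it via the method of conditional expectations. Because $d$-{\sf UUDP-MIN} is a uniform-budget problem, let $B$ denote the common budget shared by all consumers in $\cset$. The key observation is that if every item is priced either $B$ or $+\infty$, then a consumer $\Consumer$ with $|S_\Consumer|\ge 1$ pays exactly $B$ whenever at least one item in $S_\Consumer$ is priced $B$, and otherwise pays nothing; the revenue therefore collapses to a coverage-type quantity that is easy to compare against $\opt$.

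Concretely, I would consider the randomized algorithm that, independently for each item $\Item\in\iset$, sets $p(\Item)=B$ with probability $1/k$ and $p(\Item)=+\infty$ otherwise. For any consumer $\Consumer$ with $s := |S_\Consumer|\in[1,k]$, the probability that at least one item in $S_\Consumer$ receives price $B$ equals
\[
1-(1-1/k)^{s}\;\ge\;1-e^{-s/k}\;\ge\;\frac{s}{2k}\;\ge\;\frac{1}{2k},
\]
and conditional on this event $\Consumer$ pays $B$. Summing over all consumers with non-empty consideration sets and using that every feasible pricing extracts at most $B$ per consumer (so that $\opt\le B\cdot|\set{\Consumer:|S_\Consumer|\ge 1}|$), the expected revenue of the randomized algorithm is at least $\opt/(2k)$, giving an $O(k)$-approximation in expectation.

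To obtain a deterministic algorithm, I would derandomize via the method of conditional expectations: process the items one by one in an arbitrary order, and for each item choose $p(\Item)\in\set{B,+\infty}$ that maximizes the conditional expected revenue. This quantity is efficiently computable, since after some prices are fixed the conditional probability that consumer $\Consumer$ buys is either $1$ (if some already-decided item in $S_\Consumer$ has price $B$) or $1-(1-1/k)^{u_\Consumer}$ (otherwise), where $u_\Consumer$ is the number of still-undecided items in $S_\Consumer$. The main delicacy in the plan is justifying the restriction to the two-price scheme $\set{B,+\infty}$ without loss relative to $\opt$; this is where the uniform-budget hypothesis is essential, since the universal upper bound $\opt\le B\cdot|\set{\Consumer:|S_\Consumer|\ge 1}|$ is tight up to the $O(k)$ rounding loss and no additional logarithmic factor in the budget range creeps in (as it would in the Briest--Krysta extension to the general {\sf UDP-MIN}).
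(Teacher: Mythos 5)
Your plan rests on a misreading of the problem. In this paper (and in the Balcan--Blum/Briest--Krysta results it invokes), ``uniform-budget'' in {\sf UUDP-MIN} means that each consumer $\Consumer$ has a \emph{single} budget $B_{\Consumer}$ that applies uniformly to every item in $S_{\Consumer}$, as opposed to the non-uniform {\sf UDP-MIN} where a consumer may have a different budget for each item. It does \emph{not} mean that all consumers share one common budget: the instance explicitly equips each consumer with her own $B_{\Consumer}\in\R_{\geq 0}$, and the paper's reductions use instances mixing budgets $1$ and $2$. Consequently there is no single value $B$ with which to run your two-price scheme $\set{B,+\infty}$, the upper bound $\opt\leq B\cdot\size{\set{\Consumer: S_{\Consumer}\neq\emptyset}}$ is unavailable, and the collapse of revenue to a coverage quantity fails. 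Bucketing budgets into geometric classes to recover a ``common $B$'' per class would cost an extra logarithmic factor that the $O(k)$ bound does not allow.

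The intended argument keeps your $1/k$ sampling but decouples it from the choice of prices. Sample each item into $\iset^*$ independently with probability $1/k$, and let $\cset^*$ be the consumers whose consideration set meets $\iset^*$ in \emph{exactly one} item. If $p$ is an optimal price and $\phi(\Consumer)$ the item $\Consumer$ buys under $p$, then the event that $\phi(\Consumer)\in\iset^*$ and $\Consumer\in\cset^*$ has probability at least $\frac{1}{k}(1-\frac{1}{k})^{k-1}\geq\frac{1}{ek}$, so $\Ex[\opt(\cset^*,\iset^*)]\geq\opt(\cset,\iset)/(ek)$. The restricted instance decomposes into independent single-item pricing problems (each consumer in $\cset^*$ considers exactly one item of $\iset^*$), each solvable optimally by trying all budget thresholds among $\set{B_{\Consumer}}_{\Consumer\in\cset^*}$ --- this is precisely where the heterogeneous budgets are handled. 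Your derandomization by conditional expectations is a perfectly reasonable alternative to the paper's route (which first reduces to $k=O(\log m+\log n)$ and then uses $k$-wise independent sample spaces), but it must be applied to this item-sampling scheme rather than to a direct price assignment.
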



We remark that we extend this technique to deal with any pricing problem with subadditive revenue in the full version of this paper.

If $\opt(\cset_1, A_i)\geq \opt(\cset, \iset)/(4n^{\epsilon/4})$, then we could invoke the algorithm in Theorem~\ref{theorem:BalcanBlum} on $(\cset_1, A_i)$ to get a solution with approximation ratio
$O\left(\max_{\Consumer \in \cset_1} \size{S_{\Consumer}\cap A_i}\right)=O(n^{1-2\epsilon/4}).$
 This yields a solution that gives a desired revenue of
$\Omega\left(\opt(\cset_1, A_i)/n^{1-2\epsilon/4}\right)=\Omega\left(\opt(\cset, \iset)/n^{1-\epsilon/4}\right)\,.$
 Otherwise we have $\opt(\cset_1, A_i)< \opt(\cset, \iset)/4n^{\epsilon/4}$. Then
$\opt(\cset_2, A_i)=\Omega\left(\opt(\cset, \iset)/n^{\epsilon/4}\right).$
We will deal with this case in the next steps.


\paragraph{\underline{Step 3:} Partitioning consumers using a small hitting set}
%
First, we apply the epsilon net theorem~\cite{Chazelle-book,Epsilon-net} to derive the following lemma.
\begin{lemma}\label{lem:hitting set}
We can find a set $H\subseteq A_i$ of size $\tilde O(n^{2\epsilon})$ in randomized polynomial time such that for any $\Consumer\in \cset_2$, there exists $\Item\in H$ such that $\Item\geq \Consumer$.
\end{lemma}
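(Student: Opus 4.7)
The plan is to recognize Lemma~\ref{lem:hitting set} as a textbook application of the $\epsilon$-net theorem for axis-parallel orthants. I would set up the range space $(A_i, \rset)$ whose ground set is the antichain $A_i$ and whose ranges are the consideration sets restricted to $A_i$, namely $\rset = \set{ S_{\Consumer} \cap A_i : \Consumer \in \cset_2 }$. Recall that each $S_\Consumer$ is a (closed) upper orthant in $\R^d$ of the form $\set{\mathbf{x}\in\R^d : \mathbf{x}[j]\geq \Consumer[j] \text{ for all } j}$, so the ranges of $\rset$ are intersections of $A_i$ with such orthants.

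The first thing I would verify is that this range space has VC-dimension bounded by a function of $d$ alone. Each orthant is defined by $d$ axis-aligned half-spaces of the form $\set{\mathbf{x}[j]\geq t_j}$, and the class of such axis-aligned half-spaces in $\R^d$ has VC-dimension $1$; by the standard closure property, intersections of $d$ of them have VC-dimension $O(d)$. Thus the range space $(A_i,\rset)$ has VC-dimension $\delta = O(d)$, which is constant whenever $d$ is.

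The second step is to invoke the $\epsilon$-net theorem (Haussler--Welzl) with parameter $\epsilon' := n^{-\epsilon/2}$. The theorem guarantees that a uniformly random sample from $A_i$ of size $O((\delta/\epsilon')\log(1/\epsilon'))=\tilde O_d(n^{\epsilon/2})$ is, with high probability, an $\epsilon'$-net for $(A_i,\rset)$; this gives the required randomized polynomial-time construction. For every $\Consumer\in\cset_2$, the definition of $\cset_2$ yields $|S_\Consumer\cap A_i|>n^{1-\epsilon/2}\geq \epsilon'\cdot |A_i|$, so every range in $\rset$ is ``heavy'' and is therefore hit by the net. A hit is exactly an item $\Item\in H$ with $\Item\in S_\Consumer$, i.e.\ $\Item\geq \Consumer$ in the componentwise order, which is the conclusion of the lemma. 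Since $n^{\epsilon/2}\leq n^{2\epsilon}$, the size bound is within the claimed $\tilde O(n^{2\epsilon})$.

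There is no real obstacle: the only things to be careful about are (i) stating the VC-dimension bound for axis-parallel upper orthants cleanly, which is standard, and (ii) matching the ``heaviness'' threshold $\epsilon'|A_i|$ to the defining threshold $n^{1-\epsilon/2}$ of $\cset_2$, which works precisely because $|A_i|\leq n$. If one preferred a derandomized construction, one could instead appeal to known deterministic $\epsilon$-net constructions for bounded-VC range spaces at the cost of a slightly worse polylogarithmic factor, still within the $\tilde O(n^{2\epsilon})$ budget.
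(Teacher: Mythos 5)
Your proposal is correct and follows essentially the same route as the paper: the same range space over $A_i$ with ranges $S_\Consumer\cap A_i$, the same $O(d)$ VC-dimension bound for intersections of axis-aligned half-spaces, and the same invocation of the Haussler--Welzl $\epsilon$-net theorem with threshold $n^{-\epsilon/2}$ matched to the defining threshold of $\cset_2$. The resulting net of size $\tilde O_d(n^{\epsilon/2})$ is exactly what the paper obtains (and uses later as $|H|=O(n^{2\epsilon/4})$), comfortably within the stated bound.
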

%
%
%
\begin{proof}
%
\danupon{We should consider providing a bit more detail (``a primer'') in the Appendix.}
The instance $(\cC_2, A_i)$ defines a set system $\{S_\Consumer\}_{\Consumer\in \cC_2}$ over $A_i$, where $S_\Consumer=\{\Item\in A_i\mid \Item\geq \Consumer\}$. We note that each set $S_{\Consumer}$ has {\em descriptive complexity} at most $d$,  i.e. set $S_{\Consumer}$ can be described by $d$ linear inequalities of the form $S_{\Consumer} = \bigcap_{d'=1}^d \set{\Item \in \iset: \Item[d'] \geq \Consumer[d']}$. In this case, this set system has VC dimension $O(d)$, c.f. \cite{Sharir-book}. More specifically, it is well known (e.g., \cite{AronovES10}) that any collection of $d$-dimensional axis-parallel boxes has VC dimension $O(d)$. We will not formally define VC-dimension here. The following theorem is all we need.

\begin{theorem}(\cite{Chazelle-book,Epsilon-net}; Epsilon net theorem)
Let $\xset$ be a set system of VC-dimension at most $d'$ over $N$. Then for any $\delta \in (0,1)$, we can find a set $H \subseteq N$ with $\size{H} = O(\frac{d'}{\delta}\log \frac{d'}{\delta})$ in randomized polynomial time such that, for all $X_i \in \xset$ with $\size{X_i} \geq \delta \size{N}$, it holds that  $H \cap X_i \neq \emptyset$.
\end{theorem}
Using the theorem with $\delta = n^{-2\epsilon/4}$, we can find a set $H \subseteq A_i$ of size at most $\tilde{O}(n^{2\epsilon/4})$, and since we have $|S_{\Consumer} \cap A_i| \geq \delta n$ for all $\Consumer \in \cset_2$, we are guaranteed that $H \cap S_{\Consumer} \neq \emptyset$ for all $\Consumer \in \cset_2$.
\end{proof}

We call $H$ a {\em hitting set} of $\cC_2$ since $H$ intersects $S_\Consumer$ for all $\Consumer\in \cC_2$. We use $H$ to decompose $(\cC_2, A_i)$ into a small number of subproblems and show in Step 4 that each of these problems can be viewed as a $(d-1)$-{\sf UUDP-MIN} problem.

For each $\Item\in H$, let $\cset_\Item=\{\Consumer\in \cset_2\mid \Item \in S_\Consumer\}$, i.e., $\cset_\Item$ consists of all consumers in $\cset_2$ that consider item $\Item$. Observe that $\bigcup_{\Item \in H} \cset_{\Item} = \cset_2$, and therefore by Lemma~\ref{lemma: UDP decomposition}, we have
$\sum_{\Item \in H} \opt(\cset_\Item, A_i)\geq \opt(\cset_2, A_i)
\geq \Omega\left(\opt(\cset, \iset)/n^{\epsilon/4}\right).$
%
Since $\size{H}=O(n^{2\epsilon/4})$, there exists $\Item^*\in H$ such that
\begin{align*}
\opt(\cset_{\Item^*}, A_i) &=\tilde\Omega\left(\opt(\cC, \cI)\cdot n^{-\epsilon/4}/|H|\right)
=\tilde\Omega\left(\opt(\cC, \cI)/n^{3\epsilon/4}\right).
\end{align*}
Now we, again, assume that we know $\Item^*$ and turn our focus to the subproblem $\pset(\cset_{\Item^*}, A_i)$.


\paragraph{\underline{Step 4:} Reducing the dimension} We have now reached the most crucial step.
%
%
We will (crucially) rely on the fact that all consumers in $\cset_{\Item^*}$ consider item $\Item^*$, and that $A_i$ is an antichain.
%
%
For each $j \leq d$,  define $A_i^j$ as the set of items in $A_i$ that are at least as good as $\Item^*$ in the $j$-th coordinate, i.e., $A_i^j=\{\Item\in A_i \mid \Item[j]\geq \Item^*[j]\}$. See Fig.~\ref{fig:step4_1} for an example in the case of $2$-{\sf UUDP-MIN}.

%
%
\begin{lemma}\label{lem:union of d dimensions}
$A_i = \bigcup_{j=1}^d A_i^j$.
\end{lemma}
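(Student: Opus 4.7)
The plan is to show containment in both directions. The inclusion $\bigcup_{j=1}^d A_i^j \subseteq A_i$ is immediate from the definition of $A_i^j$, so the content is in showing $A_i \subseteq \bigcup_{j=1}^d A_i^j$. Recall that $\Item^* \in H \subseteq A_i$, and $A_i$ is an antichain under the coordinatewise order $\leq$.

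Fix an arbitrary $\Item \in A_i$; I want to produce some coordinate $j \in [d]$ for which $\Item[j] \geq \Item^*[j]$. If $\Item = \Item^*$, any $j$ works trivially, so assume $\Item \neq \Item^*$. First I would invoke the antichain hypothesis: since $\Item$ and $\Item^*$ are two distinct elements of the antichain $A_i$, we have in particular $\Item \not\leq \Item^*$. Unpacking the definition of $\leq$ (coordinatewise dominance), this means there must exist some coordinate $j \in [d]$ such that $\Item[j] > \Item^*[j]$, and in particular $\Item[j] \geq \Item^*[j]$, which is exactly the condition for $\Item \in A_i^j$. This establishes $A_i \subseteq \bigcup_{j=1}^d A_i^j$ and completes the argument.

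There is essentially no obstacle here; the only thing to be careful about is which of the two relations $\Item \not\leq \Item^*$ versus $\Item^* \not\leq \Item$ supplied by the antichain property yields the desired inequality direction. The direction needed is $\Item \not\leq \Item^*$, which gives a coordinate where $\Item$ beats $\Item^*$, matching the definition $A_i^j = \{\Item \in A_i : \Item[j] \geq \Item^*[j]\}$. The statement would fail without the antichain property (e.g., if $\Item \leq \Item^*$ strictly in every coordinate, then $\Item \notin A_i^j$ for any $j$), so it is precisely the outcome of Step 1's chain/antichain decomposition that makes this dimension-reduction step possible.
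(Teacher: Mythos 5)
Your proof is correct and matches the paper's own argument: both derive $A_i \subseteq \bigcup_j A_i^j$ from the antichain property of $A_i$ applied to the pair $\Item, \Item^*$, extracting a coordinate $j$ with $\Item[j] \geq \Item^*[j]$. Your version is slightly more careful than the paper's (handling $\Item = \Item^*$ explicitly and noting which direction of non-comparability is needed), but it is the same proof.
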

%

\begin{figure}
\begin{center}
\subfigure[]{
\includegraphics[height=0.12\textheight, clip=true, trim=1.5cm .3cm 1cm .3cm]{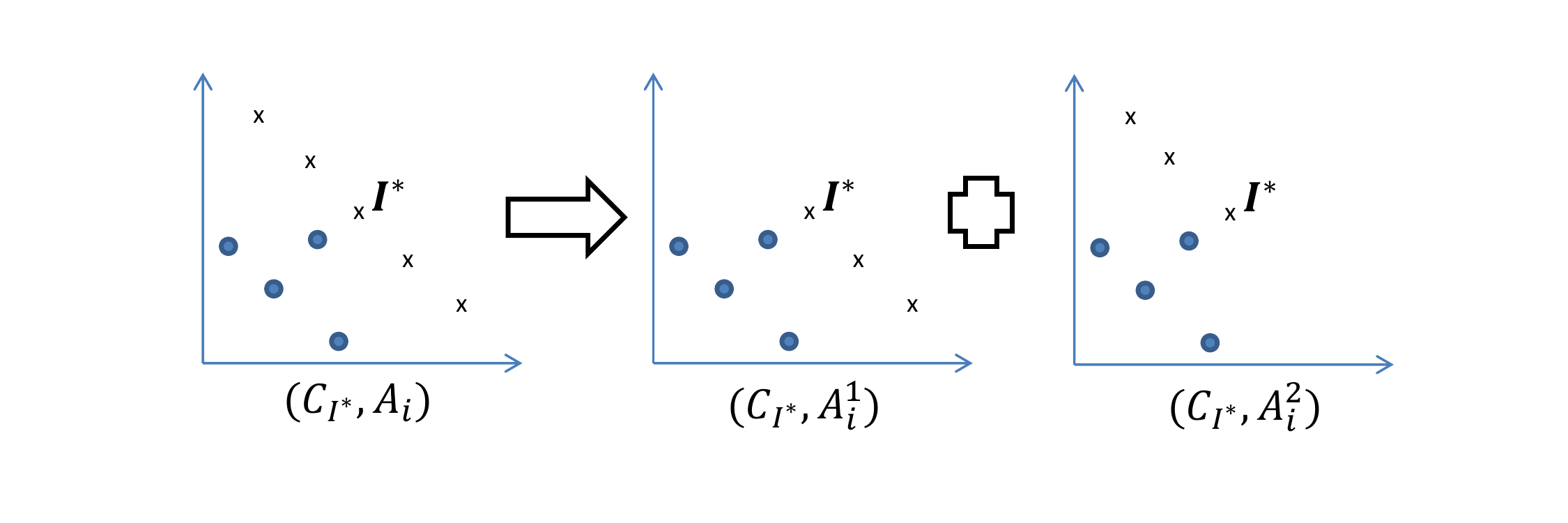}
\label{fig:step4_1}
}
\subfigure[]{
\includegraphics[height=0.12\textheight, clip=true, trim=3cm .75cm 7cm 1cm]{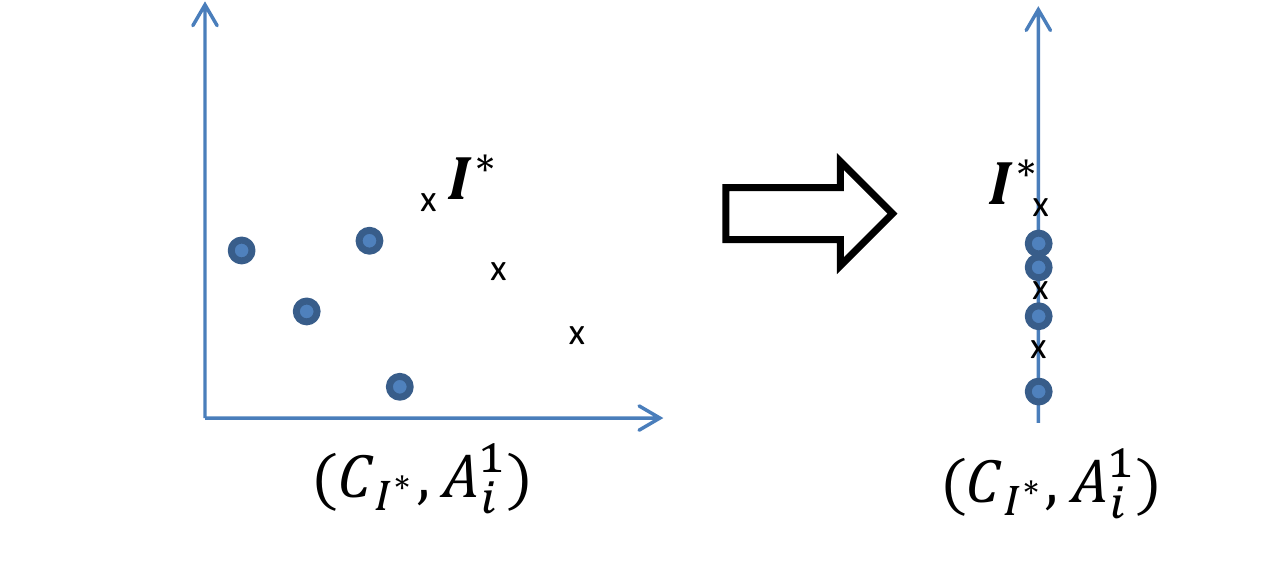}
\label{fig:step4}
}
\end{center}
\vspace{-.5cm}
\caption{\subref{fig:step4_1} Example of Step 4. \subref{fig:step4} Example of Step 4 when we view the instance $\cC_{\Item^*}, A_i^j$ as a $(d-1)$-{\sf UUDP-MIN} instance.}
\vspace{-.5cm}
\end{figure}

This lemma holds simply because $A_i$ is an antichain (in any antichain, no item can completely dominate the others, so at least one coordinate of any $\Item\in \cI_{\Item^*}$ has to be at least as good as $\Item^*$; see detailed proof in Appendix~\ref{proof:union of d dimensions}). Then $\{(\cC_{\Item^*}, A_i^1), \ldots, (\cC_{\Item^*}, A_i^d)\}$ is a consideration-preserving decomposition of $(\cC_{\Item^*}, A_i)$ and thus there exists $j$ such that
$\opt(\cC_{\Item^*}, A_i^j) \geq \opt(\cset_{\Item^*}, A_i)/d = \tilde\Omega_d(\opt(\cC, \cI)/n^{3\epsilon/4}).$
%
%
Observe that, for all $\Consumer\in \cset_{\Item^*}$ and $\Item \in A_i^j$, $\Consumer[j]\leq \Item^*[j]\leq \Item[j]$. This implies that we can ignore the $j$-th coordinate when we solve $\pset(\cset_{\Item^*}, A_i^j)$. (In particular, for any $\Consumer\in\cC_{\Item^*}$, the consideration set $S_\Consumer=\left\{\Item\geq \Consumer\mid \Item\in A_i^j\right\}$ remains the same even when we drop the $j$-th coordinate of all points.) In other words, the problem can be viewed as a $(d-1)$-{\sf UUDP-MIN} problem (see Fig.~\ref{fig:step4} for an idea).
We defer the formal statement and proof of this claim to Section~\ref{sec:detail four}.
Finally, we can invoke the $\tilde O_d(n^{1-\epsilon})$-approximation algorithm for $(d-1)$-{\sf UUDP-MIN} to collect the revenue of
%
$\tilde\Omega_d\left(\opt(\cset, \iset) n^{-3\epsilon/4}/n^{1-\epsilon}\right)$ $=\tilde\Omega_d\left(\opt(\cset, \iset)/n^{1-\epsilon/4}\right).$ Therefore we obtain an approximation ratio of $\tilde O_d(n^{1-\epsilon/4})$ in all cases. Algorithm~\ref{algo:udp} summaries our algorithm for solving $d$-{\sf UUDP-MIN}.

\begin{algorithm}
\caption{\footnotesize {\sf UUDP-MIN-APPROX}($d$)}\label{algo:udp}
\begin{algorithmic}[1]
\footnotesize
\IF{$d=1$}

\STATE Solve the problem $\cP(\cC, \cI)$ optimally using an algorithm for $1$-{\sf UUDP-MIN} (cf. Appendix~\ref{sec:one udp min})

\ELSE

\STATE Partition $\cI$ into antichains $A_1, \ldots, A_s$ and chains $B_1, \ldots, B_t$ where $s\leq n^{\epsilon/4}$ and $t\leq n^{1-\epsilon/4}$ as in Step 1.

\STATE We claim that the problems $\cP(\cC, B_1), \ldots, \cP(\cC, B_t)$ are equivalent to $1$-{\sf UUDP-MIN} problems (cf. Section~\ref{sec:detail one}). Solve them optimally using an algorithm for $1$-{\sf UUDP-MIN} (cf. Appendix~\ref{sec:one udp min}).

\FOR{$i=1, \ldots, s$}

\STATE Partition $\cC$ into $\cC_1$ and $\cC_2$ as in Step 2. Find an $O(\max_{\Consumer \in \cset_1} \size{S_{\Consumer}\cap A_i})$=$O(n^{1-2\epsilon/4})$ approximate solution of problem $\cP(\cC_1, A_i)$.


\STATE Find a hitting set $H$ of $(\cC_2, A_i)$ as in Step 3

\FOR{each $\Item\in H$}

\STATE Define $\cC_\Item$ as in Step 3

\STATE Define $A_i^1, \ldots, A_i^d$ as in Step 4 

\STATE Solve problem $\cP(\cC_\Item, A_i^1), \ldots, \cP(\cC_\Item, A_i^d)$ using an $O(n^{1-\epsilon})$-approximation algorithm for $(d-1)$-{\sf UUDP-MIN}

\ENDFOR

\ENDFOR

\ENDIF

\RETURN the solution with highest revenue among the solutions of all solved problems

\end{algorithmic}
\end{algorithm}

\subsection{Consideration-preserving Embedding}\label{sec:detail four}

To formally discuss the reduction of dimensions, we introduce the notion of consideration-preserving embedding. For any $d$, let $(\cC, \cI)$ be any instance of $d$-{\sf UUDP-MIN}. For any $d'$, consider one-to-one functions $f$ and $g$ that map points in $\R^d$ to the ones in $\R^{d'}$. We say that $(f, g)$ is a {\em consideration-preserving embedding} if, for any item $\Item \in \iset$ and consumer $\Consumer \in \cset$, we have that $\Item \geq \Consumer$ if and only if $g(\Item) \geq f(\Consumer)$. That is, the fact that consumer $\Consumer$ is considering or not considering item $\Item$ must be preserved in $f(\Consumer)$ and $g(\Item)$.

%

Given a consideration-preserving embedding $(f,g)$, we can naturally define a $d'$-{\sf UUDP-MIN} problem $\pset(f(\cset), g(\iset))$ where $f(\cset)=\{f(\Consumer)\mid \Consumer\in \cset\}$, $g(\iset)=\{g(\Item)\mid \Item \in \iset\}$ \danupon{Should we say this? ``(note that we allow the sets to contain identical points)''} and the budget $B_{f(\Consumer)}$ is $B_\Consumer$ for any $\Consumer\in\cset$.

Observe that, although  $(\cC, \cI)$ and $(f(\cC), g(\cI))$ correspond to points on different spaces, they represent the same pricing problem (i.e., the consumers' consideration sets and budgets are exactly the same). Thus, we sometimes say that $(\cC, \cI)$ and $(f(\cC), g(\cI))$ are {\em equivalent}. The following observation follows trivially.

\begin{observation}
\label{observation: consideration preserving embedding}
For any instance $(\cset,\iset)$, let $(f,g)$ be a consideration-preserving embedding of $(\cset, \iset)$ into $\R^{d'}$. Then we have that
$\opt(\cset, \iset) = \opt(f(\cset), g(\iset)).$
Moreover, if $f$ and $g$ are polynomial-time computable then a solution for $\cP(f(\cset), g(\iset))$ can be efficiently transformed into one for $\cP(\cset, \iset)$  that gives the same revenue.
\end{observation}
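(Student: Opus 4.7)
The plan is to verify that the bijection given by $(f,g)$ carries price functions of one instance to price functions of the other in a revenue-preserving way. First I would set up the correspondence: since $g:\iset\to g(\iset)$ is one-to-one, any price function $p:\iset\to\R_{\geq 0}$ for $\pset(\cset,\iset)$ induces a price function $p':g(\iset)\to\R_{\geq 0}$ defined by $p'(g(\Item))=p(\Item)$, and conversely any $p'$ on $g(\iset)$ pulls back via $g^{-1}$ to a price function on $\iset$. Both directions are clearly polynomial-time computable whenever $f$ and $g$ are.

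Second, I would check that the consideration sets match up. Fix a consumer $\Consumer\in\cset$. By the defining property of a consideration-preserving embedding, for every item $\Item\in\iset$ we have $\Item\in S_{\Consumer}$ (i.e.\ $\Item\geq\Consumer$) if and only if $g(\Item)\geq f(\Consumer)$, which is exactly the condition $g(\Item)\in S_{f(\Consumer)}$ in the $d'$-dimensional instance. In other words, $g$ restricts to a bijection $S_{\Consumer}\to S_{f(\Consumer)}$. Combined with $B_{f(\Consumer)}=B_{\Consumer}$ and $p'(g(\Item))=p(\Item)$, the cheapest item in $S_{\Consumer}$ under $p$ is mapped to the cheapest item in $S_{f(\Consumer)}$ under $p'$, at the same price, and the affordability condition $\min_{\Item\in S_\Consumer}p(\Item)\leq B_\Consumer$ holds in the source instance precisely when it holds in the target instance. (In the $d$-\textsf{SMP} variant one argues identically with $\sum_{\Item\in S_\Consumer}p(\Item)$ in place of the min.)

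Consequently, the revenue collected from consumer $\Consumer$ under $p$ equals the revenue collected from $f(\Consumer)$ under $p'$. Summing over $\cset$ gives that $p$ and $p'$ yield exactly the same total revenue, so the objective values agree on corresponding price functions. Taking the supremum over price functions on each side gives $\opt(\cset,\iset)=\opt(f(\cset),g(\iset))$, and the explicit transformation $p\mapsto p'$ (resp.\ its inverse) gives the ``moreover'' part.

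There is essentially no obstacle here: the observation is really just an unpacking of the definitions, and the only thing one has to be careful about is that $f$ and $g$ are one-to-one (so the pullback of a price function is well-defined) and that the budgets are transported verbatim. Both are built into the setup, so the proof amounts to the three bullet points above.
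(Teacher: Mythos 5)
Your proposal is correct and matches the paper's argument, which simply notes that one transports a price function across the embedding via $p'(g(\Item))=p(\Item)$ (equivalently, prices $\Item$ at $p(g(\Item))$ in the reverse direction) and that the revenue is unchanged because the consideration sets and budgets coincide by definition. Your write-up is a more detailed unpacking of the same one-line observation.
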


The transformation in the above lemma is trivial: For any price function $p$ for $(f(\cset), g(\iset))$, we simply price item $\Item\in \iset$ to $p(g(\Item))$. Observe that we will receive the same revenue from both problems using this pricing strategy.

In Step 1, we claimed that when the items form a chain, our instance would be equivalent to $1$-{\sf UUDP-MIN}. Now we prove this fact formally below.

\begin{lemma}\label{lem:chain to one dim}
Let $(\cset, \iset)$ be a $d$-{\sf UUDP-MIN} instance where $(\iset, \leq)$ is a chain. Then $(\cC, \cI)$ is equivalent to a $1$-{\sf UUDP-MIN} instance. Moreover, the corresponding consideration-preserving embedding $(f, g)$ can be computed in polynomial time.
\end{lemma}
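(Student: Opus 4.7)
The plan is to exploit the fact that in a chain, each consumer's consideration set is a \emph{suffix} of the chain, so a single integer (the threshold index where the suffix begins) fully encodes which items the consumer considers.

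First, I would order the items of $\iset$ as $\Item_1 \leq \Item_2 \leq \cdots \leq \Item_n$ according to the chain order on $(\iset,\leq)$; since $\iset$ is a chain this total order exists and can be computed in polynomial time (e.g., by topological sort after checking pairwise dominance). Then I would define the embedding by $g(\Item_j) := j$ for every $j \in [n]$, and for every $\Consumer \in \cset$ set
\[
f(\Consumer) := \min\{ j \in [n] : \Item_j \geq \Consumer \},
\]
with the convention $f(\Consumer) := n+1$ if no such $j$ exists (i.e., $S_\Consumer = \emptyset$ in the original instance). The budget is carried over unchanged: $B_{f(\Consumer)} := B_\Consumer$. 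Both $f$ and $g$ are clearly polynomial-time computable, and tie-breaking by unique identifiers (adding a negligible perturbation, or simply treating $f,g$ as functions on the concrete sets $\cset,\iset$) ensures one-to-oneness.

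Next I would verify that $(f,g)$ is consideration-preserving, i.e., $\Item_j \geq \Consumer$ iff $g(\Item_j) \geq f(\Consumer)$. For the forward direction, if $\Item_j \geq \Consumer$, then $j$ is in the set $\{i : \Item_i \geq \Consumer\}$, so by definition $j \geq f(\Consumer) = g(\Item_j) \geq f(\Consumer)$. For the converse, if $j \geq f(\Consumer)$, then $f(\Consumer) \leq n$ so some item $\Item_{f(\Consumer)}$ satisfies $\Item_{f(\Consumer)} \geq \Consumer$; combining with the chain property $\Item_j \geq \Item_{f(\Consumer)}$ (since $j \geq f(\Consumer)$) and transitivity of the coordinate-wise order yields $\Item_j \geq \Consumer$. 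The case $f(\Consumer) = n+1$ is handled automatically since then no $j \in [n]$ satisfies $j \geq f(\Consumer)$, matching the fact that $\Consumer$ considers no item.

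There is no real obstacle here; the only subtle point is the ``suffix'' observation, which is exactly the chain property: once $\Item_j \geq \Consumer$, every $\Item_{j'}$ with $j' \geq j$ also dominates $\Consumer$ coordinate-wise. By Observation~\ref{observation: consideration preserving embedding}, the existence of this polynomial-time consideration-preserving embedding $(f,g)$ into $\reals$ establishes that $(\cset,\iset)$ is equivalent to the $1$-{\sf UUDP-MIN} instance $(f(\cset), g(\iset))$, completing the proof.
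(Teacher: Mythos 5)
Your proposal is correct and follows essentially the same route as the paper: order the chain, map $\Item_j \mapsto j$, and map each consumer to the index of the first item dominating her, so that consideration sets become suffixes. You in fact spell out the verification (both directions via transitivity, the empty-consideration-set case, and the injectivity fix) in more detail than the paper does.
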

\begin{proof}
Order items in $\cI$ by $\Item_1\leq \Item_2\leq \ldots$. Now map each item into a one-dimensional point: $g(\Item_i)=(i)$. Moreover, map each consumer according to $f(\Consumer)=g(\Item_i)$, where $i$ is the minimum number such that $\Item_i\geq \Consumer$. Observe that $(f, g)$ is a consideration-preserving embedding since $S_\Consumer=\{\Item_i, \Item_{i+1}, \ldots\}$ while $S_{f(\Consumer)}=\{g(\Item_i), g(\Item_{i+1}), \ldots\}$ for any $\Consumer\in \cset$. (Note that this embedding might create redundancy since it is possible that $f(\Consumer)=f(\Consumer')$ for some $\Consumer\neq \Consumer'$. This can be fixed easily by slightly perturbing the points.\danupon{I'm not sure if this is necessary.})
\end{proof}

In Step 4, we also claimed the dimension reduction of sub-instances $(\cset_{\Item^*}, A_i^j)$, and we now prove the claim formally. Recall that the item $\Item^*\in A_i^j$ has the property that $\Item^*\geq \Consumer$ for all $\Consumer \in \cC_{\Item^*}$ and $\Item^*[j]\leq \Item[j]$ for all $\Item\in A_i^j$.

\begin{lemma}\label{lem:dim reduction}
The instance
$(\cC_{\Item^*}, A_i^j)$ is equivalent to a $(d-1)$-{\sf UUDP-MIN} instance. Moreover, the corresponding consideration-preserving embedding $(f, g)$ can be computed in polynomial time.
\end{lemma}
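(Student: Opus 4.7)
The plan is to exhibit the consideration-preserving embedding by simply projecting out the $j$-th coordinate. Concretely, I would define $f : \mathbb{R}^d \to \mathbb{R}^{d-1}$ and $g : \mathbb{R}^d \to \mathbb{R}^{d-1}$ by
$$f(\Consumer) = (\Consumer[1], \ldots, \Consumer[j-1], \Consumer[j+1], \ldots, \Consumer[d]), \qquad g(\Item) = (\Item[1], \ldots, \Item[j-1], \Item[j+1], \ldots, \Item[d]),$$
i.e., both functions just delete the $j$-th coordinate. Clearly $f$ and $g$ are computable in polynomial (indeed linear) time, and the resulting instance $(f(\cC_{\Item^*}), g(A_i^j))$, together with the unchanged budgets $B_{f(\Consumer)} := B_\Consumer$, is by definition a $(d-1)$-{\sf UUDP-MIN} instance.

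The only thing to verify is that $(f,g)$ is consideration-preserving on $(\cC_{\Item^*}, A_i^j)$, namely that for every $\Consumer \in \cC_{\Item^*}$ and every $\Item \in A_i^j$, we have $\Item \geq \Consumer$ if and only if $g(\Item) \geq f(\Consumer)$. The forward direction is trivial: if $\Item[k] \geq \Consumer[k]$ for all $k \in [d]$, then in particular this holds for all $k \neq j$, which is exactly the condition $g(\Item) \geq f(\Consumer)$. For the reverse direction, suppose $g(\Item) \geq f(\Consumer)$, so that $\Item[k] \geq \Consumer[k]$ for every $k \neq j$. It remains to handle the $j$-th coordinate, and here we use precisely the two properties of $\Item^*$ highlighted just before the statement: since $\Consumer \in \cC_{\Item^*}$ we have $\Consumer \leq \Item^*$, so $\Consumer[j] \leq \Item^*[j]$; since $\Item \in A_i^j$ we have $\Item[j] \geq \Item^*[j]$. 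Chaining these yields $\Consumer[j] \leq \Item^*[j] \leq \Item[j]$, giving $\Item \geq \Consumer$.

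Thus $(f,g)$ is a consideration-preserving embedding of $(\cC_{\Item^*}, A_i^j)$ into a $(d-1)$-dimensional instance, which completes the proof. I do not anticipate any real obstacle here: the whole point of Step 4 was to isolate a subproblem where one coordinate becomes uninformative, and the lemma is simply the formal statement that coordinate projection does the job. The only mild subtlety is bookkeeping around equal or coincident points after projection (several consumers or items may collapse to the same point in $\mathbb{R}^{d-1}$), but this causes no issue because the definition of $d'$-{\sf UUDP-MIN} is stated for multisets of consumers and items and our budgets are carried along unchanged; if one insists on a strictly one-to-one embedding (as required by the definition preceding Observation~\ref{observation: consideration preserving embedding}), one can break ties by an arbitrarily small symbolic perturbation as in the proof of Lemma~\ref{lem:chain to one dim}, which does not affect any consideration relation.
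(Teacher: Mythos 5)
Your proof is correct and is essentially identical to the paper's: both project out the $j$-th coordinate and recover it in the reverse direction via $\Item[j]\geq \Item^*[j]\geq \Consumer[j]$. Your extra remark on coincident points after projection is a harmless bookkeeping point the paper does not bother with.
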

\begin{proof}
Consider ``ignoring'' the $j$-th coordinate as follows. For any $\Consumer\in \cC_{\Item^*}$ and $\Item\in A_i^j$, let $f(\Consumer)=(\Consumer[1], \Consumer[2], \ldots, \Consumer[j-1], \Consumer[j+1], \ldots, \Consumer[d])$
and
$g(\Item)=(\Item[1], \Item[2], \ldots, \Item[j-1], \Item[j+1], \ldots, \Item[d]).$
Observe that for any $\Consumer\in \cC_{\Item^*}$ and $\Item\in A_i^j$,  $\Item\geq \Consumer$ trivially implies that $g(\Item)\geq f(\Consumer)$. Conversely, if $g(\Item)\geq f(\Consumer)$ then $\Item\geq \Consumer$ since $\Item[j]\geq \Item^*[j]\geq \Consumer[j]$. Thus, $(f, g)$ is a consideration-preserving embedding.
\end{proof}

\section{Hardness}\label{sec:hardness}

We provide hardness results in both  scenarios when the number of attributes $d$ is small and when $d$ is large. We sketch our results here. More details can be found in Appendix~\ref{sec:omitted_hardness}.

\paragraph{Few attributes} First we discuss the \NP-hardness of $3$-{\sf UUDP-MIN} and \APX-hardness of $4$-{\sf UUDP-MIN}. These hardness results hold even when the consumer budgets are either 1 or 2. We perform a reduction from Vertex Cover~\cite{DBLP:journals/tcs/GareyJS76,DBLP:conf/ciac/AlimontiK97}, essentially using the same ideas as in~\cite{GuruswamiHKKKM05}, except for the fact that we use Schnyder's result~\cite{Schnyder89,DBLP:conf/soda/Schnyder90} to ``embed'' the instance into posets of low order dimensions.



First, let us recall the reduction in \cite{GuruswamiHKKKM05}. We start
from a graph $G=(V,E)$, which is an input instance of Vertex Cover. We create two types of consumers: (i) poor
consumer ${\bf C}_e$ for each edge $e$ with budget $1$ and (ii) rich
consumer ${\bf C}_v$ for each vertex $v$ with budget $2$. The items
are $\iset = \set{{\bf I}_v: v\in V}$. Each poor consumer ${\bf C}_e$ has  a
consideration set containing two items ${\bf I}_u$ and ${\bf I}_v$
where $e=(u,v)$ and each rich consumer ${\bf C}_v$ considers only
one item ${\bf I}_v$. Using the analysis essentially the same as
\cite{GuruswamiHKKKM05}, one can show that the problem is {\sf NP}-hard
if we start from Vertex Cover on planar graphs and {\sf APX}-hard if
we start from Vertex Cover on cubic graphs.

Therefore, it only remains to map consumers and items to points in $\reals_{\geq 0}^d$ (where $d=3,4$) such that for each consumer
${\bf C}$, the set of items that pass her criteria (i.e., $\{\Item\in \iset \mid\Item[i]\geq \Consumer[i]~~\mbox{for all $1\leq i\leq d$}\}$) is exactly her consideration set.
The main idea is to first embed the problem into an {\em adjacency poset} of the input graph. Then, we  invoke Schnyder's theorem~\cite{Schnyder89,DBLP:conf/soda/Schnyder90} to again embed this poset into a Euclidean space.

An adjacency poset of a graph can be constructed as follows. First we construct a $2$-layer poset with minimal elements in the first layer and maximal elements in the second layer. For each edge $e \in E$, we have a minimal element in the poset corresponding to $e$ (for convenience, we also denote the poset element by $e$). For each vertex $v \in V$, we have a maximal poset element corresponding to $v$. There is a relation $e\preceq v$ if and only if vertex $v$ is an endpoint of $e$. \fullonly{This is called an {\em adjacency poset} of the graph.}


The last task is to ``embed'' poset elements into points in the Euclidean space in such a way that, for any poset elements $e_1$ and $e_2$, $e_1\preceq e_2$ if and only if $q_{e_1}[i]\geq q_{e_2}[i]$ for all $i$ where $q_{e_1}$ and $q_{e_2}$ are points that $e_1$ and $e_2$ are mapped to, respectively. If we can do this, we would be done, simply by defining the coordinates of each consumer $\Consumer_e$ to be $q_{e}$, and the coordinates of each consumer $\Consumer_v$ to be $q_v$. Similarly, we define the coordinates of each item $\Item_v$ as $q_v$. In order to obtain such an embedding, we use part of Schnyder's theorem~\cite{Schnyder89} which states that any planar graph has an adjacency poset of dimension three, and any $4$-colorable graph (including cubic graphs) has an adjacency poset of dimension four. Moreover, embedding these graphs into Euclidean spaces can be done in polynomial time~\cite{DBLP:conf/soda/Schnyder90}.

%
%

Finally we note that $2$-{\sf SMP} is strongly $\NP$-hard and $4$-{\sf SMP} is $\APX$-hard. The proof follows from the fact that these problems generalize Highway pricing and graph vertex pricing on bipartite graphs, respectively, and can be found in \sodaonly{the full version}\fullonly{Appendix~\ref{sec:omitted_hardness}}.\danupon{TO DO: Write the proof for $4$-{\sf SMP}}


%

\paragraph{Many attributes}
We establish a connection between the {\sf UUDP-MIN} with bounded-size consideration sets and our problem. This connection immediately implies hardness results for $d$-{\sf UUDP-MIN} when $d$ is at least poly-logarithmic in $n$. Our main result in this section is the following:

\begin{theorem}(Informal)
\label{theorem: higher dimension}
Let $A =(\cset, \iset, \{S_\Consumer\}_{ \Consumer \in \cset} )$ be an instance of {\sf UUDP-MIN} where $B=\max_{\Consumer \in \cset} \size{S_\Consumer}$. We can (with high probability of success) create an instance $A'=(\cset', \iset')$ of $d$-{\sf UUDP-MIN}, where $d= O(B^2 \log n)$, that is ``equivalent'' to $A$.
\end{theorem}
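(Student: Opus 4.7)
}
The plan is to construct the points for $\cset'$ and $\iset'$ by a randomized embedding that preserves each consideration set $S_\Consumer$ under the conjunctive screening rule. Fix a target dimension $d$ to be chosen later. For each coordinate $j\in[d]$ I would draw an independent uniformly random bijection $\pi_j:\iset\to\{1,2,\dots,m\}$, and then define
\begin{align*}
\Item'[j] &:= \pi_j(\Item) \quad\text{for every } \Item\in\iset,\\
\Consumer'[j] &:= \min_{\Item\in S_\Consumer}\pi_j(\Item) \quad\text{for every } \Consumer\in\cset,
\end{align*}
and set the budget of $\Consumer'$ equal to $B_\Consumer$. This is the natural ``min-over-the-consideration-set'' embedding: by construction every $\Item\in S_\Consumer$ already satisfies $\Item'[j]\ge \Consumer'[j]$ in every coordinate $j$, so $S_\Consumer\subseteq S_{\Consumer'}$ holds deterministically. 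The nontrivial direction is to show that, with high probability and simultaneously over every pair $(\Consumer,\Item)$ with $\Item\notin S_\Consumer$, there exists some coordinate $j$ for which $\Item'[j]<\Consumer'[j]$; once this holds, $S_{\Consumer'}=S_\Consumer$ for all $\Consumer$ and $A'$ is equivalent to $A$ (consideration sets and budgets match, so the min-buying revenues coincide).

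For the probability analysis, fix a bad pair $(\Consumer,\Item)$ with $\Item\notin S_\Consumer$. In a single coordinate $j$ the event $\Item'[j]<\Consumer'[j]$ is exactly the event that $\pi_j(\Item)$ is strictly smaller than $\pi_j(\Item'')$ for every $\Item''\in S_\Consumer$, i.e.\ that $\Item$ receives the smallest rank among the $|S_\Consumer|+1$ elements of $S_\Consumer\cup\{\Item\}$. Since $\pi_j$ is a uniformly random bijection, this event has probability $\tfrac{1}{|S_\Consumer|+1}\ge \tfrac{1}{B+1}$, and the events across different coordinates are independent. Hence the probability that no coordinate ``kills'' this pair is at most $\bigl(1-\tfrac{1}{B+1}\bigr)^d\le e^{-d/(B+1)}$. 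Taking a union bound over the at most $nm\le n^2$ bad pairs, the failure probability is at most $n^2\,e^{-d/(B+1)}$, which is $o(1)$ as soon as $d\ge c\,(B+1)\log n$ for a sufficiently large constant $c$; this comfortably fits inside the claimed $d=O(B^2\log n)$. Repetition/amplification boosts the success probability to any desired polynomial; alternatively, since the failure of any individual coordinate can be checked in polynomial time, one can rerun the randomized construction until success (so the reduction is Las-Vegas).

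Two remarks on what the plan buys us. First, the embedding is consideration-preserving in exactly the sense used throughout Section~\ref{sec:d udp min}, so the optimal revenue and every feasible price function transfer literally between $A$ and $A'$, giving the required ``equivalence''. Second, the same argument goes through verbatim for $d$-{\sf SMP}, because that problem too depends on the instance only through the family $\{S_\Consumer\}_{\Consumer\in\cset}$ and the budgets. The main obstacle I anticipate is purely cosmetic: making sure the Las-Vegas construction is polynomial-time (which follows from the $\Omega(1)$ success probability after the union bound), and handling ties when $m$ or the sets $S_\Consumer$ are small---both of which are dealt with by taking $\pi_j$ to be a bijection (so all values are distinct) and by writing ``$\min$'' rather than ``$\min-1$'' in the definition of $\Consumer'[j]$, which is what makes the inclusion $S_\Consumer\subseteq S_{\Consumer'}$ automatic.
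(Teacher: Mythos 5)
Your proposal is correct and is essentially the paper's own argument: one independent uniformly random permutation of the items per coordinate, consumers mapped to the coordinate-wise minimum over their consideration set, a per-coordinate ``kill'' probability of $1/(|S_\Consumer|+1)\ge 1/(B+1)$ for each non-considered item, and a union bound. The only (harmless) difference is that you union-bound over the actual consumer--item pairs while the paper union-bounds over all $n^{B}$ potential consideration sets, which is why your calculation already succeeds at $d=O(B\log n)$, comfortably within the stated $d=O(B^2\log n)$.
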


In other words, the above theorem shows that any {\sf UUDP-MIN} instance with consideration sets of size bounded by $B$, can be realized by a $d$-{\sf UUDP-MIN} instance for $d= O(B^2 \log n)$.
\sodaonly{
Combining this with the result in \cite{ChalermsookPricing}, we have a hardness of $\Omega(d^{1/4-\epsilon)}$ for any $\epsilon>0$. 
}
\fullonly{
This implies that $d$-{\sf UUDP-MIN} is at least as hard
as the original problem when consideration sets have size at most $B$.
When
$B$ is at least logarithmic in the number of items, our
reduction yields the following corollary, assuming the hardness of the balanced bipartite independent set problem in constant degree graphs or refuting random 3{\sf CNF} formulas \cite{BriestThesis}. \danupon{I shortened the last sentence.}
\begin{corollary}
There is a constant $\epsilon$ such that for every $d \geq \log^2
n$, it is hard to approximate $d$-{\sf UUDP-MIN} to
within a factor of $d^{\epsilon}$.
\end{corollary}
}

We remark that our reduction here in fact works independently of the decision model, so this result works for {\sf SMP} and {\sf UDP-Util} as well.

\section{Open Problems}\label{sec:conclusion}



Several interesting problems are open. The most important problem is whether we can obtain better approximation factors for $d$-{\sf UUDP-MIN} and $d$-{\sf SMP}.
%
%
We tend to believe that there is an $f(d)$-approximation algorithm for $d$-{\sf UUDP-MIN} and $d$-{\sf SMP} where $f(d)$ is a function that depends on $d$ only. However,
it seems to be a very challenging task to obtain approximation ratio like $\log^{O(d)} n$ or $O_d(\log^{1-\epsilon(d)} m)$, for some constant $\epsilon(d)>0$ depending on $d$.

\fullonly{
\begin{openproblem}
\label{conjecture: main one}
For any integer $d >0$, are there $\log^{O(d)} n$-approximation algorithms for $d$-{\sf UUDP-MIN} and $d$-{\sf SMP}?
\end{openproblem}
\begin{openproblem}
\label{conjecture: main two}
For any integer $d >0$, are there $O_d(\log^{1-\epsilon_d} m)$-approximation algorithms for $d$-{\sf UUDP-MIN} and $d$-{\sf SMP}?
\end{openproblem}
}

One promising direction in attacking the above problems is to improve Theorem~\ref{thm: dimension reduction for UDP}, e.g., getting $O_d(\rho\cdot\mathrm{polylog}(n))$ for $d$-{\sf UUDP-MIN} using a $\rho$-approximation algorithm of $(d-1)$-{\sf UUDP-MIN} as a blackbox.
\fullonly{
\begin{openproblem}
For any constant $d >0$, given a $\rho$-approximation algorithm for $(d-1)$-{\sf UUDP-MIN} (and $(d-1)$-{\sf SMP}), is it possible to get an $O_d(\rho\cdot\polylog n)$ approximation algorithm for $d$-{\sf UUDP-MIN} (and $d$-{\sf SMP})?
\end{openproblem}
}
A positive resolution to this problem would imply $(\log^{O(d)} n)$-approximation algorithm for $d$-{\sf UUDP-MIN}. We believe that, even resolving this problem would require some new insights on geometric and poset structures.




There are two special cases that can be thought of as barriers in dealing with standard versions of {\sf SMP} and {\sf UUDP-MIN}, and we believe that these two special cases serve as good starting points in attacking our problems. The first problem is the geometric version of the Maximum Expanding Subsequence ({\sc Mes}) problem which is the key problem to show the hardness of {\sf UUDP-MIN}~\cite{BriestK11}. The second problem is the Unique Coverage problem \cite{DemaineFHS08} when the sets have constant VC-dimension. Another interesting problem is to obtain {\sf PTAS}s for $2$-{\sf UUDP-MIN} and $2$-{\sf SMP} (e.g., by extending the techniques in \cite{GrandoniR10}). 

\fullonly{

There are two special cases that can be thought of as barriers in dealing with standard versions of {\sf SMP} and {\sf UUDP-MIN}, and we believe that these two special cases serve as good starting points in attacking our problems. First, in order to get an $O_d(\polylog n)$-approximation algorithm for $d$-{\sf UUDP-MIN}, we need to deal with the Maximum Expanding Subsequence ({\sc Mes}) problem which is the key problem to show the hardness of {\sf UUDP-MIN}~\cite{Briest08}.

\fullonly{
\begin{definition}[Maximum Expanding Sub-sequence ({\sc Mes})]
We are given a set of $n$ ground elements $U$ and a set system $S_1,\ldots, S_m$ where $S_i \subseteq U$ for all $i$. We say that $S_{\phi(1)},\dots, S_{\phi(\ell)}$ is {\em expanding sequence} if for all $j$, we have $S_{\phi(j)} \not\subseteq \bigcup_{j' < j} S_{\phi(j')}$. The objective is to find an expanding sequence of maximum length.
\end{definition}

Briest \cite{Briest08} introduces this problem as an intermediate problem to prove hardness of approximation results for {\sf UUDP-MIN} and {\sf SMP}. Roughly speaking, Briest used $n^{\epsilon}$-hardness for this problem to show $n^{\delta}$-hardness of approximating {\sf UUDP-MIN}. His results showed a very strong evidence that {\sc Mes} is very closely related to {\sf UUDP-MIN} and {\sf SMP}. Now we ask the following question.

\begin{openproblem}
Suppose that the underlying set system of {\sc Mes} is defined by our framework in $\R^d$. That is, ground elements are points in $\R^d$, and each set is defined by an unbounded rectangle in $\R^d$. Can we get, says, an $O(\poly \log n)$ approximation algorithm for {\sc Mes}?
\end{openproblem}
}
\sodaonly{This motivates us to propose the following open problem, called $d$-{\sc Mes}: Suppose that the underlying set system of {\sc Mes} is defined by our framework in $\R^d$. That is, ground elements are points in $\R^d$, and each set is defined by an unbounded rectangle in $\R^d$. Can we get, says, $O(\poly \log n)$ approximation algorithm for {\sc Mes}?} It can be observed that our algorithm implies an $\tilde O_d(n^{1-\epsilon(d)})$-approximation for $d$-{\sc Mes}. Solving this problem can be considered the first step in getting $\log^{O(d)} n$ approximation.

Getting $O_d(\log^{1-\epsilon(d)} m)$ approximation for $d$-{\sf SMP} also has the following barrier.  Previous results suggest that {\sf SMP} has inherited its intractibility from the problem called {\sf Unique Coverage} \cite{DemaineFHS08}. Roughly speaking, {\sf Unique Coverage} is equivalent to {\sf SMP} in which all consumers have budget one. An $O(\log n)$ approximation is known for this problem \cite{DemaineFHS08}. The question is whether we can improve this ratio in the case of $d$-{\sf SMP}.
\fullonly{
\begin{openproblem}
Is there a sub-logarithmic approximation algorithm for $d$-{\sf SMP} with unit budgets.
\end{openproblem}
}
Solving this problem is a first step in breaking another barrier of $O(\log m)$ for $d$-{\sf SMP}. Moreover, it would give a convincing evidence that our model may not inherit intractibility from {\sf Unique Coverage} problem. In fact, this problem itself can be seen as {\sf Unique Coverage} when the sets have {\em constant VC-dimension} and may be of independent interest.


Another interesting direction is to further investigate $2$-{\sf UUDP-MIN} and $2$-{\sf SMP}. We know that {\sf PTAS}s are likely to exist but do not even have an $O(\log n)$-approximation algorithm running in polynomial time. Getting {\sf PTAS} would be very interesting, and we believe that it will require novel ideas and structural properties. Even weaker approximation guarantees, such as $n^{\epsilon}$-approximation in time $n^{O(1/\epsilon)}$, would still be interesting.
\fullonly{
\begin{openproblem}
Can we construct $O(\log n)$ approximation algorithms for $2$-{\sf UUDP-MIN} or $2$-{\sf SMP}?
\end{openproblem}
}
}

\newpage

  \let\oldthebibliography=\thebibliography
  \let\endoldthebibliography=\endthebibliography
  \renewenvironment{thebibliography}[1]{%
    \begin{oldthebibliography}{#1}%
      \setlength{\parskip}{0ex}%
      \setlength{\itemsep}{0ex}%
  }%
  {%
    \end{oldthebibliography}%
  }
{ 
\bibliographystyle{plain}
\bibliography{multicriteria}

\begin{thebibliography}{10}

\bibitem{AggarwalFMZ04}
Gagan Aggarwal, Tom{\'a}s Feder, Rajeev Motwani, and An~Zhu.
\newblock {Algorithms for Multi-product Pricing}.
\newblock In {\em ICALP}, pages 72--83, 2004.

\bibitem{AggarwalH06}
Gagan Aggarwal and Jason~D. Hartline.
\newblock {Knapsack auctions}.
\newblock In {\em SODA}, pages 1083--1092, 2006.

\bibitem{AjwaniEGR12}
Deepak Ajwani, Khaled~M. Elbassioni, Sathish Govindarajan, and Saurabh Ray.
\newblock Conflict-free coloring for rectangle ranges using o(n .382) colors.
\newblock {\em Discrete {\&} Computational Geometry}, 48(1):39--52, 2012.
\newblock Also in SPAA'07.

\bibitem{DBLP:conf/ciac/AlimontiK97}
Paola Alimonti and Viggo Kann.
\newblock Hardness of approximating problems on cubic graphs.
\newblock In {\em CIAC}, pages 288--298, 1997.

\bibitem{AronovES10}
Boris Aronov, Esther Ezra, and Micha Sharir.
\newblock Small-size $\epsilon$-nets for axis-parallel rectangles and boxes.
\newblock {\em SIAM J. Comput.}, 39(7):3248--3282, 2010.
\newblock Also in STOC'09.

\bibitem{BalcanB07}
Maria-Florina Balcan and Avrim Blum.
\newblock {Approximation Algorithms and Online Mechanisms for Item Pricing}.
\newblock {\em Theory of Computing}, 3(1):179--195, 2007.
\newblock Also in EC'06.

\bibitem{BalcanB05}
Maria-Florina Balcan, Avrim Blum, Jason~D. Hartline, and Yishay Mansour.
\newblock {Mechanism Design via Machine Learning}.
\newblock In {\em FOCS}, pages 605--614, 2005.

\bibitem{BalcanBM08}
Maria-Florina Balcan, Avrim Blum, and Yishay Mansour.
\newblock {Item pricing for revenue maximization}.
\newblock In {\em ACM Conference on Electronic Commerce}, pages 50--59, 2008.

\bibitem{BansalCES06}
Nikhil Bansal, Amit Chakrabarti, Amir Epstein, and Baruch Schieber.
\newblock {A quasi-PTAS for unsplittable flow on line graphs}.
\newblock In {\em STOC}, pages 721--729, 2006.

\bibitem{Bettman79}
James~R. Bettman.
\newblock {\em An Information Processing Theory of Consumer Choice}.
\newblock Addison-Wesley, Reading, MA, 1979.

\bibitem{BettmanJP90}
James~R. Bettman, Eric~J. Johnson, and John~W. Payne.
\newblock A componential analysis of cognitive effort in choice.
\newblock {\em Organizational Behavior and Human Decision Processes},
  45(1):111--139, 1990.

\bibitem{BettmanPark80}
James~R. Bettman and C~Whan Park.
\newblock Effects of prior knowledge and experience and phase of the choice
  process on consumer decision processes: A protocol analysis.
\newblock {\em Journal of Consumer Research}, 7(3):234--48, 1980.

\bibitem{BlumH05}
Avrim Blum and Jason~D. Hartline.
\newblock {Near-optimal online auctions}.
\newblock In {\em SODA}, pages 1156--1163, 2005.

\bibitem{BriestK11}
Patrick Briest and Piotr Krysta.
\newblock Buying cheap is expensive: Approximability of combinatorial pricing
  problems.
\newblock {\em SIAM J. Comput.}, 40(6):1554--1586, 2011.
\newblock Also in SODA'07 and ICALP'08.

\bibitem{ChalermsookPricing}
Parinya Chalermsook, Julia Chuzhoy, Sampath Kannan, and Sanjeev Khanna.
\newblock Improved hardness results for profit maximization pricing problems
  with unlimited supply.
\newblock {\em APPROX}, 2012 (to appear).

\bibitem{ChalermsookKLN11}
Parinya Chalermsook, Shiva Kintali, Richard Lipton, and Danupon Nanongkai.
\newblock Graph pricing problem on bounded tree-width, bounded genus and
  k-partite graphs.
\newblock Manuscript, 2011.

\bibitem{Chan12}
Timothy~M. Chan.
\newblock Conflict-free coloring of points with respect to rectangles and
  approximation algorithms for discrete independent set.
\newblock In {\em Proceedings of the 2012 symposuim on Computational Geometry},
  SoCG '12, pages 293--302, New York, NY, USA, 2012. ACM.

\bibitem{Chazelle-book}
Bernard Chazelle.
\newblock {\em The discrepancy method: randomness and complexity}.
\newblock Cambridge University Press, New York, NY, USA, 2000.

\bibitem{ChristodoulouEF10}
George Christodoulou, Khaled~M. Elbassioni, and Mahmoud Fouz.
\newblock {Truthful Mechanisms for Exhibitions}.
\newblock In {\em WINE}, pages 170--181, 2010.

\bibitem{DemaineFHS08}
Erik~D. Demaine, Uriel Feige, MohammadTaghi Hajiaghayi, and Mohammad~R.
  Salavatipour.
\newblock {Combination Can Be Hard: Approximability of the Unique Coverage
  Problem}.
\newblock {\em SIAM J. Comput.}, 38(4):1464--1483, 2008.
\newblock Also in SODA'06.

\bibitem{Dilworth}
Robert~P. Dilworth.
\newblock {A Decomposition Theorem for Partially Ordered Sets}.
\newblock {\em Annals of Mathematics}, 51 (1):161--166, 1950.

\bibitem{ElbassioniRRS09}
Khaled~M. Elbassioni, Rajiv Raman, Saurabh Ray, and Ren{\'e} Sitters.
\newblock On the approximability of the maximum feasible subsystem problem with
  0/1-coefficients.
\newblock In {\em SODA}, pages 1210--1219, 2009.

\bibitem{ElbassioniSZ07}
Khaled~M. Elbassioni, Ren{\'e} Sitters, and Yan Zhang.
\newblock {A Quasi-{PTAS} for Profit-Maximizing Pricing on Line Graphs}.
\newblock In {\em ESA}, pages 451--462, 2007.

\bibitem{ErlebachL08}
Thomas Erlebach and Erik~Jan van Leeuwen.
\newblock Approximating geometric coverage problems.
\newblock In Shang-Hua Teng, editor, {\em SODA}, pages 1267--1276. SIAM, 2008.

\bibitem{EvenGLNV98}
Guy Even, Oded Goldreich, Michael Luby, Noam Nisan, and Boban Velickovic.
\newblock {Efficient approximation of product distributions}.
\newblock {\em Random Struct. Algorithms}, 13(1):1--16, 1998.
\newblock Also in {STOC}'92.

\bibitem{Fulkerson-Dilworth}
D.R. Fulkerson.
\newblock {Note on Dilworth's decomposition theorem for partially ordered
  sets}.
\newblock {\em Proceedings of the American Mathematical Society}, 7
  (4):701--702, 1956.

\bibitem{DBLP:conf/icalp/GamzuS10}
Iftah Gamzu and Danny Segev.
\newblock {A Sublogarithmic Approximation for Highway and Tollbooth Pricing}.
\newblock In {\em ICALP (1)}, pages 582--593, 2010.

\bibitem{DBLP:journals/tcs/GareyJS76}
M.~R. Garey, David~S. Johnson, and Larry~J. Stockmeyer.
\newblock Some simplified {NP}-complete graph problems.
\newblock {\em Theor. Comput. Sci.}, 1(3):237--267, 1976.

\bibitem{Gensch87}
Dennis~H. Gensch.
\newblock {A two-stage disaggregate attribute choice model}.
\newblock {\em Marketing Science}, 6(3):223--239, 1987.

\bibitem{GilbrideAllenby}
Timothy~J. Gilbride and Greg~M. Allenby.
\newblock {A Choice Model with Conjunctive, Disjunctive, and Compensatory
  Screening Rules}.
\newblock {\em Marketing Science}, 23(3):391--406, 2004.

\bibitem{GrandoniR10}
Fabrizio Grandoni and Thomas Rothvo{\ss}.
\newblock {Pricing on Paths: A {\sf PTAS} for the Highway Problem}.
\newblock In {\em SODA}, pages 675--684, 2011.

\bibitem{GuruswamiHKKKM05}
Venkatesan Guruswami, Jason~D. Hartline, Anna~R. Karlin, David Kempe, Claire
  Kenyon, and Frank McSherry.
\newblock {On profit-maximizing envy-free pricing}.
\newblock In {\em SODA}, pages 1164--1173, 2005.

\bibitem{HaublTrifts00}
Gerald Ha\"ubl and Valerie Trifts.
\newblock {Consumer Decision Making in Online Shopping Environments: The
  Effects of Interactive Decision Aids}.
\newblock {\em Marketing Science}, 19(1):4--21, 2000.

\bibitem{HauserTEBS10}
John~R. Hauser, Olivier Toubia, Theodoros Evgeniou, Rene Befurt, and Daria
  Silinskaia.
\newblock {Disjunctions of Conjunctions, Cognitive Simplicity and Consideration
  Sets}.
\newblock {\em J. Marketing Res.}, 47(3):485--496, 2010.

\bibitem{HauserW90}
John~R Hauser and Birger Wernerfelt.
\newblock An evaluation cost model of consideration sets.
\newblock {\em Journal of Consumer Research}, 16(4):393--408, 1990.

\bibitem{Epsilon-net}
David Haussler and Emo Welzl.
\newblock {epsilon-Nets and Simplex Range Queries}.
\newblock {\em Discrete {\&} Computational Geometry}, 2:127--151, 1987.

\bibitem{ItoNOOUUU12}
Takehiro Ito, Shin-Ichi Nakano, Yoshio Okamoto, Yota Otachi, Ryuhei Uehara,
  Takeaki Uno, and Yushi Uno.
\newblock A polynomial-time approximation scheme for the geometric unique
  coverage problem on unit squares.
\newblock In Fedor~V. Fomin and Petteri Kaski, editors, {\em SWAT}, volume 7357
  of {\em Lecture Notes in Computer Science}, pages 24--35. Springer, 2012.

\bibitem{JedidiK05}
Kamel Jedidi and Rajeev Kohli.
\newblock {Probabilistic Subset-Conjunctive Models for Heterogeneous
  Consumers}.
\newblock {\em J. Marketing Res.}, 42(4):483--495, 2005.

\bibitem{KhandekarKMS09}
Rohit Khandekar, Tracy Kimbrel, Konstantin Makarychev, and Maxim Sviridenko.
\newblock On hardness of pricing items for single-minded bidders.
\newblock In {\em APPROX-RANDOM}, pages 202--216, 2009.

\bibitem{KortsarzRR11}
Guy Kortsarz, Harald Racke, and Rajiv Raman.
\newblock Gap between unpward and tolbooth instances.
\newblock Manuscript, 2011.

\bibitem{LiuArora2011}
Qing Liu and Neeraj Arora.
\newblock {Efficient Choice Designs for a Consider-Then-Choose Model}.
\newblock {\em Marketing Science}, 30(2):321--338, 2011.

\bibitem{Payne82}
John~W. Payne.
\newblock {Contingent decision behavior}.
\newblock {\em Psychological Bulletin}, 92(2):382--402, 1982.

\bibitem{PayneBJ88}
John~W. Payne, James~R. Bettman, and Eric~J. Johnson.
\newblock {Adaptive strategy selection in decision making.}
\newblock {\em Journal of Experimental Psychology: Learning, Memory, and
  Cognition}, 14(3):534--552, 1988.

\bibitem{Rusmevichientong03}
Paat Rusmevichientong.
\newblock A non-parametric approach to multi-product pricing: Theory and
  application.
\newblock {\em Ph. D. thesis, Stanford University}, 2003.

\bibitem{RusmevichientongRG06}
Paat Rusmevichientong, Benjamin~Van Roy, and Peter~W. Glynn.
\newblock {A Nonparametric Approach to Multiproduct Pricing}.
\newblock {\em Operations Research}, 54(1):82--98, 2006.

\bibitem{RusmevichientongRG06-2}
Paat Rusmevichientong, Joyce~A Salisbury, Lynn~T Truss, Benjamin~Van Roy, and
  Peter~W. Glynn.
\newblock {Opportunities and challenges in using online preference data for
  vehicle pricing: A case study at General Motors}.
\newblock {\em Journal of Revenue and Pricing Management}, 5:45--61, 2006.

\bibitem{Schnyder89}
Walter Schnyder.
\newblock {Planar graphs and poset dimension}.
\newblock {\em Order}, 5:323--343, 1989.

\bibitem{DBLP:conf/soda/Schnyder90}
Walter Schnyder.
\newblock Embedding planar graphs on the grid.
\newblock In {\em SODA}, pages 138--148, 1990.

\bibitem{Sharir-book}
Micha Sharir and Pankaj~K. Agarwal.
\newblock {\em Davenport-Schinzel Sequences and Their Geometric Applications}.
\newblock Cambridge University Press, New York, NY, USA, 1995.

\bibitem{Shocker91}
Allan~D. Shocker, Moshe Ben-Akiva, Bruno Boccara, and Prakash Nedungadi.
\newblock {Consideration set influences on consumer decision-making and choice:
  Issues, models, and suggestions}.
\newblock {\em Marketing Letters}, 2(3):181--197, 1991.

\bibitem{Shugan80}
Steven~M. Shugan.
\newblock {The cost of thinking}.
\newblock {\em Journal of Consumer Research}, 7:99--111, 1980.

\end{thebibliography}
}

\newpage
\appendix
\section*{Appendix}
\section{Proof Omitted from Section~\ref{sec:d udp min}}

\subsection{Proof of Lemma~\ref{lemma: UDP decomposition}}
\label{sec: proof of UDP decomposition lemma}

Let $p^*$ be the optimal price function for $\pset(\cset, \iset)$. For each $i=1,\ldots, k$, we define $p^*_i: \iset'_i \rightarrow \R$ by
$$p^*_i(\Item) = p^*(\Item)~~\mbox{if $\Item \in \iset'_i$, and $p^*_i(\Item) = \infty$ otherwise.}$$
Let $r_i$ be the total revenue made by $p^*_i$ in $\pset(\cset'_i, \iset'_i)$. We argue below that
\begin{align}\label{eq:one}
\sum_{i=1}^k r_i \geq \opt(\cset, \iset).
\end{align}
Let $\cset^* \subseteq \cset$ be the set of consumers who make a positive payment with respect to $p^*$. For each consumer $\Consumer \in \cset^*$, denote by $\phi(\Consumer) \in \iset$ the item that consumer $\Consumer$ buys with respect to the price $p^*$.
So we can write $\opt(\cset, \iset)$ as
\begin{align}\label{eq:two}
\opt(\cset, \iset) = \sum_{\Consumer \in \cset^*} p^*(\phi(\Consumer)).
\end{align}
For each $i=1,\ldots, k$, let $\cset^*_i \subseteq \cset'_i$ be the set of consumers $\Consumer \in \cset'_i$ such that $\phi(\Consumer) \in \iset'_i$. That is, $\cset^*_i$ is a set of consumers whose item she bought in $\opt(\cset, \iset)$ is in $\iset'_i$.
Notice that
\begin{align}\label{eq:three}
r_i \geq \sum_{\Consumer \in \cset^*_i} p^*(\phi(\Consumer)).
\end{align}
Since $\set{(\cset'_i, \iset'_i)}_{i=1}^k$ is a consideration-preserving decomposition, we have that
\begin{align}\label{eq:four}
\bigcup_{i=1}^k \cset^*_i \supseteq \cset^*,
\end{align}
since for any $\Consumer \in \cset^*$, we must have $\phi(\Consumer)\in \iset_i$ for some $i$. By summing Eq.\eqref{eq:three} over all $i=1,\ldots, k$, we have
\begin{align*}
\sum_{i=1}^k r_i &\geq \sum_{i=1}^k \sum_{\Consumer\in \cset^*_i} p^*(\phi(\Consumer)) &\mbox{(by Eq.\eqref{eq:three})}\\
&\geq \sum_{\Consumer\in \cset^*} p^*(\phi(\Consumer)) &\mbox{(by Eq.\eqref{eq:four})}\\
&=\opt(\cset, \iset)&\mbox{(by Eq.\eqref{eq:two})}
\end{align*}
%
%
This proves Eq.\eqref{eq:one} and thus the first claim.

Now suppose we have a price $p': \iset_i \rightarrow \R$ that collects revenue $r'$ in $\pset(\cset'_i, \iset'_i)$. We define a function $p: \iset \rightarrow \R$ by $p(\Item) = p'(\Item)$ for $\Item \in \iset'_i$ and $p(\Item) = \infty$ otherwise. We can use $p'$ to obtain a revenue of $r'$ from $\cP(\cset, \iset)$. This proves the second claim.

\subsection{Decomposing items into small number of chains and antichains}\label{sec:detail one}

We will use the following theorem, first proved by Dilworth \cite{Dilworth}, and its polynomial computability follows from the equivalence between Dilworth's theorem and K\"onig's theorem~\cite{Fulkerson-Dilworth}.

\begin{theorem}\label{theorem:dilworth}
Let $(S, \leq)$ be a partially ordered set, and $Z$ be the maximum number of elements in any antichain of $S$. Then there is a polynomial-time algorithm that produces a partition of $S$ into $Z$ chains $S_1,\ldots, S_Z$.
\end{theorem}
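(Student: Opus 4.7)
The plan is to prove Dilworth's theorem by reduction to bipartite matching and König's theorem, which simultaneously gives the polynomial-time algorithm. The main idea is to encode ``adjacent elements of a chain'' by edges of a matching in an auxiliary bipartite graph $G$, so that a minimum chain decomposition corresponds to a maximum matching, and then to use König duality to exhibit an antichain whose size equals the number of chains.

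Concretely, I would first construct a bipartite graph $G=(S^-\cup S^+,E)$ where $S^-$ and $S^+$ are two disjoint copies of $S$, and where $(x^-,y^+)\in E$ whenever $x<y$ in $(S,\leq)$. The key observation is a bijection between matchings in $G$ and chain decompositions of $S$: given a matching $M$, orient each edge $(x^-,y^+)\in M$ as ``$y$ is the immediate successor of $x$ in its chain.'' Since each vertex of $G$ is matched at most once, the resulting arcs form a collection of vertex-disjoint directed paths in the Hasse-like DAG of the poset, and each such path is a chain of $S$. A direct count shows that a matching of size $|M|$ partitions $S$ into exactly $n-|M|$ chains, and conversely a chain decomposition into $k$ chains yields a matching of size $n-k$. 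Hence the minimum chain decomposition size equals $n-|M^{\star}|$, where $M^{\star}$ is a maximum matching.

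Next I would invoke König's theorem to convert $M^{\star}$ into a large antichain. Let $C=C^-\cup C^+$ be a minimum vertex cover of $G$ with $|C|=|M^{\star}|$, and define $A=\{x\in S:x^-\notin C^-\text{ and }x^+\notin C^+\}$. If any two elements $x,y\in A$ satisfied $x<y$, then the edge $(x^-,y^+)$ would be uncovered by $C$, a contradiction; hence $A$ is an antichain. A simple union bound gives $|A|\geq n-|C^-|-|C^+|=n-|M^{\star}|$. Combined with the trivial direction (every antichain meets every chain in at most one element, so $Z$ is at most the number of chains in any decomposition), we conclude $Z=n-|M^{\star}|$ and that the explicit chain decomposition extracted from $M^{\star}$ uses exactly $Z$ chains.

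The polynomial-time claim then follows immediately: $G$ has $O(n)$ vertices and $O(n^{2})$ edges, and a maximum bipartite matching can be computed in polynomial time (e.g., by augmenting paths or Hopcroft--Karp), from which the chain decomposition is read off directly. The main obstacle I would want to verify carefully is the matching-to-chains correspondence: one must check that iteratively following matched arcs really produces vertex-disjoint chains covering all of $S$, which uses that each $x^-$ and each $x^+$ has degree at most one in $M$ so the arcs form a union of paths (not cycles, since the underlying relation is a strict partial order and hence acyclic). Everything else is a straightforward unpacking of König's theorem.
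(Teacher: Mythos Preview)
Your proof is correct and is precisely the Fulkerson argument (reduction to bipartite matching and K\"onig duality) that the paper alludes to but does not write out. In the paper, this theorem is not actually proved: it is stated as a known result, with a citation to Dilworth for the combinatorial statement and a remark that ``its polynomial computability follows from the equivalence between Dilworth's theorem and K\"onig's theorem,'' citing Fulkerson. So you have supplied exactly the details the paper chose to outsource to the literature, via the same route.
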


We now use the theorem to prove Lemma~\ref{lemma: chain decomposition}.

\begin{proof}[of Lemma~\ref{lemma: chain decomposition}]
Initially, let $i=1$. In iteration $i$, we check if the size of maximum antichain in $\iset$ is at least $t=n^{1-\epsilon/4}$. If so, we find the maximum antichain $A_i$, update $\iset = \iset \setminus A_i$, and proceed to the next iteration; otherwise, we stop the iterations. Notice that the number of iterations is at most $s = n^{\epsilon/4}$, and when the iteration stops, the size of maximum-size antichain is at most $t\leq n^{1-\epsilon/4}$. We apply the above theorem to compute a decomposition of $\iset$ into $t$ chains, denoted by $B_1,\ldots, B_t$. This concludes the proof of Lemma~\ref{lemma: chain decomposition}.
\end{proof}

\subsection{Proof of Balcan-Blum Theorem for {\sf UUDP-MIN} (cf. Theorem~\ref{theorem:BalcanBlum})}\label{sec:balcanblum for udp min}

%
We first explain a randomized algorithm, and then we discuss how to derandomize it. This part is essentially the same as \cite{BalcanB07,BriestK11}. First, we randomly construct a set $\iset^* \subseteq \iset'$ where each item ${\bf I}$ is independently added to $\iset^*$ with probability $1/k$ (recall that $k=\max_{\Consumer\in \cset} |S_\Consumer|$). Then let $\cset^*$ be a set of consumer $\Consumer$ such that $\size{S_\Consumer\cap \iset^*}=1$ 
(i.e. consumers who care about exactly one item in $\iset^*$). We show that the problem $\cP(\cset^*, \iset^*)$ has expected revenue at least $\Omega(\opt(\cset, \iset)/k)$.

Let $p$ be the optimal price function for $(\cset, \iset)$ and $\phi: \cset \rightarrow \iset\cup \{\perp\}$ be a function that maps each consumer to the item she buys with respect to $p$ (let $\phi(\Consumer)=\perp$ if consumer $\Consumer$ buys nothing and $p(\perp)=0$). Therefore, we have that $\opt(\cset, \iset) = \sum_{\Consumer} p(\phi(\Consumer))$. We denote by $p^*$ the price function $p$ restricted to $\iset^*$. For each $\Consumer$, if $\Consumer \in \cset^*$ and $\phi(\Consumer) \in \iset^*$, the revenue created by $p^*$ in $(\cset^*, \iset^*)$ would be at least $p(\phi(\Consumer))$. Therefore,
$$\expect{}{\opt(\cset^*, \iset^*)} \geq \sum_{\Consumer \in \cset} \Pr[\mbox{$\phi(\Consumer)\in \iset^*$ and $\Consumer\in \cset^*$}] \times p(\phi(\Consumer))\,.$$
Notice that, for any $\Consumer\in\cset$ and $\Item\in S_{\Consumer }$,
\[\Pr[\mbox{$\Item\in \iset^*$ and $\Consumer\in \cset^*$}] \geq \frac{1}{k}\left(1-\frac{1}{k}\right)^{k-1}\geq \frac{1}{k\mathrm{e}},\]
which implies that $\expect{}{\opt(\cset^*, \iset^*)} \geq \frac{1}{k \mathrm{e}}\cdot \opt(\cset, \iset)$.

{\bf Derandomization:} First, note that we can assume that $k=O(\log m+\log n)$. Otherwise, we can use the result of \cite{AggarwalFMZ04,GuruswamiHKKKM05,BalcanBM08} (see \cite[Section 4]{BalcanBM08} for the result in a general setting) to obtain $O(\log m+ \log n)$ approximation algorithm for {\sf UUDP-MIN}, which will also be $O(k)$-approximation.

Now, assuming that $k=O(\log m+\log n)$, we follow the argument of Balcan and Blum~\cite{BalcanB07}. In particular, we observe that we need only $k$-wise independence among the events of the form ``$\Item\in \iset^*$ and $\Consumer\in \cset^*$'', for any $\Item$ and $\Consumer$,  in order to get the above expectation result. In this case, we can use the tools from Even et al \cite{EvenGLNV98} to derandomize the above algorithm while blowing up the running time by a factor of $2^{O(k)}=\poly(m, n)$. For more details, we refer the readers to \cite{BalcanB07}.

%
%
%
%

\subsection{Proof of Lemma \ref{lem:union of d dimensions}}\label{proof:union of d dimensions}
Recall that each $A_i$ is an antichain, i.e., for any distinct $\Item_1, \Item_2\in A_i$, there exists $1\leq d_1, d_2\leq d$ such that $\Item_1[d_1]<\Item_2[d_1]$ and $\Item_1[d_2]>\Item_2[d_2]$. In particular, if $\Item_1=\Item^*$, then we have that for any $\Item\in A_i$, there exists coordinate $j$ such that $\Item[j]\geq \Item^*$. This means that $\Item\in A_i^j$. The lemma follows.

\subsection{Polynomial-Time Algorithm for $1$-{\sf UUDP-MIN}}\label{sec:one udp min}

We provide a polynomial-time algorithm for solving $1$-{\sf UUDP-MIN}. Let $\Item_1,\ldots, \Item_n$ be a sequence of items ordered non-increasingly by their coordinates. We can assume without loss of generality that their coordinates are different (by slightly perturbing their values), and we say that consumer $\Consumer$ is at {\em level $j$} if her coordinate lies between $\Item_{j-1}$ and $\Item_j$. Notice that, for any consumer $\Consumer$ at level $j$, we have $S_{\Consumer} = \set{\Item_1,\ldots, \Item_j}$.

\begin{claim}
Let $p^*$ be an optimal price. Then we can assume that $p^*(\Item_1) \geq p^*(\Item_2) \geq \ldots \geq p^*(\Item_n)$.
\end{claim}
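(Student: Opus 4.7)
The plan is to show that from any optimal price function $p^*$ one can construct a new price function $p'$ that is non-increasing along the sequence $\Item_1,\ldots,\Item_n$ and extracts exactly the same revenue. The key observation, which I will exploit, is that the revenue collected from any level-$j$ consumer depends on $p^*$ only through the quantity
\[
q_j \;:=\; \min_{1 \leq i \leq j} p^*(\Item_i),
\]
because the consideration set of every such consumer is exactly $\{\Item_1,\ldots,\Item_j\}$ and she buys the cheapest item in it (provided she can afford it).

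First I would record that the sequence $q_1,q_2,\ldots,q_n$ is by definition non-increasing, i.e.\ $q_1 \geq q_2 \geq \cdots \geq q_n$. Then I would define the new price function $p'$ by setting $p'(\Item_j) := q_j$ for every $j \in \{1,\ldots,n\}$. By monotonicity of $q$, for every $j$ we have
\[
\min_{1\leq i\leq j} p'(\Item_i) \;=\; \min_{1\leq i\leq j} q_i \;=\; q_j,
\]
so the cheapest price visible to a level-$j$ consumer is identical under $p^*$ and under $p'$. Consequently, for each consumer $\Consumer$ at level $j$ with budget $B_\Consumer$, the condition ``cheapest item in $S_\Consumer$ is affordable'' (i.e.\ $q_j \leq B_\Consumer$) and the payment $q_j$ itself are unchanged when passing from $p^*$ to $p'$. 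Summing over all consumers yields that $p'$ collects exactly the revenue of $p^*$, so $p'$ is itself optimal and by construction satisfies $p'(\Item_1) \geq p'(\Item_2) \geq \cdots \geq p'(\Item_n)$.

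There is no real obstacle here; the argument is a one-line observation once one recognizes that a level-$j$ consumer only ``sees'' the prefix minimum $q_j$, and that the prefix-minimum sequence can be realized directly as a set of prices. The only minor subtlety is to check that affordability is preserved (and not merely the payment amount), which is immediate since $q_j$ is identical under both price functions, ensuring that the set of buying consumers at each level does not change.
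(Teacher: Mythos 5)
Your proof is correct and rests on the same underlying observation as the paper's: a min-buying consumer at level $j$ only ``sees'' the prefix minimum of the prices over $\{\Item_1,\ldots,\Item_j\}$, so prices can be made non-increasing without changing any consumer's payment or affordability. The only difference is presentational — you realize the monotone prices in one shot as the prefix-minimum sequence, whereas the paper argues by locally repairing an inversion $p^*(\Item_i)<p^*(\Item_j)$, $i<j$ (noting no one buys $\Item_j$ there); both are valid.
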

\begin{proof}
Suppose that $p^*(\Item_i) < p^*(\Item_j)$ for some $i <j$. Recall that $\Item_i \geq \Item_j$, so for each consumer $\Consumer$ such that $\Consumer \leq \Item_j$, we know that $\Consumer$ does not buy item $\Item_j$ with respect to this solution. Thus, we can reduce $p^*(\Item_i)$ slightly, while maintaining the same revenue.
\end{proof}

The claim will ensure that consumers at level $j$ only buy item $\Item_j$ but not any other items in $\set{\Item_1,\ldots, \Item_{j-1}}$, and this allows us to solve the problem by dynamic programming. For each $j=1,\ldots, n$, for each price $P \in \R$ we have a table entry $T[j,P]$ that keeps the maximum revenue achievable from consumers at levels $1,\ldots, j$ and items $\set{\Item_1,\ldots, \Item_j}$ where the price of $\Item_j$ is set to $P$. Notice that it is easy to compute the profit from consumers at level $j$ if we know $p(\Item_j) = P$. Denote such value by $\gamma$. Then we have that $T[j,P] = \gamma + \max_{P' \geq P} T[j-1,P']$. Finally, we note that there are at most $|\cset|$ possibilities of prices $P$ because one can assume without loss of generality that, for {\sf UUDP-MIN}, the prices always belong to $\set{B_{\Consumer}}_{\Consumer \in \cset}$.

\section{{\sf QPTAS} for $2$-{\sf UUDP-MIN}}
\label{sec: bicriteria}\label{sec: qptas 2 udp}


%
%
%

We note that we will write $O(\log m)$ instead of $O(\log n+\log m)$ since we assume that $n\leq m$ in this paper. (Otherwise, we already have approximation ratio of $O(\log m)=O(\log n)$.)

\sodaonly{

We explain the main idea first. The intuition can be realized by solving the following simple case: Assume for now that we have $\Theta(n^2)$ items, which form a set $\set{(2i-1,2j-1): 1 \leq i,j \leq n}$. In this case it is
possible to have two different consumers at the same coordinate, i.e. $\Consumer=\Consumer'$, while there is exactly one item at each point $(2i-1,2j-1)$. Assume further that each consumer has budget either $1$
or $2$. We show below how to solve this case in polynomial time.

Note that there is an optimal solution such that each item is
priced either $1$ or $2$: otherwise we could increase the price by small amount to collect more revenue.
Now, for any item point $(2i-1, 2j-1)$ and any price assignment
$p$, define\danupon{(to polish after submission) $r_p(i, j)$ may be confused with $r_\Consumer(p)$.}
\[r_p(i, j):=\min_{\substack{\Item[1]\ge 2i-1, \Item[2]\geq 2j-1 \\ \Item\in\iset}}\{p(\Item)\}
\]
 to be the minimum price among the items
dominating $(2i-1, 2j-1)$. This quantity immediately tells us how much
revenue we will get from consumers at point $(2i-2, 2j-2)$: each consumer
will buy an item at price $r_p(i, j)$ if and only if she has budget
at least $r_p(i, j)$.

By the definition of $r_p$, we know that for any fixed value $j$, $r_p(i,j)$ is non-decreasing in terms of $i$.
In other words, for any pricing $p$ and integer $j$, there exists
a ``threshold'' $\gamma(p, j)$ such that $r_p(i', j)=1$ for all $i'\leq
\gamma(p, j)$ and $r_p(i',j)=2$ for all $i'> \gamma(p, x)$.
Additionally, for any $j$, $\gamma(p, j)\geq \gamma(p, j+1)$.
Using these observations, we are ready to define the dynamic programming table. The table entry $T[i,j]$ is defined to be the maximum revenue we can get among the price assignment $p$ such that $r_p(i', j)=1$ for all
$i'\leq i$ and $r_p(i', j)=2$ for all $i'>i$. The table $T$ can be
computed as follows.
\begin{align}
T[i, j]&=\max_{i'\leq i} \{T[i',j+1]+m_1(i',j) + 2m_2(i',j)\}
\label{eq:udp-min-table}
\end{align}
where $m_1(i',j)$ is the number of consumers of the form
$(2i''-2,2j-2)$ for $i''\leq i'$ with budget $1$ and $m_2(i',j)$ is the number of
consumers of the form $(2i''-2,2j-2)$ for $i'' > i'$ with budget $2$. Moreover, let $T[i,n+1]=0$ for all $i$.
The optimal solution is then $\max_i T[i,1]$.

The above discussion captures almost all the key ideas for solving the general $2$-{\sf UUDP-MIN} problem. To get a {\sf QPTAS} in the general case,
we extend these ideas in the following way.

\squishlist

\item Consider a slight generalization when there is only one item in each column and row of grid cells (cf.
Lemma~\ref{lem:perturb}) while each budget is still $1$ and $2$. In
this case, we cannot pick arbitrary value of $i'$ when we compute
$T[i, j]$ as in Eq.\eqref{eq:udp-min-table} since it might not
correspond to any pricing. Through some additional observations,
table $T$ can be computed as follows: Let $\Item_j$ be the item whose $y$-coordinate is $j$. If $i=\Item_j[1]$ then we can use Eq.\eqref{eq:udp-min-table}; otherwise, $T[i,j]=T[i,j+1]+m_1(i,j)+2m_2(i,j)$. This algorithm runs in $O(n^3)$ time.

\item When there are $q$ different budgets, say $B_1, B_2, \ldots, B_q$,
we can solve the problem in $n^{O(q)}$ time. This is done
by defining $T[i_1,\ldots, i_{q-1}, j]$ to be the maximum
revenue we can get among the price assignment $p$ such that, for all
$q': 1 \leq q' \leq q$, $r_p(i',j)=B_{q'}$ for all $i_{q'-1}<i'\leq i_{q'}$ (where we let
$i_0=-1$ and $i_q=n$).

\item Finally, we obtain a {\sf QPTAS} by ``discretizing'' the prices so that there are not many choices of item prices (cf. Lemma~\ref{lem:uudp-discretize}). This enables us to assume that the prices are in  $\Gamma=\{0, (1+\epsilon)^0, (1+\epsilon)^1, ..., (1+\epsilon)^q\}$ where $q=O(\log_{1+\epsilon} m)$, and we can get the algorithm running in time $n^{O(\size{\Gamma})}=n^{O(\log m n)}$.\danupon{I changed this slightly: I point to Lemma~\ref{lemma:discretization}.}

\squishend
}

\subsection{Preprocessing}

\fullonly{We need to do a preprocessing, which will be used in the next two sections to design {\sf QPTAS} for $2$-{\sf UUDP-MIN} and $2$-{\sf SMP}.}
The following lemma says that we can assume the input lies on the grid where each row and column of the grid contains exactly one item.

\begin{lemma}\label{lem:perturb}
We are given an instance $(\cset,\iset)$ of $2$-{\sf UUDP-MIN}\fullonly{ (or $2$-{\sf SMP})}. Then we can, in polynomial time, transform $(\cset, \iset)$ into an ``equivalent'' instance $(\cset', \iset')$ such that
\begin{itemize}
\item Each consumer ${\bf C}' \in \cset'$ has even coordinates $(2i, 2j)$ for $0 \leq i,j \leq n$.
\item Each item ${\bf I}' \in \iset'$ has odd coordinate $(2i-1, 2j-1)$ for $1 \leq i,j \leq n$.

\item For each odd number $2i-1$, $1 \leq i \leq  n$, there is exactly one item ${\bf I}' \in \iset'$ with ${\bf I}'[1]=2i-1$ and exactly one item ${\bf I}'$ with ${\bf I}'[2]=2i-1$.
\end{itemize}
\end{lemma}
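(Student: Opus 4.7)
The plan is to apply an order-preserving rank transformation independently to each coordinate. Since the consideration relations $\Item[j]\geq\Consumer[j]$ depend only on the relative ordering of coordinates, such a transformation preserves every consumer's consideration set (and their budgets are unchanged), so the optimal revenue and indeed every price vector's revenue is preserved.

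First I would handle the $x$-coordinate. Sort the items in $\iset$ by their $x$-coordinate, breaking ties arbitrarily, obtaining $\Item^{(1)},\ldots,\Item^{(n)}$. Reassign the new $x$-coordinate of $\Item^{(i)}$ to $2i-1$. For each consumer $\Consumer$, let $r(\Consumer)$ be the smallest index $i$ with $\Item^{(i)}[1]\geq\Consumer[1]$, and set $r(\Consumer)=n+1$ if no such index exists; assign $\Consumer$ the new $x$-coordinate $2(r(\Consumer)-1)$. Apply the same procedure independently in the $y$-coordinate. This clearly runs in polynomial time (sort plus a binary search per consumer).

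Next I would verify equivalence. For each pair $(\Consumer,\Item^{(i)})$, note that $\Item^{(i)}[1]\geq\Consumer[1]$ holds in the original instance iff $i\geq r(\Consumer)$, by definition of $r$. In the new instance we have $\Item^{(i)}[1]'=2i-1\geq 2(r(\Consumer)-1)=\Consumer[1]'$ iff $i\geq r(\Consumer)$, because $2i-1\geq 2(r(\Consumer)-1)$ is equivalent to $i\geq r(\Consumer)-\tfrac{1}{2}$, i.e., $i\geq r(\Consumer)$ for integers. Thus the $x$-criterion matches in both instances, and the analogous argument handles the $y$-criterion. Since consideration sets and budgets coincide, the two instances are equivalent.

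Finally I would check the structural conditions. By construction each item gets coordinates $(2i-1,2\sigma(i)-1)$ for some permutation $\sigma$ of $[n]$, so items have distinct odd $x$-coordinates in $\{1,3,\ldots,2n-1\}$ and distinct odd $y$-coordinates in the same set, giving the one-item-per-row-and-column property. Consumer coordinates are of the form $2(r(\Consumer)-1)\in\{0,2,\ldots,2n\}$, hence even and in $[0,2n]$. The only subtlety is that when several items share an original $x$-coordinate, the tie-breaking rule gives them different ranks; however, a consumer sharing that $x$-coordinate has $r(\Consumer)$ equal to the smallest of these ranks, so her new $x$-coordinate $2(r(\Consumer)-1)$ is strictly less than the new $x$-coordinate of every tied item, and she continues to consider all of them. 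The same argument disposes of ties in the $y$-coordinate and of any pair of items that coincide in both original coordinates; this is the only place where care is needed, and it is handled uniformly by the definition of $r$.
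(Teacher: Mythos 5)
Your proof is correct. The underlying idea is the same as the paper's --- only the coordinate-wise ordering of items and consumers matters for the consideration relation, so any order-preserving re-coordinatization yields an equivalent instance --- but your execution is genuinely different and, in my view, cleaner. The paper sweeps a horizontal line top to bottom, breaks ties among items sharing a $y$-coordinate by explicit geometric perturbations (shifting the $j$-th tied item down by $(z-j)\delta/2z$), relocates the affected consumers, inserts grid lines, and then repeats with a vertical sweep; the preservation of consideration sets is asserted rather than verified. You instead sort each coordinate, send the rank-$i$ item to $2i-1$ and each consumer to $2(r(\Consumer)-1)$ where $r(\Consumer)$ is the least rank of an item meeting her criterion, and then prove the equivalence $\Item^{(i)}[1]\geq\Consumer[1]\iff 2i-1\geq 2(r(\Consumer)-1)$ directly, with the tie-breaking handled uniformly by the definition of $r$ rather than by perturbation. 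What your route buys is an explicit, checkable proof of the ``equivalence'' claim (which is the only nontrivial content of the lemma) and a slightly simpler algorithm (one sort per coordinate); what the paper's sweep buys is a picture that generalizes naturally to the way the grid lines are used later in the partition-tree construction, but nothing is lost by substituting your argument.
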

\begin{proof}
We sweep the horizontal line from top to bottom, and whenever the
line meets the items ${\bf I'}_1,\ldots, {\bf I'}_z$ such that ${\bf
I}'_1[1] < {\mathbf I}'_2[1] < \ldots < {\mathbf I}'_z[1]$ with the
same $y$-coordinate $y'$, we break ties as follows. Let $\delta$ be
the vertical distance from the line to the next item point below the
line. We set the new $y$-coordinates of these items to ${\bf
I'}_j[2] = y'-(z-j)\delta/2 z$. Notice that some consumers whose
$y$-coordinates lie in $[y', y'-\delta)$ get affected by this move.
We also change the $y$-coordinates of those consumers to
$y'-\delta/2$. Then we add the horizontal grid lines between the
space of every consecutive items, while making sure that consumer
points are on the line passing $y-\delta/2$. It is easy to see that
this process preserves the consideration set of every consumer. We
repeat the above steps until the sweeping line passes the bottommost
item.

We do a similar sweep of vertical line from right to left, inserting
the grid lines along the way. In the end, each consumer lies on the
intersection of the grid lines and each item in its cell, which guarantees
that no two items appear in the same row or column of the grid.
\end{proof}

\subsection{Detail of {\sf QPTAS} for {\sf UUDP-MIN}}

First, we can make the following simple assumption.

\begin{lemma}
\label{lem:uudp-discretize}
We can assume that the prices are in the form $(1+\epsilon)^0,
(1+\epsilon)^1, ..., (1+\epsilon)^q$ or zero where
$q=O(\log_{1+\epsilon} m)$ by sacrificing $(1+\epsilon)$ in the approximation factor.
\end{lemma}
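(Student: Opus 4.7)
The plan is to argue that a standard price-rounding transformation loses only a factor of $(1+\epsilon)$ in revenue while collapsing the set of relevant prices to powers of $(1+\epsilon)$. First I would establish that the useful price range is polynomially bounded. Since $2$-{\sf UUDP-MIN} is scale-invariant, we may assume that all budgets $B_\Consumer$ are integers with $\min_\Consumer B_\Consumer \ge 1$, and that $B_{\max} := \max_\Consumer B_\Consumer$ is at most $\mathrm{poly}(m)$ (otherwise we already obtain constant-factor approximation by pricing a single item at the largest budget, as the revenue is sandwiched between $B_{\max}$ and $m B_{\max}$). No optimal solution ever needs to set a price above $B_{\max}$ nor below a threshold $B_{\max}/m^{2}$ (items priced that low contribute at most an $\epsilon$-fraction of $\opt$ in total, and can be absorbed as a negligible loss by raising them to $0$).

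Next, fix an optimal pricing $p^\star$ and define the rounded pricing
\[
p'(\Item) \;=\; \begin{cases} (1+\epsilon)^{\lfloor \log_{1+\epsilon} p^\star(\Item)\rfloor} & \text{if } p^\star(\Item) \ge 1,\\ 0 & \text{otherwise.} \end{cases}
\]
Then $p'(\Item)/p^\star(\Item)\in [1/(1+\epsilon),1]$ whenever $p^\star(\Item)\ge 1$, and $p'(\Item)$ only takes values in $\{0,(1+\epsilon)^0,\ldots,(1+\epsilon)^q\}$ with $q=O(\log_{1+\epsilon}m)$.

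The main claim to verify is that $p'$ realises revenue at least $\opt/(1+\epsilon)$. For each consumer $\Consumer$ who buys an item $\Item_\star\in S_\Consumer$ under $p^\star$ (so $p^\star(\Item_\star)=\min_{\Item\in S_\Consumer} p^\star(\Item)\le B_\Consumer$), observe two things. First, for every $\Item\in S_\Consumer$ the inequality $p^\star(\Item_\star)\le p^\star(\Item)$ is preserved after floor-rounding to a common base, so $p'(\Item_\star)=\min_{\Item\in S_\Consumer} p'(\Item)$ (ties broken so that $\Consumer$'s actual purchase is at price $p'(\Item_\star)$). Second, $p'(\Item_\star)\le p^\star(\Item_\star)\le B_\Consumer$, so $\Consumer$ can still afford this minimum. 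Therefore $\Consumer$ pays $p'(\Item_\star)\ge p^\star(\Item_\star)/(1+\epsilon)$ under $p'$. Summing over all paying consumers yields revenue at least $\opt/(1+\epsilon)$.

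The only mildly subtle point is the preprocessing step bounding prices from above and below by polynomial factors of $B_{\max}$; this is what guarantees $q=O(\log_{1+\epsilon} m)$ rather than an unbounded number of discretisation levels. Everything else is a routine sandwich argument, and the same reasoning will carry over verbatim to $2$-{\sf SMP} in the subsequent section (treating each consumer's payment as a sum of item prices, each rounded down by at most $(1+\epsilon)$).
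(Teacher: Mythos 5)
Your proposal is correct and follows essentially the same approach as the paper: round the optimal prices down to the nearest power of $(1+\epsilon)$, note that each consumer can still afford her item, and lose at most a $(1+\epsilon)$ factor. In fact your write-up is more complete than the paper's own proof, which omits exactly the preprocessing step you supply (bounding the relevant price range polynomially so that $q=O(\log_{1+\epsilon} m)$) and does not explicitly check that floor-rounding preserves which item is cheapest in each consideration set.
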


\begin{proof}
We use the following standard arguments. Consider an optimal price
$p^*$. For each item ${\bf I}_j$, if the price is non-zero, we round
down the price $p^*({\bf I}_j)$ to the nearest scale of
$(1+\epsilon)^{q'}$, so the price of each item gets decreased by at
most a factor of $(1+\epsilon)$. Consider a consumer $\Consumer$ who
bought ${\bf I}_{j}$ with price $p^*$. After the rounding, she can
still afford ${\bf I}_j$, so we can still collect at least
$(1-\epsilon)p^*({\bf I}_j)$ from $C$.\danupon{Missing: We have to
first show that we can bound the maximum budget by $O(m)$.}
\end{proof}

Now, assuming that the optimal price $p^*$ has the above structure,
we show how to solve the problem in quasi-polynomial time. First, we
reorder the items based on their $y$-coordinates in descending
order, so we have ${\bf I}_1[2] > {\bf I}_2[2] > \ldots > {\bf
I}_n[2]$. A consumer $\Consumer$ is said to belong to {\em level
$j$} if it lies  between the row\danupon{We didn't define ``row'' before.} of ${\bf I}_j$ and
that of ${\bf I}_{j+1}$, so each consumer belongs to exactly one
level. Moreover, observe that a consumer $\Consumer$ at level $j$ is
only interested in (a subset of) items in $\set{{\bf I}_1,\ldots,
{\bf I}_j}$ (since $\Item_{j'}[2]<\Consumer[2]$ for any $j'>j$).
We define a subproblem $\pset_j$ as the pricing problem with items
$\set{{\bf I}_1,\ldots, {\bf I}_j}$ and consumers at levels
$1,\ldots, j$.  We use the dynamic programming technique to solve this
problem.

\paragraph{Profiles} We will remember the profile for each
subproblem $\pset_j$. A profile $\Pi$ of $\pset_j$ consists of
$O(\log m)$ item indices $\pi_1,\ldots, \pi_q \in \set{1,\ldots,
j}$. Each value $\pi_i$ is supposed to tell us the index of the item $\Item$
of price $(1+\epsilon)^i$ with maximum value $\Item[1]$. That is, we say
that a price $p$ for $\pset_j$ is {\em consistent} with profile
$\Pi =(\pi_1,\ldots, \pi_q)$ if, for each $i$, the item ${\bf I}_{\pi_i}$ has the highest value in the first coordinate among the items with price at most $(1+\epsilon)^i$, i.e., for all
$i$,
$$\pi_i=\arg\max_{j'} \{ \Item_{j'}[1]\ |\ p(\Item_{j'})\leq (1+\epsilon)^i\}\,.$$
Since $\{ \Item_{j'}\ |\ p(\Item_{j'})\leq (1+\epsilon)^i\}\subseteq
\{ \Item_{j'}\ |\ p(\Item_{j'})\leq (1+\epsilon)^{i+1}\}$ for any
$i$,
$${\bf I}_{\pi_1}[1] \leq {\bf I}_{\pi_2}[1] \leq \ldots \leq {\bf
I}_{\pi_q}[1]\,.$$

Observe that  if two prices $p'$ and $p''$ have the
same $\pset_j$ profile, then consumers at level $j$ see no
difference between these two prices, as shown formally by the
following lemma. We say that an item $\Item_k$ is a profile item for profile $\Pi=(\pi_1,\ldots, \pi_q)$ if and only if $k = \pi_{q'}$ for some $q' \in [q]$.

\begin{lemma}
\label{lemma:reconstruction}
Let $\Pi$ be a profile for subproblem $\pset_j$, and let $p$ be any
price function for $\pset_j$ that is consistent with profile $\Pi$. Then we
can assume without loss of generality that every consumer at level $j$ only purchases
profile items.
\end{lemma}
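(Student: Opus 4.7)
The plan is to show that for any consumer $\Consumer$ at level $j$, whatever item $\Consumer$ purchases under $p$ can be replaced by a profile item at the same price without changing $\Consumer$'s payment, so that the total revenue collected from level-$j$ consumers under $p$ is unchanged when we force everyone to buy profile items.

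First I would fix a consumer $\Consumer$ at level $j$ and recall that her consideration set in $\pset_j$ is $S_\Consumer \cap \{\Item_1,\ldots,\Item_j\} = \{\Item_k : k\le j \text{ and } \Item_k[1]\ge \Consumer[1]\}$, since by construction $\Item_{k}[2]\ge \Consumer[2]$ for all $k\le j$. If $\Consumer$ buys nothing under $p$, there is nothing to do, so assume $\Consumer$ purchases some item $\Item_k$ at discretized price $p(\Item_k)=(1+\eps)^{i'}$. Then the minimum price in her consideration set equals $(1+\eps)^{i'}$.

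Next I would compare $\Item_k$ to the profile item $\Item_{\pi_{i'}}$ with index $\pi_{i'}$. By the definition of the profile, $\pi_{i'}\le j$, $p(\Item_{\pi_{i'}})\le (1+\eps)^{i'}$, and $\Item_{\pi_{i'}}[1] = \max\{\Item_{j'}[1]: p(\Item_{j'})\le (1+\eps)^{i'}\} \ge \Item_k[1] \ge \Consumer[1]$. Hence $\Item_{\pi_{i'}}\in S_\Consumer$ (and it sits within $\pset_j$). Now because $\Consumer$'s cheapest available item under $p$ has price exactly $(1+\eps)^{i'}$, we must have $p(\Item_{\pi_{i'}})=(1+\eps)^{i'}$; otherwise $\Item_{\pi_{i'}}$ would strictly beat $\Item_k$ in price, contradicting the choice of $\Item_k$ as the minimum.

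Finally, since $\Item_{\pi_{i'}}$ is a profile item and ties in the min-buying rule can be broken arbitrarily, I would declare that $\Consumer$ purchases $\Item_{\pi_{i'}}$ instead of $\Item_k$; her payment, $(1+\eps)^{i'}$, is unchanged, and the overall revenue is preserved. Applying this argument to every level-$j$ consumer simultaneously gives the lemma. I do not expect a real obstacle here: the only subtle point is the tie-breaking step in the last paragraph, which is justified because the profile item lies in $S_\Consumer$ at the same minimum price, so whether $\Consumer$ is formally ascribed to $\Item_k$ or to $\Item_{\pi_{i'}}$ makes no difference to the objective and no difference to other consumers' decisions.
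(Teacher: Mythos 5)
Your proof is correct and follows essentially the same route as the paper's: identify the profile item $\Item_{\pi_{i'}}$ at the price level of the purchased item, note that by maximality of its first coordinate it lies in the consumer's consideration set, and reassign the purchase to it via tie-breaking. Your version merely spells out the detail (left implicit in the paper) that $p(\Item_{\pi_{i'}})$ must equal the minimum price $(1+\eps)^{i'}$ exactly, so the payment is unchanged.
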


\begin{proof}
Suppose that a consumer $\Consumer$ buys an item ${\bf I}$ in $\iset$
with $p({\bf I}) =(1+\epsilon)^{q'}$ which is not a profile item.
Then consider the profile item ${\bf I}_{\pi_{q'}}$, which satisfies ${\bf I'}[1] \geq {\bf I}[1]$, so we must have $\Item_{\pi_{q'}} \in S_{\Consumer}$. We can
therefore assume that consumer $\Consumer$ buys $\Item_{\pi_{q'}}$ instead
of ${\bf I}$.
\end{proof}

Let $\Pi=(\pi_1,\ldots, \pi_q)$ be a profile for $\pset_{j}$ and
$\Pi'= (\pi'_1,\ldots, \pi'_q)$ be a profile for $\pset_{j-1}$. We
say that $\Pi$ is {\em consistent} with $\Pi'$ if for any price
$p':\set{{\bf I}_1,\ldots, {\bf I}_{j-1}} \rightarrow \R$ that is
consistent with $\Pi'$, we can extend $p'$ to $p$ by assigning value
$p({\bf I}_j)$ such that $p$ is consistent with $\Pi$. Notice that
consistency between any two profiles can be checked in polynomial
time by trying all $q$ possibilities of prices.

We recall that we use $p^*$ to denote the optimal price.

\begin{lemma}\label{lem:udp-consistency}
There are profiles $\Pi^1,\ldots, \Pi^n$ for $\pset_1,\ldots,
\pset_n$ respectively such that for each $j\in\{1,\cdots, n-1\}$,
$\Pi^j$ is consistent with $\Pi^{j+1}$. Moreover, all such profiles
are consistent with price $p^*$.
\end{lemma}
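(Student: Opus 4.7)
The plan is to construct the profiles explicitly from the optimal price $p^*$ and then verify the two assertions (pairwise consistency and consistency with $p^*$) by tracking how each profile coordinate is determined.

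First I would set, for each $j \in \{1,\ldots,n\}$, $\Pi^j = (\pi_1^j, \ldots, \pi_q^j)$ by
$$\pi_i^j := \arg\max_{1 \leq j' \leq j}\{{\bf I}_{j'}[1] \,:\, p^*({\bf I}_{j'}) \leq (1+\epsilon)^i\},$$
using a fixed sentinel when the defining set is empty. By construction, $p^*$ restricted to $\{{\bf I}_1,\ldots,{\bf I}_j\}$ is consistent with $\Pi^j$ for every $j$, which already delivers the ``moreover'' part of the lemma.

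It then remains to verify the pairwise consistency. Let $p'$ be any price on $\{{\bf I}_1,\ldots,{\bf I}_j\}$ that is consistent with $\Pi^j$, and extend it to $p$ on $\{{\bf I}_1,\ldots,{\bf I}_{j+1}\}$ by declaring $p({\bf I}_{j+1}) := p^*({\bf I}_{j+1})$. Fix a threshold index $i$. Since $p({\bf I}_{j+1}) = p^*({\bf I}_{j+1})$, the item ${\bf I}_{j+1}$ lies at or below $(1+\epsilon)^i$ under $p$ if and only if it does under $p^*$. Among ${\bf I}_1,\ldots,{\bf I}_j$, the $x$-argmax of the items priced at most $(1+\epsilon)^i$ equals $\pi_i^j$ both under $p'$ (by consistency with $\Pi^j$) and under $p^*$ (by the definition of $\Pi^j$). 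Combining these two facts, the overall $x$-argmax at threshold $i$ under the extended $p$ coincides with the one under $p^*$, which is exactly $\pi_i^{j+1}$. Thus $p$ is consistent with $\Pi^{j+1}$, as required.

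The conceptual point, and the only place one could slip, is the observation that a profile encodes \emph{only} the sequence of $x$-argmaxes at the discretized thresholds, and therefore leaves a good deal of freedom in the actual prices of non-argmax items. The extension step exploits this by reusing $p^*$'s value on the single new item ${\bf I}_{j+1}$ while leaving the prior prices untouched; because the threshold membership of ${\bf I}_{j+1}$ is controlled solely by $p^*({\bf I}_{j+1})$ and the prior argmaxes were already pinned down by $\Pi^j$, the new argmaxes are forced to match those coming from $p^*$, i.e.\ $\Pi^{j+1}$. No case analysis on ``whether ${\bf I}_{j+1}$ becomes the new argmax'' is really needed beyond this observation.
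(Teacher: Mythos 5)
Your proposal is correct and follows the same route as the paper: define each $\Pi^j$ to be the unique profile induced by $p^*$ on $\pset_j$, which gives the ``moreover'' part immediately, and then check consistency of adjacent profiles. The paper simply asserts that this consistency is clear, whereas you supply the verification (extend any $p'$ realizing $\Pi^j$ by $p({\bf I}_{j+1}) := p^*({\bf I}_{j+1})$ and note that both the threshold membership of ${\bf I}_{j+1}$ and the argmax over the first $j$ items then agree with $p^*$), which is exactly the intended argument.
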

\begin{proof}
For each subproblem $\pset_j$, we define the profile
$\Pi^j=(\pi^j_1,\ldots, \pi^j_q)$ based on the price $p^*$ (there is
only one possible profile consistent with $p^*$). It is clear that
$\Pi^j$ is always consistent with $\Pi^{j+1}$.
\end{proof}

\paragraph{Dynamic Programming Table} For each $j=1,\ldots, n$ and for each profile $\Pi$ of $\pset_j$, we use a table entry $T(j,\Pi)$ to store the maximum revenue achievable among the price function for $\pset_j$ that is consistent with the profile $\Pi$. Since there are $n^{O(\log m)}$ possibilities for the profile $\Pi$, the table size is $n^{O(\log m)}$. We now show the computation of the table. To compute $T(j,\Pi)$, we recall that given the profile $\Pi$, the revenue from consumers at level $j$ can be computed efficiently. Denote such revenue by $r_j(\Pi)$. The following equation holds:
\[T(j,\Pi) = r_j(\Pi) + \max_{\Pi' \mbox{ consistent with } \Pi} T(j-1, \Pi') \]

\paragraph{Computing the Solution} For each table entry $T(j,\Pi)$, we can keep track of the profile $\Pi'$ such that $T(j-1, \Pi')$ is the entry that is used to compute $T(j,\Pi)$. Let $T(n, \Pi)$ be the entry that contains the maximum value over all $\Pi$. The value in this entry represents the revenue we can get from the optimal pricing $p^*$, so it is enough to reconstruct the price function $p^*$. We first obtain a sequence of profiles $\Pi^1,\ldots, \Pi^n = \Pi$ such that $\Pi^j$ is a profile for $\pset_j$ and that $\Pi^j$ is consistent with $\Pi^{j-1}$ for any $j =1,\ldots, n$. This sequence allows us to reconstruct a price function that is consistent with all the profiles in polynomial time.

\section{{\sf QPTAS} for $2$-{\sf SMP}}\label{sec:2-SMP}

In this section, we show that {\sf QPTAS} for $2$-{\sf SMP}. 

\subsection{Overview}

\danupon{This problem is harder than the highway pricing problem since it doesn't have a separator. For example, the log n approx of Balcan-Bum for Highway heavily relies on the separator. Similarly, Khaled's QPTAS also relies on the separator (once you remember the profile in the middle, you can solve two sides separately).}

We sketch the key ideas here and leave the details in next sections. First, consider the special case where each consumer has
budget $1$ or $2$ and each item must be priced either $0$ or $1$.
The exact optimal solution of this case can be found in
$n^{O(\log^2 m n)}$ time. We later
show how to extend the idea to the general cases, which turns out to
be easy for the case of highway problem but need a few more ideas
for the case of $2$-{\sf SMP}.

\paragraph{Algorithm for highway pricing problem reviewed:}
Let us first start with the highway pricing problem which can be
casted as a special case of $2$-{\sf SMP} where items are in the
form $(1,n), (2,n-1), \ldots, (n,1)$.
The main idea used in \cite{ElbassioniSZ07}, casted in our language
of ``partition tree'' (for convenience in explaining our $2$-{\sf SMP} algorithm later) is the following.\danupon{I removed [htb!] from the figure.}

\begin{figure}
\centering \scalebox{1.2}[1.2]{\includegraphics{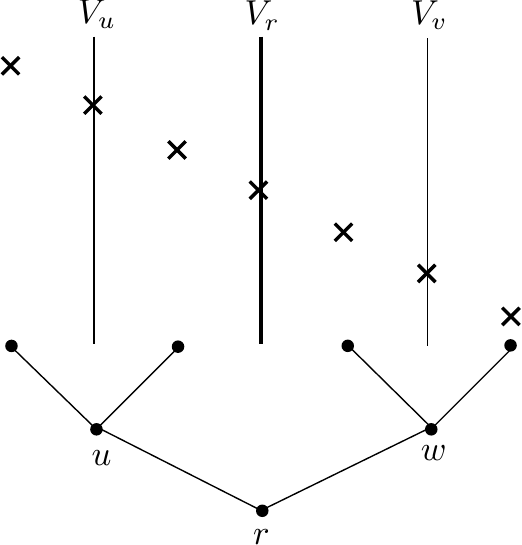}}
\caption{A partition tree}\label{fig:Partition_Tree}
\end{figure}

%
%
%
%
%
%
%

We first construct a balanced binary tree called a {\em partition tree} and denoted by
${\cal T}$. We define the vertical gridline in the middle to be a level-$0$ line, denoted by $\ell_r$, dividing the items equally to left and right sides. This line corresponds to the root node $r$ of the tree. We also assign the consumers whose consideration set contains items on both sides to the root node. Then we recursively define the subtrees on the subproblems on the two sides of line $\ell_r$ as shown in Figure~\ref{fig:Partition_Tree} until we reach the subproblem containing only one item. For any node $v \in \tset$, let $\cset_v$ be the set of consumers assigned to $v$, and $\ell_v$ be the line associated with node $v$.


Now we show a top-down recursive algorithm to solve this problem.
This algorithm can be converted to a dynamic program by working
bottom-up instead. At the root node $r$ of ${\cal T}$, we would like
to compute $f_r(\Item_{L, 1}, \Item_{L, 2}, \Item_{L, 3}, \Item_{R,
1}, \Item_{R, 2}, \Item_{R, 3})$ which is defined to be the optimal
revenue that we can collect from consumers in ${\cal C}\setminus
{\cal C}_r$ when we price the items in such a way that $\Item_{L,
1}$, $\Item_{L, 2}$ and $\Item_{L, 3}$ ($\Item_{R, 1}$, $\Item_{R,
2}$ and $\Item_{R, 3}$, respectively) are the first, second, and
third closest items on the left (respectively, right) of $\ell_r$
that have price $1$. To avoid long notation, let us denote
$\{\Item_{L, 1}, \Item_{L, 2}, \Item_{L, 3}, \Item_{R, 1}, \Item_{R,
2}, \Item_{R, 3}\}$ by $\Gamma_r$ and $f_r(\Item_{L, 1}, \Item_{L,
2}, \Item_{L, 3}, \Item_{R, 1}, \Item_{R, 2}, \Item_{R, 3})$ by
$f_r(\Gamma_r)$.
%
%
%
If we can compute $f_r(\Gamma_r)$ for all $\Gamma_r$ then the
optimal revenue can be obtained via the following formula.
\begin{align}
\text{Optimal revenue} &=
\max_{\Gamma_r} f_r(\Gamma_r) + m_1(\Gamma_r) + 2m_2(\Gamma_r)\label{eq:highway}
\end{align}
where, for any node $v$, $m_1(\Gamma_v)$ is the number of consumers
in ${\cal C}_v$ whose consideration sets contain exactly one item in
$\Gamma_v$, and $m_2(\Gamma_v)$ is the number of consumers in ${\cal
C}_v$ with budget $2$ whose consideration sets contain exactly two
items in $\Gamma_v$. The point is that we can calculate the revenue
from consumers in ${\cal C}_r$ as $m_1(\Gamma_r) + 2m_2(\Gamma_r)$
and use $f_r(\Gamma_r)$ to compute the revenue obtained from the
rest of the consumers.

It is left to compute $f_r(\Gamma_r)$. Let $u$ and $v$ be the left
and right children of $r$, respectively. In order to compute
$f_r(\Gamma_r)$, we will compute $f_u(\Gamma_r, \Gamma_u)$ which is
the maximum revenue we can collect from consumers assigned to the
descendants of $u$ (excluding $u$) where $\Gamma_r$ is the set of
six items of price $1$ that are closest to $\ell_r$ as defined
earlier. And, similarly, $\Gamma_u=\{\Item'_{L, 1}, \Item'_{L, 2},
\Item'_{L, 3}, \Item'_{R, 1}, \Item'_{R, 2}, \Item'_{R, 3}\}$ is the
set of six items of price $1$ that are closest to $\ell_u$.
Moreover, we require that $\Gamma_u$ must be {\em consistent} with
$\Gamma_r$ in the sense that there is some price assignment such
that items in $\Gamma_u$ are the items closest to $\ell_u$ of price $1$
and items in $\Gamma_r$ are the items closest to $\ell_u$ of price $1$ as
well.
%
%
(For example, if we let $\Gamma_r=\{\Item_{L, 1}, \Item_{L, 2},
\Item_{L, 3}, \Item_{R, 1}, \Item_{R, 2}, \Item_{R, 3}\}$ then an
item $\Item$ with property $\Item_{L, 3}[1]<\Item[1]<\Item_{L,
2}[1]$ cannot be in $\Gamma_u$ since this item must have price $0$.)
We use $\Gamma_u\bowtie\Gamma_r$ to denote ``$\Gamma_u$ is
consistent with $\Gamma_r$''. We define $f_v(\Gamma_r, \Gamma_v)$ in
a similar way.

Once we have $f_u(\Gamma_r, \Gamma_u)$ and $f_v(\Gamma_r, \Gamma_v)$
for all $\Gamma_u\bowtie\Gamma_r$ and $\Gamma_v\bowtie\Gamma_r$, we
can compute $f_r(\Gamma_r)$:
\begin{align}
 f_r(\Gamma_r) & = \max_{\Gamma_u\bowtie\Gamma_r} \left\{f_u(\Gamma_r, \Gamma_u) + m_1(\Gamma_u) + 2m_2(\Gamma_u)\right\} + \max_{\Gamma_v\bowtie\Gamma_r} \left\{f_v(\Gamma_r, \Gamma_v) + m_1(\Gamma_v) + 2m_2(\Gamma_v)\right\} \,.\label{eq:highway_recurse}
\end{align}
The main point here is that there is no consistency requirement between $\Gamma_u$ and $\Gamma_v$ so we have two independent subproblems. We define the function $f_z$, for all nodes $z$ in ${\cal T}$ similarly: Let $r=v_0$, $v_1$, $v_2$, ..., $v_{q-1}$ be the ancestors of $z$ and $v_q=z$. We have to compute $f_z(\Gamma_{v_0}, \Gamma_{v_1}, \ldots, \Gamma_{v_q})$ for all $\Gamma_{v_0}, \Gamma_{v_1}, \ldots, \Gamma_{v_q}$ such that $\Gamma_{v_i}\bowtie \Gamma_{v_j}$ for all $i\neq j$.

The computation of $f_z(\Gamma_{v_0}, \Gamma_{v_1}, \ldots,
\Gamma_{v_q})$ is done in the same way as Eq.\eqref{eq:highway_recurse} for every non-leaf node $z$.  At leaf node $z$, $f_z(\Gamma_{v_0}, \Gamma_{v_1}, \ldots,
\Gamma_{v_q})$ can also be easily computed: $f_z(\Gamma_{v_0},
\Gamma_{v_1}, \ldots, \Gamma_{v_q})=m_1(\Gamma_z)+2m_2(\Gamma_z)$.

Observe that $q=O(\log m + \log n)$ and there are $n^{6}$
choices for each $\Gamma_{v_i}$. Therefore, we can precompute
$f_z(\Gamma_{v_0}, \Gamma_{v_1}, \ldots, \Gamma_{v_q})$ for all
$n^{O(\log m + \log n)}$ combinations of $\Gamma_{v_0},
\Gamma_{v_1}, \ldots, \Gamma_{v_q}$. By working bottom-up from the
leaf nodes, the running time becomes $\poly(m)n^{O(\log m + \log n)}$.

\paragraph{Algorithm for $2$-{\sf SMP} (special case):} To solve the special case of $2$-{\sf SMP} defined above, we need to modify a few definitions in a right way.
%
%
Let us again consider the top-down algorithm and start at the root node $r$ of the partition tree ${\cal T}$. (Recall that we can assume that there is at most one item in each row and column so we can still define the paritition tree by drawing the vertical line through the point in the middle when sorted by the first dimension.)

One problem immediately appears: $f_r(\Gamma_r)$ cannot be used to compute the optimal revenue as we did in Eq.\eqref{eq:highway}. The reason is that we cannot compute the revenue from ${\cal C}_r$ using $m_1(\Gamma_r) + 2m_2(\Gamma_r)$ anymore.
To fix this, we have to redefine ${\cal C}_r$ in the following way:
We assign all consumers lying on the left (respectively, right) of
$\Item_r$ to the left (respective, right) child and keep only those
consumers lying exactly on the vertical line going through $\Item_r$
in ${\cal C}_r$.

Now we can compute the revenue from the newly defined ${\cal C}_r$
and a function that computes the total revenue. To do this, we
define $f_r(\Item_1, \Item_2, \Item_3)$ to be the total revenue we
can get from consumers in ${\cal C}\setminus{\cal C}_r$ by pricing
the items in such a way that, among the items on the right side of
$\Item_r$, items $\Item_1$, $\Item_2$, and $\Item_3$ are the items
with price $1$ that have the highest, second highest, and third
highest values in the second dimension, respectively. Again, let
$\Gamma_r$ denote a possible choice of $\{\Item_1, \Item_2,
\Item_3\}$ and write $f_r(\Gamma_r)$ instead of $f_r(\Item_1,
\Item_2, \Item_3)$. If we can compute $f_r(\Gamma_r)$ then we can
get the optimal revenue by Eq.\eqref{eq:highway},
%
%
where $m_1(\Gamma_r)$ and $m_2(\Gamma_r)$ is as defined earlier (with the new definition of ${\cal C}_r$).

Some more complications lie in computing $f_r(\Gamma_r)$, for any
$\Gamma_r$. As before, we will compute $f_u(\Gamma_r, \Gamma_u)$ and
$f_v(\Gamma_r, \Gamma_v)$ where $u$ and $v$ are the left and right
children of $r$, respectively. Howerver, we have to carefully define
$f_u(\Gamma_r, \Gamma_u)$ and $f_v(\Gamma_r, \Gamma_v)$, in a
different way.

We define $f_u(\Gamma_r, \Gamma_u)$, for any $\Gamma_u=\{\Item_1,
\Item_2, \Item_3\}$, to be the maximum revenue from the consumers
assigned to the descendants of $u$ when we price the items in such a
way that, among the items lying on the right side of $\Item_u$ and
left side of $\Item_r$, items $\Item_1$, $\Item_2$, and $\Item_3$
are the items with price $1$ that have the highest, second highest,
and third highest values in the second dimension, respectively. Note
that we do not need to check any consistency between $\Gamma_r$ and
$\Gamma_u$: For any choice of $\Gamma_r$ and $\Gamma_u$, there is
always a price assignment such that items in $\Gamma_r$ and
$\Gamma_u$ are the items of price $1$ that have the highest values
in the second dimension in their respective regions. In this case,
we say that $\Gamma_r\bowtie \Gamma_u$ is always true for any
$\Gamma_r$ and $\Gamma_u$.

On the other hand, we define $f_v(\Gamma_r, \Gamma_v)$, for any
$\Gamma_v=\{\Item_1, \Item_2, \Item_3\}$, to be the maximum revenue
from the consumers assigned to the descendants of $v$ when we price
the items in such a way that, among the items lying on the right
side of $\Item_v$, items $\Item_1$, $\Item_2$, and $\Item_3$ are the
items with price $1$ that have the highest, second highest, and
third highest values in the second dimension, respectively. In this
case, we have to make sure that $\Gamma_v$ is consistent with
$\Gamma_r$, i.e., there is some price assignment such that items in
$\Gamma_r$ and $\Gamma_u$ are the items of price $1$ that have the
highest values in the second dimension in their respective regions.

Now we have defined $f_u(\Gamma_r, \Gamma_u)$ and $f_v(\Gamma_r,
\Gamma_v)$, we compute $f_r(\Gamma_r)$ using
Eq.\eqref{eq:highway_recurse}.
%
%
As in the case of the highway pricing problem, we can extend the
definition to other nodes. In particular, at a leaf node $z$ we have
to compute $f_z(\Gamma_{v_0}, \Gamma_{v_1}, \ldots, \Gamma_{v_q})$
where $q=O(\log m + \log n)$. Hence, this case can be solved in
$\poly(|{\cal C}|)\cdot\size{\cal I}^{\polylog{|{\cal I}|}}$ time.

\danupon{To do: try not to use $m$ and $n$.}

\paragraph{Algorithm for general $2$-{\sf SMP}:}
We now remove the restrictions that each item must be priced $0$ or
$1$ and each budget must be $1$ or $2$. The removal of the
restriction on item price does not affect the case of highway
pricing problem since this can be easily assumed (see, e.g.,
\cite{GrandoniR10}).\danupon{Actually, can we assume this? In
\cite{GrandoniR10}, they also assume this for tollbooth problem on
trees (Section 4.1).}
Moreover, we can still assume that the maximum budget is $O(m n)$. Now we can deal with the general highway problem by
redefining $f_r(\Gamma_r)$: Let $\Gamma_r=\{\Item_{L,0}, \Item_{L,
1}, \ldots, \Item_{L, q}, \Item_{R,0}, \Item_{R, 1},
\ldots,\Item_{R, q}\}$ where $q=O(\log m n)$.
For any $i\leq q$, we want to price in such a way that $\Item_{L,
i}$ is the item closest to $\Item_r$ on the left such that the sum
of the price of all items between $\Item_r$ and $\Item_{L, i}$ is at
least $(1+\epsilon)^i$. Computing $f_r(\Gamma_r)$ can be done in the
same manner as before and consistency checking is easy to deal with.
Function $f_{v_q}(\Gamma_{v_0}, \Gamma_{v_1}, ..., \Gamma_{v_q})$,
for any node $v_q$ at level $q$ in $\cal T$, can be defined in a
similar manner.

For $2$-{\sf SMP}, we may not in general assume the item prices to
be $0/1$. Instead, we show that it can be assumed that each item
must have price $0$, or $(1+\epsilon)^j$, for any $j=0, 1, \ldots,
O(\log m)$. A natural extension of the above idea is to
define the notion of ``volume of regions'': For each item $\Item$,
let $H_{\Item}$ and $V_{\Item}$ denote the horizontal and vertical
line cutting through item $\Item$, respectively. Any rectangle
resulting from drawing some horizontal and vertical lines through
some items are called {\em regions} and the regions that do not
contain other regions are called {\em minimal regions}. For any
price assignment, we define the {\em volume} of a region to be the
sum of the price of all items within the region.

\begin{figure}
\centering \scalebox{0.7}[0.7]{\includegraphics{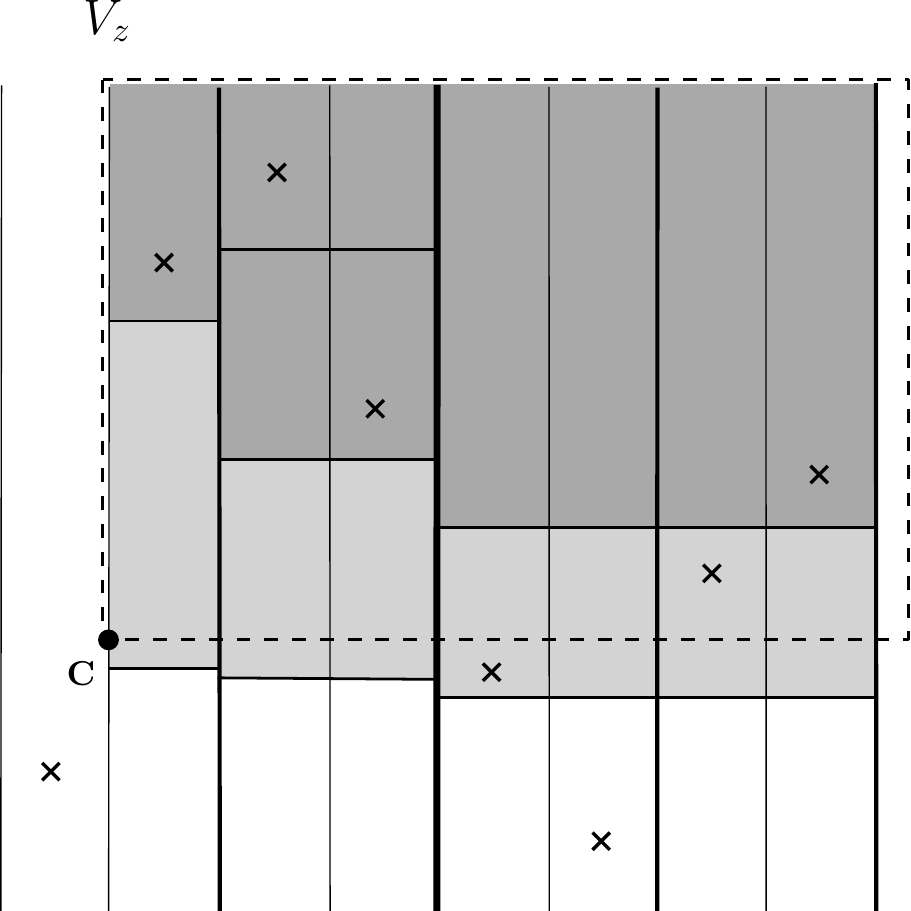}}
\caption{Approximating the revenue from consumer $\Consumer$ assigned to node  $z$ in $\mathcal{T}$.}\label{fig:area-idea}
\end{figure}

%
%

Now, similar to the highway problem, we define $\Gamma_r=\{\Item_0,
\Item_1, ..., \Item_k\}$ (note that $k=O(\log m)$) as the
``region guess'': We define $f_r(\Gamma_r)$ to be the maximum
revenue from ${\cal C}\setminus {\cal C}_r$ when we price in such a
way that, for any $i$, item $\Item_i$ is the highest item (in the
second dimension) such that the volume of the region on the right of
the vertical line $V_{\Item_r}$ and above the horizontal line
$H_{\Item_i}$ (including $\Item_i$) is at least $(1+\epsilon)^i$.
Using these volume guesses, we can approximate the upper and lower
bounds of the revenue from each consumer $\Consumer$ at node $z$ by
looking at $\Gamma_v$ for all ancestors $v$ of $z$. This is because
each consumer's consideration set will contain some set of regions
$B_1, B_2, ...$ with volume guesses $(1+\epsilon)^{i_1},
(1+\epsilon)^{i_2}, ...$, respectively (such as the blue regions in
Figure~\ref{fig:area-idea}). Also, this consideration set will also
be contained in some set of regions $R_1, R_2, ...$ with volume
guesses $(1+\epsilon)^{i_1+1}, (1+\epsilon)^{i_2+1}, ...$ (such as
the blue and red regions together in Figure~\ref{fig:area-idea}).

However, in contrast to the case of highway problem, the consistency
between the guesses (e.g., between $\Gamma_r$ and its children
$\Gamma_u$ and $\Gamma_v$) is harder to guarantee. In order to
guarantee the consistency, we add another parameter, denoted by
$\Delta_r=\{\delta_0, \delta_1, \ldots, \delta_{O(k^2)}\}\subseteq
R_{\geq 0}^{O(k^2)}$ (recall that $|\Gamma_r|=k+1$). $\Delta_r$ is
used as a ``volume guess''. That is, we define $f_r(\Gamma_r,
\Delta_r)$ to be the maximum revenue from ${\cal C}\setminus {\cal
C}_r$ when we price in such a way that the restriction on $\Gamma_r$
is as before and, additionally, the volumn of the $i$-th minimal
region is exactly $\delta_i$ (where we make any order of the minimal
regions). We can now guarantee the consistency by making sure that
the sum of the volume guesses in smaller regions defined by
$\Gamma_u$ and $\Delta_u$ (as well as $\Gamma_v$ and $\Delta_v$) is
exactly the volume guesses in the bigger regions defined by
$\Gamma_r$ and $\Delta_r$. \danupon{Need to make this part much more
precise later. Also, we have to make it very clear why we need the
volume guesses, not region guesses alone.}

For any node $z$, we also define a function $f_z(\Gamma_{v_0},
\Gamma_{v_1}, \ldots, \Gamma_{v_q}, \Delta_z)$ where $v_0, v_1,
\ldots, v_{i-1}$ are ancestors of $z$ and $v_q=z$. In this case, we
consider the minimal regions obtained by drawing vertical lines
$V_{\Item_{v_0}}, V_{\Item_{v_1}}, \ldots, V_{\Item_{v_q}}$ and
horizontal lines $H_{\Item}$ for $\Item\in \Gamma_{v_i}$, for all
$i$. We use $\Delta_z$ to store the numbers that are the ``volume
guesses'' of all these regions. We also check the consistency in
terms of volume, i.e., $\Pi=\{\Gamma_{v_0}, \Gamma_{v_1}, \ldots,
\Gamma_{v_q}, \Delta_{v_q}\}$ is consistent with
$\Pi'=\{\Gamma_{v_0}, \Gamma_{v_1}, \ldots, \Gamma_{v_{q-1}},
\Delta_{v_{q-1}}\}$ if the volume guesses of the smaller regions
defined by lines in $\Pi$ add up to the volume guesses of the bigger
regions defined by lines in $\Pi'$.

\subsection{Preprocessing}
Fix some $\epsilon >0$. Given an instance $(\iset, \cset)$, our goal
is to compute a price that collects a revenue of at least
$(1-O(\epsilon)) \opt$. Recall that we can assume that the consumers
are on the intersection of grid lines, and the items are in the grid
cells (cf. Lemma~\ref{lem:perturb}). First we process the input so
that the budgets and prices are polynomially bounded. Moreover, the
optimal solution only assigns prices of the form $(1+\epsilon)^j$
for some $j \leq O(\log m)$. The proof of this fact only uses
standard arguments (along the same line as in \cite{BalcanB07}).


\begin{lemma}\label{lem:preprocess}
Let $M= O(mn /\epsilon)$. The input instance ${\cal P}$ can be
reduced to ${\cal P'}$ with the following properties.

\squishlist
  \item For each consumer ${\bf C}$, the budget of ${\bf C}$ in ${\cal P}'$ is between $1$ and $M$.
  \item Any price $p'$ that $\alpha$-approximates the optimal pricing of $\pset'$ can be transformed in polynomial time into another price $p$ that gives $(1+3\epsilon) \alpha$-approximation for $\pset$.
  \item There is a $(1+\epsilon)$-approximate solution $\tilde{p}$ satisfying the following: For all ${\bf I} \in \iset$, $1 \leq \tilde{p}({\bf I}) \leq M$, and $\tilde{p}({\bf I})$ is in the form $(1+\epsilon)^j$ for some $j \leq O(\log m)$.
\squishend
\end{lemma}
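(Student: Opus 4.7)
The plan is a standard scale-and-discretize argument with three ingredients: guessing $\opt$, discarding irrelevant consumers and items, and rounding prices onto a geometric grid. The main subtlety is composing the individual losses so the overall factor stays within $(1+3\epsilon)$.

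I would first guess a value $P=(1+\epsilon)^j$ with $P\le\opt(\cset,\iset)\le(1+\epsilon)P$ by enumerating over the $O(\log_{1+\epsilon}(B_{\max}))$ relevant scales, at the cost of a poly-logarithmic factor in running time and a $(1+\epsilon)$-factor in approximation. Given $P$, I build $\pset'$ by (i) deleting every consumer $\Consumer$ with $B_\Consumer<\epsilon P/n$ (whose total possible contribution to any revenue is at most $n\cdot \epsilon P/n = \epsilon P$), (ii) capping every remaining $B_\Consumer$ at $(1+\epsilon)P$ (no consumer ever pays more than $\opt(\cset,\iset)$), and (iii) rescaling all budgets and candidate prices by the scalar $n/(\epsilon P)$. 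The resulting budgets lie in $[1,(1+\epsilon)n/\epsilon]\subseteq[1,M]$, which gives claim~1. For claim~2, a pricing $p'$ for $\pset'$ is pulled back by dividing all prices by $n/(\epsilon P)$ and setting the prices of previously discarded items to infinity; the revenue from every surviving consumer is preserved exactly, while the discarded-consumer loss combined with the guess loss telescopes to the claimed $(1+3\epsilon)\alpha$-factor.

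For claim~3, I take an optimal price $\tilde p$ for $\pset'$ and modify it in two stages. First, I zero out every item with $\tilde p(\Item)<\epsilon/m$: under SMP semantics, rounding prices downward keeps every previously paying consumer paying, and since $|S_\Consumer|\le m$ the total drop in any consumer's payment is at most $\epsilon$, hence at most an $\epsilon$-fraction of her payment (after absorbing consumers who would contribute less than $1$ into the earlier discard step). Second, I round each remaining nonzero price down to the nearest $(1+\epsilon)^j$, losing another $(1+\epsilon)$-factor per paying consumer. The surviving prices lie in $\{0\}\cup\{(1+\epsilon)^j : 0\le j\le O(\log m)\}$, since the maximum price is at most $M$ and $\log_{1+\epsilon} M = O(\log m)$ for fixed $\epsilon$.

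The main technical subtlety is the accounting of the composed losses. Zeroing small prices produces an additive $\epsilon$-loss per consumer, which is only comparable to her payment when she pays at least a constant; this is why the budget-bounding step must precede the price-rounding step so that we can guarantee every surviving paying consumer contributes at least $1$. Once that is done, the three sources of loss (guessing $\opt$, discarding low-budget consumers, rounding prices) are each a multiplicative $(1+\epsilon)$-factor and compose as $(1+\epsilon)^3\le 1+4\epsilon$, which after substituting $\epsilon/4$ for $\epsilon$ yields the claimed $(1+3\epsilon)$-bound.
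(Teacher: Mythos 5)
Your overall shape (discard cheap consumers, discard cheap items, rescale, round prices onto a geometric grid) matches the paper's, but the normalization you chose breaks the third bullet of the lemma, and that bullet is the whole point of the preprocessing. The paper uses $B_{\max}$ directly as the reference scale (no guessing of $\opt$ is needed, since $B_{\max}\le\opt\le m\,B_{\max}$ already pins it down up to a factor absorbed elsewhere), deletes consumers \emph{and} zeroes items whose budget/optimal price falls below $\epsilon B_{\max}/(mn)$, and rescales by $M'=mn/(\epsilon B_{\max})$. The product $mn$ in the denominator is essential: after rescaling, every surviving item's optimal price is at least $\epsilon B_{\max}/(mn)\cdot M'=1$, and the total revenue lost to the zeroed items is at most $n\cdot m\cdot \epsilon B_{\max}/(mn)=\epsilon B_{\max}\le\epsilon\,\opt$ (each of the $\le n$ cheap items contributes its price to at most $m$ consumers' payments). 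Your rescaling factor $n/(\epsilon P)$ only has one of the two counts in it, and your zeroing threshold is $\epsilon/m<1$, so items with rescaled price in $[\epsilon/m,1)$ survive and violate the requirement $1\le\tilde p({\bf I})$; they also cannot be rounded to any $(1+\epsilon)^j$ with $j\ge 0$. If you instead raise your threshold to $1$ to fix this, the aggregate loss becomes (number of items)$\times$(number of consumers)$\times 1$, which under your scaling is an $\epsilon\cdot(\text{number of items})$ fraction of $\opt'$ --- unbounded. This is exactly why the lemma states $M=O(mn/\epsilon)$ rather than $O(n/\epsilon)$: the larger budget range is the price paid for being able to push every item price up to $1$.

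Two smaller problems. First, your per-consumer accounting for the zeroing step (``at most an $\epsilon$-fraction of her payment'') does not follow from the budget lower bound: a consumer with budget $\ge 1$ may pay far less than $1$ under the optimal price, and the earlier discard step filters by budget, not by payment. The loss has to be charged in aggregate against $\opt'$, as the paper does. Second, this lemma lives in the $2$-{\sf SMP} section: when you pull a price back to the original instance, setting discarded items to $\infty$ is wrong under single-minded semantics, since any consumer whose bundle contains such an item then cannot afford her bundle and you lose her entire payment; discarded items must be priced $0$. (Also watch the roles of $m$ and $n$: in this paper $m$ counts consumers and $n$ counts items, which flips several of your denominators.)
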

\begin{proof}
Let $B_{\max}$ be the maximum budget among all consumers. We first
remove all consumers whose budgets are less than $\epsilon
B_{\max}/mn$. Notice that we only lose the revenue of at most
$\epsilon B_{\max} \leq \epsilon \opt$ by this removal. We denote
the new set of consumers by $\cset'$. Now look at the optimal price
$p^*$ for the resulting instance. If for some ${\bf I} \in \iset$,
the price $p^*({\bf I})$ is less than $\epsilon B_{\max}/mn$, we
change its price to $p'({\bf I}) = 0$ and remove item ${\bf I}$
completely from the instance. Again, since each such item can only
be sold to at most $m$ consumers, discarding it only decreases the
revenue by $\epsilon B_{\max}/n$. There are at most $n$ such items,
so we lose a revenue of at most $\epsilon \opt$ in total. Let
$\iset'$ denote the resulting set of items.

Next we scale each consumer budget by $M'=mn/\epsilon B_{\max}$ to
get a new budget, i.e. $B'_C = M'B_C$. Now we have a complete
description of the instance $\pset'$ in which consumer budgets are
between $1$ and $M$. Let $\opt'$ be the optimal value of the new
instance. First we try to lower bound the value of $\opt'$. Consider
the same price $p^*: \iset' \rightarrow \R$ scaled up by a factor of
$M'$. The revenue from this price is at least $(1 - 2\epsilon)
M'\opt$, so we have that $\opt' \geq (1-2\epsilon)M'\opt$.

We are now ready to prove the second part. Assume that we have a
price $p'$ that gives $\alpha$-approximation for $\pset'$, so the
revenue collected by $p'$ is at least $\opt'/\alpha$. We construct
the price $p$ by scaling down the price of $p'$ by $M'$. Notice that
for each consumer ${\bf C}$ who can afford his consideration set in
$\pset'$ with price $p'(S_C)$, he can also afford his set in $\pset$
with price $p'(S_C) = p(S_C)/M'$. Therefore, the revenue collected
by $p$ is at least $\opt'/\alpha M' \geq (1-2\epsilon)\opt/ \alpha$.
This argument also implies that $\opt \geq \opt'/M'$.

Finally we show that there is a good solution $\tilde p$ that only
assigns prices in the form $(1+\epsilon)^j$, as follows. We round
down the price of $p^*$ to the nearest scale of $(1+\epsilon)^j$ for
some $j$. For each consumer $\Consumer$ who purchases item ${\bf I}$
w.r.t. price $p^*$, by scaling down every item price, she can still
afford her consideration set $S_{\Consumer}$, whose new price is at
least $p^*(S_{Consumer})/(1+\epsilon) \geq (1-\epsilon)
p^*(S_{\Consumer})$.
\end{proof}


From now on, we assume that our input instance and its optimal price
are in such format. Our goal is to devise a {\sf QPTAS} for this
instance. We note here that in some special cases of single-minded
pricing problems, especially the Highway problem, an even stronger
statement can be assumed; namely, that the optimal price is
integral~\cite{GuruswamiHKKKM05}. It seems that such a nice property
may not hold in our case, and we anyway do not need it.

\subsection{Partition tree}

We first construct a (almost balanced) binary tree $\tset$ where
each node in $\tset$ is associated with a rectangular region in the
plane (from now on, whenever we talk about region, we always mean a
rectangular one). We call this tree the {\em partition tree}. It can
be constructed recursively as follows. In the beginning, we have
$\tset=\set{r}$ where $r$ is the root of the tree whose region $A_r$
is the whole grid. We repeat the following process: For each leaf $v
\in \tset$, if the region $A_v$ of $v$ contains at least two items,
we choose a vertical grid line $\ell_v$ dividing the items in a
balanced manner to the left and right side. We then add the left
child $v'$ of $v$ with the region $A_{v'}$ being the region of $A_v$
on the left side of $\ell_v$. We also add the right child $v''$ of
$v$ associated with the region $A_{v''}$ on the right side of
$\ell_v$. We repeat the process until every leaf is associated with
a region containing only one item; see Figure~\ref{figure: tree}.

\danupon{It's better to define ``volume'' later because we need to
define ``region'' with horizontal lines first.} For each node $v \in
\tset$, we define the item set $\iset_v$ to be the set of all items
in the region $A_v$. Fix a price $p: \iset \rightarrow \R$. For any
region $A$, we define the ``volume'' $\vol_p(A)$ to be the total
price among all items in the region, i.e. $\vol_p(A) = \sum_{{\bf I}
\in A} p({\bf I})$. The following simple claim
is crucial in designing our algorithm.

\begin{claim}\label{claim:bound_sum}
Let $p^*$ be an optimal price. Then for any region $A$, there are
only $n^{O(\log m)}$ possible values of $\vol_{p^*}(A)$.
\end{claim}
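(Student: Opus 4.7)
\medskip

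\noindent\textbf{Proof plan.} The strategy is to exploit the discretization established in Lemma~\ref{lem:preprocess}: we may assume that $p^*$ only assigns to each item a value from the finite set
\[
\Lambda := \{0\} \cup \bigl\{(1+\epsilon)^j : 0 \le j \le O(\log m)\bigr\},
\]
so that $|\Lambda| = O(\log m)$. The plan is then a simple counting argument: for any fixed region $A$, the volume $\vol_{p^*}(A)$ is completely determined by how many items inside $A$ are priced at each level.

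\medskip

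\noindent The key step is to introduce, for each price level $j \in \{0,1,\ldots, O(\log m)\}$, the multiplicity
\[
k_j(A) \;=\; \bigl|\{{\bf I} \in A \cap \iset : p^*({\bf I}) = (1+\epsilon)^j\}\bigr|,
\]
which satisfies $0 \le k_j(A) \le |A \cap \iset| \le n$. By definition,
\[
\vol_{p^*}(A) \;=\; \sum_{j=0}^{O(\log m)} k_j(A)\cdot (1+\epsilon)^j,
\]
so $\vol_{p^*}(A)$ depends on $p^*$ only through the integer vector $\bigl(k_0(A), k_1(A), \ldots, k_{O(\log m)}(A)\bigr)$.

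\medskip

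\noindent Since each coordinate of this vector takes at most $n+1$ values and there are $O(\log m)+1$ coordinates, the number of distinct such vectors (and hence the number of possible volumes) is bounded by
\[
(n+1)^{O(\log m)} \;=\; n^{O(\log m)}.
\]
That establishes the claim. The only subtlety is that the $p^*$ in the statement really refers to the near-optimal price whose structure is guaranteed by Lemma~\ref{lem:preprocess}; without that discretization the claim would be false, since arbitrary real prices give a continuum of volumes. Once the discretization is in hand, there is no obstacle — it is purely a counting argument, and the main work of this subsection has already been done in the preprocessing lemma.
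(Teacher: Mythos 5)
Your proof is correct and is essentially identical to the paper's: both count, for each discretized price level $(1+\epsilon)^j$, the number of items in $A$ at that level and observe that the volume is determined by this integer vector, giving $n^{O(\log m)}$ possibilities. Your remark that the claim implicitly relies on the discretization from Lemma~\ref{lem:preprocess} matches the paper's standing assumption made just after that lemma.
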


\begin{proof}
Let $x_j$ denote the number of items ${\bf I}$ in $A$ with price
$p^*({\mathbf I}) = (1+\epsilon)^j$. Notice that we can write the
volume of $A$ as $\sum_{j=1}^{q} x_j (1+\epsilon)^j$ where $x_j$
only takes non-negative integer values at most $n$. So we have at
most $n^{O(\log m)}$ possibilities for the volume.
\end{proof}

\subsection{Horizontal partition and local profile}

From the construction, each node $v$ of the partition tree, is
associated with a vertical line $\ell_v$ which divides the plane
into two region. We further partition the right region using
vertical line, as follows.

Consider a non-leaf node $v \in \tset$ with left child $v'$ and
right child $v''$. A {\em horizontal partition} for node $v$,
denoted by $H_v$, is a collection of (not-necessarily distinct)
horizontal lines $\ell^v_1,\ldots, \ell^v_q$, partitioning the
region of $A_{v''}$ into many pieces; note that the left endpoints
of these lines are on $\ell_v$.\danupon{Need a picture here.} The
line $\ell^v_j$ is supposed to mark the highest $y$-coordinate such
that the volume inside $A_{v''}$ above $\ell^v_j$ is at least
$(1+\epsilon)^j$. Notice that each node $v$ has at most $n^{O(\log
m)}$ feasible partitions since there are at most $n$ possibilities
for the choice of each $\ell^v_j$.

Now if we fix a horizontal partition of every non-leaf node in the
partition tree, we can define {\em minimal} regions for each
non-leaf node $v$ as follows. For each node $v$, we consider all
vertical and horizontal lines associated with $v$ and all its
ancestors (i.e., all lines in $\ell_u$ and $H_u$ where $u=v$ or $u$
is an ancestor of $v$). Let $\lset_v$ denote the set of these lines.
$\lset_v$ naturally defines minimal regions: We say that a region
$A$ is minimal with respect to $\lset_v$ if and only if $A$ is a
rectangle whose four boundaries are the lines in $\lset_v$, and
there is no line in $\lset_v$ that intersects with the interior of
$A$.


Now, we define a {\em local profile} of a node $v$. It consists of
(i) horizontal partitions for $v$ and for all its ancestors, and
(ii) numbers on every minimal region resulting from vertical and
horizonal lines. The numbers are supposedly the ``volume guesses''
of every minimal region of $v$.

Now we try to guess the ``right'' local profile of every node in the
partition tree. We show that if this guess is right, then we get a
good approximation of the optimal solution. Moreover, we can use
dynamic programming to make the right guess.

\subsection{Dynamic Programming Solution}
%

%


A {\em global profile} (or just {\em profile} in short) of a node
$v$ consists of the local profile of $v$ and all its ancestors in
such a way that the volumes of minimal regions of $v$ is consistent
with its ancestors. More formally, fix a node $v$. A profile $\Pi_v$
for $v$ consists of, for any ancestor $v'$ of $v$, $\Pi_{v, v'}$
which is the local profile that node $v$ wants its ancestor $v'$ to
have (we also think of $v$ has an ancestor of itself for
convenience).
As a reminder, for each ancestor $v'$ of $v$, local profile $\Pi_{v,
v'}$ some horizontal partition $H_{v'}$ and the ``volume guess''of
each minimal region of $v$.
%
%
%
%
In addition, we restrict that these local profiles $\Pi_{v, v'}$
have to be consistent in themselves in the following sense. For each
vertex $v'$, for any minimal region $A'$ of $\Pi_{v,v'}$ that is
further partitioned into minimal regions $A'_1, A'_2,\ldots,
A'_{\gamma}$ of $\Pi_{v,v''}$ for some descendant $v''$ of $v'$, the
number $z_{A'}$ at $\Pi_{v,v'}$ is equal to the sum of the numbers
$z'_{A_j}$ of $\Pi_{v,v''}$.\danupon{This is still not clear and a
bit informal.}

We argue that the number of global profiles for each node is not too
large, i.e. only $n^{\operatorname{poly} \log m}$. There are
$n^{O(\log m)}$ horizontal partitions for each ancestor $v'$ of $v$,
making a total of $n^{O(\log m \log n)}$ possibilities of the lines
$\ell^{v'}_j$. Now fix a choice of such horizontal partitions. If we
draw all lines $\ell^{v'}_j$ involved in the global profiles, we
will see a number of regions formed by intersections between these
lines and the vertical lines $\ell_{v''}$. Since there are $O(\log m
\log n)$ such horizontal lines and $O(\log n)$ vertical lines
involved, we have at most $O(\log m \log^2 n)$ minimal rectangular
regions. Each region has at most $n^{O(\log m)}$ possible volumes,
so there are at most $n^{O(\log^2 m \log^2 n)}$ global profiles for
each node $v$\parinya{We implicitly used the fact that $m \leq O(n)$. Have to say
it somewhere}.

Now we define a {\em valid tree profile} $\Pi$ for $\tset$ as the
set of global profiles $\set{\Pi_v}_{v \in \tset}$ such that $\Pi_v$
is a global profile for node $v$. Moreover, for every parent-child
pair $v, v'$ where $v$ is a parent of $v'$ in $\tset$, the profile
$\Pi_{v'}$ agrees with $\Pi_v$. That is, all profiles about
ancestors of $v$ in $\Pi_v$ and $\Pi_{v'}$ are exactly the same.

Given a valid tree profile $\Pi$, we have the notion of cost of the
profile $\Pi$ (denoted by $\mbox{\sf Cost}(\Pi)$) which is supposed
to approximate the total revenue we can collect by a price function
consistent with $\Pi$. The cost of a profile can be computed as
follows. For each node $v \in \tset$, let $\cset_v$ be the set of
all consumers on line $\ell_v$. For each consumer ${\bf C} \in
\cset_v$, the rectangular region enclosed by horizontal line ${\bf
C}[2]$ and vertical line $\ell_v$ 
is the actual amount the consumer needs to pay. This is the amount
we do not know, but we can approximate: We let $v_0, v_1,\ldots,
v_{\alpha}$ be a sequence of ancestors of $v$ such that $v$ is on
the left subtree of $v_i$ (in the order from $v$ to the root), where
$v_0= v$. And we let for each $i$, $j_i$ be the maximum number such
that $\ell^{v_i}_{j_i}$ does not lie below ${\bf C}[2]$. The cost of
consumer ${\bf C}$ is just the sum $\sum_{i=0}^{\alpha}
(1+\epsilon)^{j_i}$ if $B_C \leq \sum_{i=0}^{\alpha}
(1+\epsilon)^{j_i}$ and zero otherwise. The cost at node $v$ is just
the total cost of all consumers in $\cset_v$, and the cost of the
profile is the sum of the cost over all nodes $v \in \tset$.

\begin{figure}
\centering
\subfigure[Tree Decomposition up to depth $2$ where the shaded
region denotes $A_v$]{
\includegraphics[height=5cm]{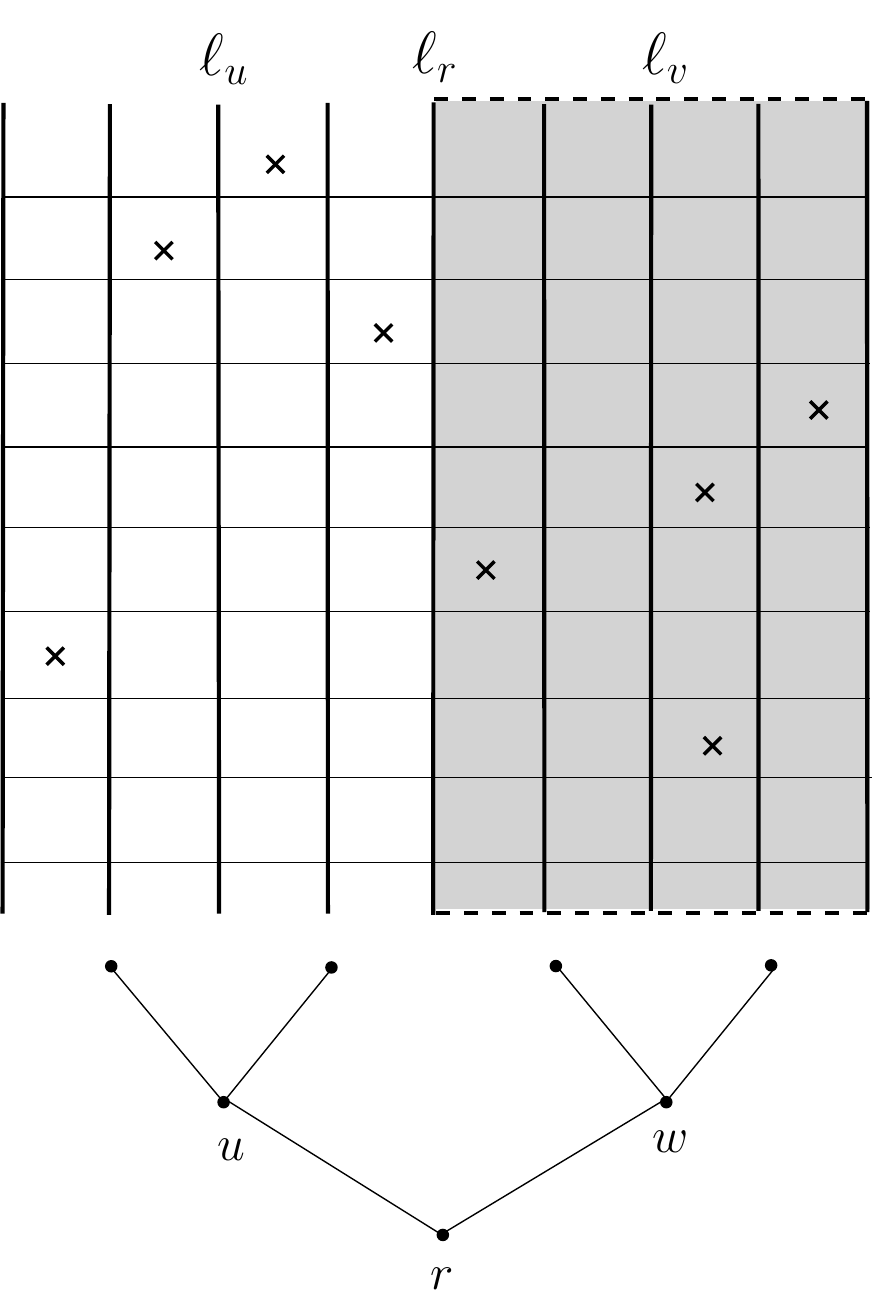}\label{figure:
tree} }
\hspace{.05\textwidth}
\subfigure[The actual volume]{
\includegraphics[height=5cm]{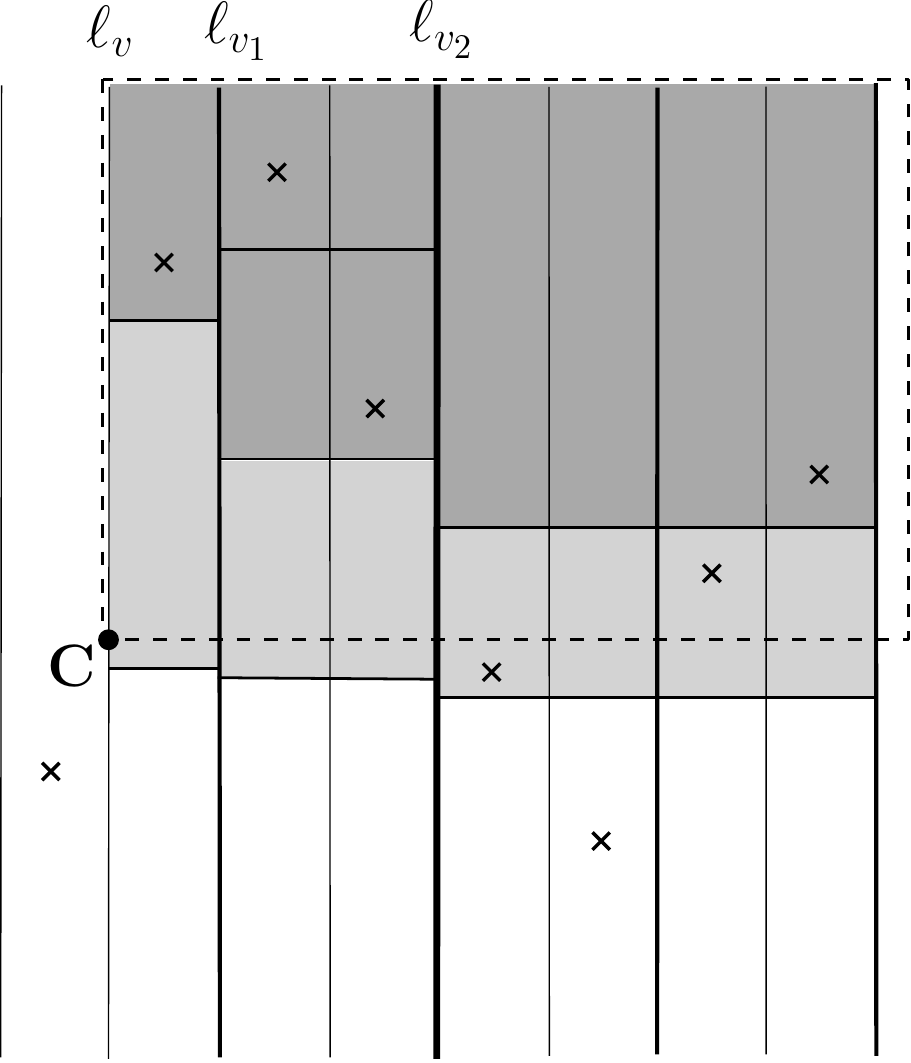}
\label{fig:area1} }
\hspace{.05\textwidth}
\subfigure[The approximate volume]{
\includegraphics[height=5cm]{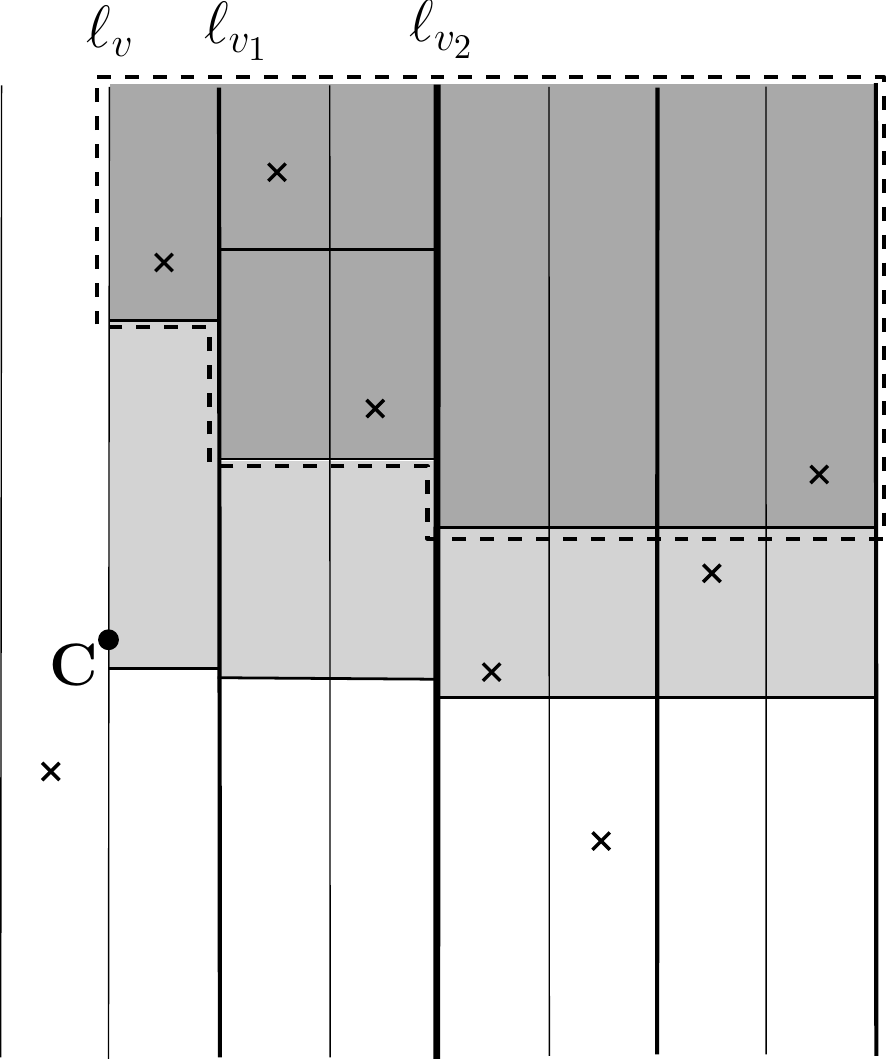}
\label{fig:area2}}
\caption{Computing the cost of consumer ${\bf C}$}\label{fig:area}
\end{figure}

\begin{lemma}\label{lem:exist_good_tree}
There is a valid tree profile $\Pi^*$ such that the cost is at least
$(1-\epsilon)\opt$.
\end{lemma}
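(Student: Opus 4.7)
The plan is to exhibit $\Pi^*$ by reading it off directly from the near-optimal price $p^*$ guaranteed by Lemma~\ref{lem:preprocess}, so that consistency is automatic and only the cost comparison requires work. Fix such a $p^*$, whose item prices are in $\{0\} \cup \{(1+\epsilon)^j : j = 0, 1, \ldots, O(\log m)\}$. For every node $v \in \tset$ with right child $v''$, define the horizontal partition $H^*_v = (\ell^{v,*}_0, \ldots, \ell^{v,*}_q)$ by letting $\ell^{v,*}_j$ be the highest $y$-coordinate such that $\vol_{p^*}(A_{v''} \cap \{y \geq \ell^{v,*}_j\}) \geq (1+\epsilon)^j$. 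For every minimal region arising in any local profile, assign its volume guess to be its actual volume under $p^*$. Let $\Pi^*_v$ collect these data across $v$ and all its ancestors (with $v$ viewed as an ancestor of itself), and set $\Pi^* = \{\Pi^*_v\}_{v \in \tset}$.

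Second, verifying that $\Pi^*$ is valid is essentially bookkeeping. Within any single $\Pi^*_v$, if a minimal region $A'$ at some ancestor is subdivided into minimal regions $A'_1, \ldots, A'_\gamma$ at a deeper ancestor, then $\vol_{p^*}(A') = \sum_j \vol_{p^*}(A'_j)$ by additivity of $\vol_{p^*}$ over disjoint regions, so the volume guesses are consistent. Between a parent $v$ and child $v'$ in $\tset$, the common ancestors receive identical local profiles in $\Pi^*_v$ and $\Pi^*_{v'}$ because both are defined from the same $p^*$.

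Third, I would bound the cost of $\Pi^*$ from below consumer by consumer. Let $\Consumer \in \cset_v$ be a consumer who purchases under $p^*$, so $B_\Consumer \geq \vol_{p^*}(S_\Consumer)$. Let $v = v_0, v_1, \ldots, v_\alpha$ be the ancestors of $v$ at which $v$ lies in the left subtree. The key geometric fact is that the strips $A_{v_i''}$ tile exactly the vertical strip to the right of $\ell_v$: each $A_{v_i''}$ is a vertical strip whose left boundary is $\ell_{v_i}$ and whose right boundary is $\ell_{v_{i+1}}$ (or the right edge of the plane when $i = \alpha$). This follows because every ancestor strictly between $v_i$ and $v_{i+1}$ on the path to the root has $v$ in its right subtree, so each such step is a right-child move and the right boundary of $A_{v_i}$ is preserved all the way up to $\ell_{v_{i+1}}$. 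Consequently $\vol_{p^*}(S_\Consumer) = \sum_{i=0}^{\alpha} V_i$, where $V_i := \vol_{p^*}(A_{v_i''} \cap \{y \geq \Consumer[2]\})$.

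Fourth, let $j_i$ be the maximum $j$ with $\ell^{v_i,*}_j$ not below $\Consumer[2]$. By definition of $H^*_{v_i}$, the volume above $\ell^{v_i,*}_{j_i}$ is at least $(1+\epsilon)^{j_i}$ while the volume above any strictly higher line, in particular above $\Consumer[2]$ and more tightly above $\ell^{v_i,*}_{j_i+1}$, is less than $(1+\epsilon)^{j_i+1}$, giving $(1+\epsilon)^{j_i} \leq V_i < (1+\epsilon)^{j_i+1}$. Summing over $i$ yields $\sum_{i=0}^\alpha (1+\epsilon)^{j_i} \leq \vol_{p^*}(S_\Consumer) \leq B_\Consumer$, so $\Consumer$'s affordability check in the cost formula passes and she contributes $\sum_i (1+\epsilon)^{j_i} \geq \vol_{p^*}(S_\Consumer)/(1+\epsilon)$. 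Summing over all paying consumers and using Lemma~\ref{lem:preprocess} gives $\mbox{\sf Cost}(\Pi^*) \geq \mathrm{rev}(p^*)/(1+\epsilon) \geq (1-O(\epsilon))\opt$; rescaling $\epsilon$ yields the stated $(1-\epsilon)$ bound. The main obstacle is the geometric tiling claim in step three; the rest reduces to additivity of $\vol_{p^*}$ and the dyadic approximation afforded by the powers of $(1+\epsilon)$.
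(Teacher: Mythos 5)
Your proposal is correct and uses exactly the paper's construction: read the horizontal partitions and volume guesses off the optimal (discretized) price $p^*$, so validity follows from additivity of $\vol_{p^*}$. In fact your write-up is more complete than the paper's proof, which stops after asserting validity and never verifies the $(1-\epsilon)\opt$ cost bound; your dyadic tiling of the half-plane right of $\ell_v$ by the strips $A_{v_i''}$ and the sandwich $(1+\epsilon)^{j_i}\le V_i<(1+\epsilon)^{j_i+1}$ supply precisely the missing argument (with the affordability condition read as $B_\Consumer\ge\sum_i(1+\epsilon)^{j_i}$, which is clearly the intended direction of the paper's cost formula).
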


\begin{proof}
We start from the optimal price $p^*$ and construct the valid
profile as follows. For each node $v$, we define a feasible
partition of $v$ by choosing the line $\ell^v_j$ to be at the
highest $y$-coordinate such that the total volume enclosed is at
least $(1+\epsilon)^j$. Then we create a profile $\Pi^*_v$ for each
node $v$ according to the actual volume of each minimal region.
Notice that this gives a valid tree profile.
\end{proof}

Our goal now is to compute the valid profile $\Pi$ of maximum cost
by dynamic programming, and the profile will automatically suggest a
near-optimal pricing.

\paragraph{Computing the Solution:} Let $v \in \tset$.
We say that a price $p: \iset_v \rightarrow \R$ is consistent with
global profile $\Pi_v$ if and only if for every minimal region $A$
of $\Pi_v$ that is completely contained in $A_v$, we have $\vol_p(A)
= z_A$. The minimum cost profile can be computed in a bottom-up
fashion, as follows. For a leaf node $v$, a global profile for $v$
automatically determines the price of the only item in $A_v$;
discard a profile which does not have consistent price.

The following lemma shows that a price $p$ consistent with a valid
tree profile $\Pi$ can be computed from $\Pi$.

\begin{lemma}\label{lem:consistent}
For each node $v$ with left child $v'$ and right child $v''$, let
$p':\iset_{v'}\rightarrow \R$ and $p'':\iset_{v''}\rightarrow \R$ be
the prices that are consistent with the profile $\Pi_{v'}$ and
$\Pi_{v''}$ respectively. Then the price $p: \iset_v \rightarrow \R$
defined to agree with $p'$ on $\iset_{v'}$ and with $p''$ on
$\iset_{v''}$, is consistent with $\Pi_v$.
\end{lemma}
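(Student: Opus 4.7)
The plan is to show that consistency at the parent node $v$ can be verified region-by-region, and that every minimal region of $\Pi_v$ lying inside $A_v$ is already ``handled'' by exactly one of the two children's profiles. First I would fix an arbitrary minimal region $A$ of $\Pi_v$ with $A \subseteq A_v$ and argue that $A$ is entirely contained in either $A_{v'}$ or $A_{v''}$. The reason is that the line $\ell_v$ belongs to $\lset_v$ (it is associated with $v$, which counts as its own ``ancestor'' in the local-profile setup), so no minimal region of $\Pi_v$ can straddle $\ell_v$; meanwhile the vertical lines $\ell_u$ for proper ancestors $u$ of $v$ do not enter $A_v$, and the horizontal lines from $H_v$ and from $H_u$ (ancestor $u$) intersect $A_v$ cleanly. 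Thus every minimal region of $\Pi_v$ inside $A_v$ lies strictly on one side of $\ell_v$.

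Next I would handle each of the two cases symmetrically; let me do the left one. Suppose $A \subseteq A_{v'}$. Since $\Pi$ is a valid tree profile, $\Pi_{v'}$ agrees with $\Pi_v$ on all ancestors of $v$ (including $v$ itself), so every line that cuts $A_v$ in $\Pi_v$ also appears in $\Pi_{v'}$; in addition $\Pi_{v'}$ has its own local lines $H_{v'}$ and the vertical $\ell_{v'}$, which may further subdivide $A$. Therefore $A$ is the disjoint union of some collection of minimal regions $A_1,\dots,A_k$ of $\Pi_{v'}$, all contained in $A_{v'}$. The global-profile consistency condition forces $z_A = \sum_{i=1}^k z_{A_i}$. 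Because $p'$ is consistent with $\Pi_{v'}$, we have $\vol_{p'}(A_i) = z_{A_i}$ for each $i$, and since $p$ coincides with $p'$ on $\iset_{v'}$, we conclude
\[
\vol_p(A) \;=\; \sum_{i=1}^k \vol_{p'}(A_i) \;=\; \sum_{i=1}^k z_{A_i} \;=\; z_A.
\]
The case $A \subseteq A_{v''}$ is completely analogous, using $\Pi_{v''}$ and $p''$.

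Since $A$ was an arbitrary minimal region of $\Pi_v$ contained in $A_v$, this shows $p$ is consistent with $\Pi_v$, which is precisely the lemma's conclusion. The only mildly delicate point is the geometric claim that every minimal region of $\Pi_v$ inside $A_v$ lies on one side of $\ell_v$ and that $A$ decomposes cleanly into minimal regions of the relevant child profile; this is really just bookkeeping about which lines in $\lset_v$, $\lset_{v'}$, $\lset_{v''}$ cut through $A_v$, $A_{v'}$, $A_{v''}$, and it follows directly from the definitions together with the ``agree on ancestors'' clause in the definition of a valid tree profile.
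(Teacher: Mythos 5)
Your proof is correct and follows essentially the same route as the paper's: fix a minimal region $A$ of $\Pi_v$, decompose it into minimal regions of the relevant child's profile, use the agreement condition in the definition of a valid tree profile to get $z_A = \sum_i z_{A_i}$, and conclude via the child's price consistency. The only difference is that you explicitly justify that $A$ cannot straddle $\ell_v$ (a point the paper leaves implicit), which is a harmless and welcome addition.
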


\begin{proof}
Consider a minimal region $A \sse A_v$ and a volume guess $z_A$ in
$\Pi_v$. If $A \sse A_{v'}$ where $A$ is the union of minimal
regions $A'_1,\ldots A'_{\gamma}$ of $\Pi_{v'}$ (similar argument
can be made in case $A \sse A_{v''}$), then by assumption that
$\Pi_v$ is consistent with $\Pi_{v'}$, we know that the total value
$z_A = \sum_{j=1}^{\gamma} z'_{A'_j}$. Since $p'$ is consistent with
the profile $\Pi_{v'}$, we have that $\vol_{p}(A) = \vol_{p'}(A) =
\sum_{j} \vol_{p'}(A'_j) = \sum_j z'_{A'_j} = z_A$ as desired.
\end{proof}

We have shown that a valid tree profile $\Pi$ always has a price $p$
consistent with it. The following lemma 
basically says that this price $p$ gives a revenue close to the cost
of the profile, which will in turn imply that the maximum cost
profile gives the revenue of at least $(1-O(\epsilon)) \opt$.

\begin{lemma}\label{lem:tree-revenue}
For any valid tree profile $\Pi$, let $p$ be a price consistent with
$\Pi$ and let $p' = p/(1+\epsilon)$. Then $p'$ collects revenue at
least $(1-\epsilon)$ fraction of the profile cost.
\end{lemma}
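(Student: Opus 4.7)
The plan is to establish the lemma by a per-consumer calculation. Fix a valid tree profile $\Pi$ and a price $p$ consistent with it, and set $p' = p/(1+\epsilon)$. For each node $v \in \tset$ and each consumer $\Consumer \in \cset_v$ that contributes nonzero cost $S_\Consumer := \sum_{i=0}^{\alpha}(1+\epsilon)^{j_i}$ to $\mbox{\sf Cost}(\Pi)$, I will show that (i) $\Consumer$ can afford her consideration set under $p'$, and (ii) the payment extracted from $\Consumer$ under $p'$ is at least $(1-\epsilon) S_\Consumer$. Summing over all contributing consumers across all nodes then gives the claim.

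The first step is a geometric decomposition of $\Consumer$'s consideration set. Since $\Consumer \in \cset_v$ lies on the vertical line $\ell_v$ at height $\Consumer[2]$, her consideration set is the quadrant $\{x \geq \ell_v,\ y \geq \Consumer[2]\}$. Letting $v = v_0, v_1, \ldots, v_\alpha$ be the ancestors of $v$ (including $v$ itself) such that $v$ lies in the left subtree of $v_i$, the right regions $A_{v_0''}, A_{v_1''}, \ldots, A_{v_\alpha''}$ are pairwise disjoint and together cover $\{x \geq \ell_v\}$; intersecting each with $\{y \geq \Consumer[2]\}$ partitions the consideration set into $\alpha+1$ rectangular slabs whose $p$-volumes $V_0, \ldots, V_\alpha$ sum to the total price $\Consumer$ must pay under $p$.

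Next, I would use the defining property of the horizontal partition lines recorded by $\Pi$. Because $p$ is consistent with $\Pi$ and $\ell^{v_i}_{j_i}$ is the highest horizontal line in $A_{v_i''}$ with $p$-volume above it at least $(1+\epsilon)^{j_i}$, and $j_i$ is the largest index with $\ell^{v_i}_{j_i} \geq \Consumer[2]$ (so $\ell^{v_i}_{j_i+1} < \Consumer[2]$), monotonicity of the volume-above-line function yields $(1+\epsilon)^{j_i} \leq V_i < (1+\epsilon)^{j_i+1}$. Summing over $i$ gives
\[
S_\Consumer \;\leq\; \sum_{i=0}^{\alpha} V_i \;<\; (1+\epsilon)\, S_\Consumer.
\]

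The final step combines these bounds with the rescaling. Under $p'$, the total price of $\Consumer$'s consideration set equals $\bigl(\sum_i V_i\bigr)/(1+\epsilon) \leq S_\Consumer \leq B_\Consumer$, where the last inequality holds because $\Consumer$ contributes to the profile cost; thus $\Consumer$ affords the bundle under $p'$ and pays at least $S_\Consumer/(1+\epsilon) \geq (1-\epsilon) S_\Consumer$, using $1/(1+\epsilon) \geq 1-\epsilon$. Summing over all contributing consumers and all nodes $v \in \tset$ shows $p'$ extracts revenue at least $(1-\epsilon)\,\mbox{\sf Cost}(\Pi)$. The only part requiring care is verifying the exact decomposition $\{x \geq \ell_v\} = \bigsqcup_{i=0}^{\alpha} A_{v_i''}$ of the right half-plane induced by the partition tree; everything else is a short monotonicity argument plus one inequality.
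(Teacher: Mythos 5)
The paper does not actually contain a proof of this lemma (the source holds only an author note that the proof remains to be written), so your argument must stand on its own --- and in essence it does: it is the natural and, as far as one can tell, the intended argument. The decomposition $\{x\geq \ell_v\}\cap A_{\mathrm{root}}=\bigsqcup_{i=0}^{\alpha}A_{v_i''}$ is correct, since the sibling regions of ancestors for which $v$ lies in the \emph{right} subtree fall entirely to the left of $\ell_v$ and never meet the consideration set; the sandwich $(1+\epsilon)^{j_i}\le V_i<(1+\epsilon)^{j_i+1}$ is the right estimate; and affordability under $p'$ together with the bound $1/(1+\epsilon)\ge 1-\epsilon$ finishes the per-consumer accounting. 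One small point: you read the cost definition as charging $\sum_i(1+\epsilon)^{j_i}$ when this sum is at most $B_\Consumer$; the paper's text states the reverse inequality, which must be a typo (under the literal reading the lemma is false), but you should say explicitly that you are correcting it.

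The one substantive issue is the step where you assert that $\ell^{v_i}_{j_i}$ \emph{is} the highest horizontal line with $p$-volume at least $(1+\epsilon)^{j_i}$ above it ``because $p$ is consistent with $\Pi$.'' That does not follow from the paper's formal definition of consistency, which constrains only the volume guesses $z_A$ of the minimal regions and says nothing about where the horizontal lines sit relative to those volumes; the threshold property is only what the lines are ``supposed to'' encode. Taken literally, a valid tree profile could place its lines arbitrarily, and then your lower bound $V_i\ge(1+\epsilon)^{j_i}$ (hence the lemma itself) fails. The repair is routine but should be stated: the region of $A_{v_i''}$ above each $\ell^{v_i}_j$ is a union of minimal regions whose volumes the profile determines, so the threshold property is checkable from the profile alone; one restricts attention (and the dynamic program) to profiles satisfying it, and the profile constructed in Lemma~\ref{lem:exist_good_tree} does. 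With that restriction made explicit --- plus the standing boundary assumptions that all prices lie in $[1,M]$ and that the number of lines $q$ satisfies $(1+\epsilon)^{q+1}>nM$, which handle the cases $j_i=q$ or an empty slab --- your proof is complete.
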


\danupon{Still need to prove this lemma.}




\section{Omitted hardness results}\label{sec:omitted_hardness}

\subsection{Hardness of 3-{\sf UUDP-MIN} and 4-{\sf UUDP-MIN}}

In this section we show that $3$-{\sf UUDP-MIN} is \NP-hard, and
$4$-{\sf UUDP-MIN} is \APX-hard by a reduction from {\sf Vertex Cover}. Our
reduction relies on the concepts of adjacency poset and its
embedding into Euclidean space. We describe basic terminologies
here. Given a graph $G=(V,E)$, an adjacency poset $(V \cup E,
\preceq_G)$ of graph $G$ can be constructed as follows: First we
define a poset with its maximal elements corresponding to vertices in $V$ and its minimal
elements corresponding to edges $E$. For each vertex $v$ and each edge $e$, we
have the relation $e \preceq_G v$ if and only if vertex $v$ is an endpoint of
$e$. We say that a map $\phi: V \cup E \rightarrow \R^d$ is an {\em
embedding} of adjacency poset $(V \cup E, \preceq_G)$ into $\R^d$
if and only if it preserves the relations $\preceq_G$, i.e., for any two
elements $a,b \in V \cup E$, we have that $a \preceq_G b$ iff $\phi(a)[i] \leq \phi(b)[i]$ for all coordinates $i \in [d]$.

Now we describe our reductions. Since two reductions are essentially the same, we give a general procedure which will imply both results. Given an instance $G=(V,E)$ of {\sf Vertex Cover}, we first construct an adjacency poset $(V \cup E, \preceq_G)$ for $G$,
and then we compute the embedding $\phi$ of this poset into Euclidean space $\R^d$. We will use the graph $G$, as well as the embedding $\phi$, to define the instance of $d$-{\sf UUDP-MIN} as follows:

\begin{itemize}
\item {\bf Consumers:} We have two types of consumers, i.e. the rich consumers and the poor ones. For each vertex $v \in V$, we create a {\em rich} consumer $C_v$ with budget $2$ at coordinates $\phi(v)$. For each edge $e \in E$, we create a {\em poor} consumer $C_e$ with budget $1$ at coordinates $\phi(e)$.

\item {\bf Items:} For each vertex $v \in V$, we create item ${\bf I_v}$ at coordinates $\phi(v)$.
\end{itemize}
Note that for each $e=(u,v)$, each poor consumer $C_e$ has $S_{C_{e}} = \set{\Item_v, \Item_u}$, while each rich consumer $C_v$ has $S_{C_v} = \set{\Item_v}$. We denote the resulting instance by $(\cset, \iset)$.

The following lemma gives a characterization of the optimal solution for $(\cset, \iset)$. It says that we may assume without loss of generality that every poor consumer gets some item.

\begin{lemma}
For any price $p$ that is a solution for $(\cset, \iset)$ constructed above, we can transform $p$ to $p'$ such that every poor consumer buys some item with respect to $p'$, and the revenue of $p'$ is at least as much as the revenue of $p$.
\end{lemma}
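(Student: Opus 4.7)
My plan is to normalize $p$ so all prices lie in $\{1,2\}$, then show that we can iteratively modify $p$ into a price that corresponds to a vertex cover of $G$, all without decreasing revenue. Once the corresponding vertex set is a cover, every edge has at least one endpoint whose item is priced $1$, so every poor consumer buys.

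\textbf{Step 1 (Normalization).} By the standard observation for {\sf UUDP-MIN} that prices may be restricted to the set of consumer budgets $\{1,2\}$, I can assume $p(\Item_v)\in\{1,2\}$ for every $v\in V$. Concretely, if $p(\Item_v)>2$, reducing to $2$ does not affect any poor consumer (whose budget is $1$) but lets $C_v$ pay $2$ instead of nothing; if $p(\Item_v)\in(1,2)$, raising to $2$ only increases $C_v$'s payment; if $p(\Item_v)\in(0,1)$, raising to $1$ strictly increases the rich consumer's contribution by $1-p(\Item_v)$ and can only preserve or increase each adjacent poor consumer's payment, since a poor consumer $C_e$ with $e=(u,v)$ pays $\min\{p(\Item_u),p(\Item_v)\}$ whenever that minimum is $\leq 1$, and raising one argument of a minimum to at most $1$ never decreases it below $1$.

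\textbf{Step 2 (Reformulation in terms of a vertex set).} Let $U=\{v\in V:p(\Item_v)=1\}$ and $k=|V\setminus U|$. Each rich consumer always buys (since both possible prices are at most the budget $2$), contributing $|V|+k$ in total. A poor consumer $C_e$ with $e=(u,v)$ buys exactly when at least one of $u,v$ lies in $U$, in which case she pays $1$. Writing $M(U)$ for the number of edges of $G$ with both endpoints in $V\setminus U$, the revenue equals
\[
\mathrm{Rev}(U) \;=\; |V|+k+|E|-M(U).
\]

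\textbf{Step 3 (Cover improvement).} If $U$ is not a vertex cover of $G$, pick any uncovered edge $e=(u,v)$ with $u,v\notin U$, and replace $p$ by $p'$ agreeing with $p$ everywhere except $p'(\Item_u)=1$. Under $p'$, the new set $U'=U\cup\{u\}$ satisfies $|V\setminus U'|=k-1$ and $M(U')\leq M(U)-1$, since at least the edge $e$ (previously uncovered) becomes covered. Therefore
\[
\mathrm{Rev}(U')-\mathrm{Rev}(U) \;=\; -1 + \bigl(M(U)-M(U')\bigr) \;\geq\; 0.
\]
Iterating this transformation at most $|V|$ times produces a price $p'$ whose corresponding set is a vertex cover, and under $p'$ every poor consumer $C_e$ buys the cheaper of her two items at price $1$.

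The only nontrivial step is the normalization: one must verify that raising an item price in $(0,1)$ up to $1$ never decreases the revenue collected from poor consumers. The argument above handles this via the observation that the minimum of two numbers behaves monotonically under raising, and the subsequent counting in Steps 2--3 is purely combinatorial.
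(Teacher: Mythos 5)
Your proof is correct and follows essentially the same route as the paper: normalize prices to $\{1,2\}$ and then repeatedly lower to $1$ the price of one endpoint of an uncovered edge, observing that the loss of at most $1$ from the rich consumer is compensated by the gain of $1$ from the newly served poor consumer. Your version is somewhat more careful than the paper's, which relegates the $\{1,2\}$ normalization to a parenthetical remark; your explicit revenue formula $\mathrm{Rev}(U)=|V|+k+|E|-M(U)$ also cleanly handles the fact that other poor consumers are unaffected.
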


\begin{proof}
Consider edge $e=(u,v)$. Suppose poor consumer $C_e$ does not get
any item, so it implies that both items ${\bf I}_u$ and ${\bf I}_v$
have price $p({\bf I}_u)= p({\bf I}_v) =2$ (recall that, since budgets are $1$ or $2$, the optimal prices would never set prices that are not in $\set{1,2}$). We define the price function $p'$ by setting
$p'({\bf I}_u) = 1$ while $p'({\bf I}_w) = p({\bf I}_w)$ for all
other vertices $w \in V \setminus \set{u}$. The only rich consumer that gets affected is $C_u$, whose payment may decrease by one. However, we earn the revenue of one back from poor consumer $C_e$. For $e' \in E: e'\neq e$, poor consumer $C_{e'}$ is never affected because his budget is one. Overall, changing the price from $p$ to $p'$ never decreases revenue.
\end{proof}

Let $p^*$ be the optimal price for $(\cset, \iset)$ and ${\sf VC}(G)$ denote the size of
minimum vertex cover of $G$. We show the following connection
between the size of minimum vertex cover and the optimal revenue
collected by $p^*$.

\begin{theorem}
The optimal revenue collected by $p^*$ is exactly $2n-{\sf VC}(G)
+m$.
\end{theorem}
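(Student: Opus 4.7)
The plan is to show matching upper and lower bounds on $p^*$ by identifying the set of items priced at $1$ with a vertex cover of $G$. As noted in the preceding lemma, we may assume without loss of generality that the price of every item ${\bf I}_v$ lies in $\{1,2\}$ (higher prices are useless since all budgets are $1$ or $2$) and that every poor consumer ends up purchasing an item. Given any such price $p$, let $P_1(p) = \{v \in V : p({\bf I}_v) = 1\}$ and $P_2(p) = V \setminus P_1(p)$.

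The first step is to observe that $P_1(p^*)$ must be a vertex cover of $G$. Indeed, for every edge $e=(u,v)$, the poor consumer $C_e$ has budget $1$ and $S_{C_e} = \{{\bf I}_u,{\bf I}_v\}$, so in order for $C_e$ to buy anything at least one of the two items must have price $1$; by the assumption that every poor consumer buys an item, this forces at least one of $u,v$ to lie in $P_1(p^*)$. Thus $|P_1(p^*)| \geq {\sf VC}(G)$.

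The second step is to compute the revenue collected by any valid price $p$ in terms of $|P_1(p)|$. Each rich consumer $C_v$ has $S_{C_v} = \{{\bf I}_v\}$ and budget $2$, so $C_v$ contributes exactly $p({\bf I}_v) \in \{1,2\}$ to the revenue; summing over all $v \in V$ gives $|P_1(p)| + 2(n - |P_1(p)|) = 2n - |P_1(p)|$. Each poor consumer $C_e$ purchases the cheapest item in $S_{C_e}$, which (provided $P_1(p)$ covers $e$) costs exactly $1$, contributing $m$ total. Hence the total revenue equals $2n - |P_1(p)| + m$, which combined with the lower bound on $|P_1(p^*)|$ yields ${\rm rev}(p^*) \leq 2n - {\sf VC}(G) + m$.

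For the matching lower bound, take a minimum vertex cover $U \subseteq V$ of size ${\sf VC}(G)$ and define $\tilde p({\bf I}_v) = 1$ if $v \in U$ and $\tilde p({\bf I}_v) = 2$ otherwise. Since $U$ is a vertex cover, every poor consumer $C_e$ sees a cheapest item of price exactly $1$ in its consideration set and hence pays $1$; every rich consumer $C_v$ pays $p({\bf I}_v)$. The total revenue of $\tilde p$ is therefore $2n - |U| + m = 2n - {\sf VC}(G) + m$, which matches the upper bound. The only mildly subtle point to double-check is the reduction to prices in $\{1,2\}$ and to the case where every poor consumer buys, but both are handled by the preceding lemma, so the main argument is essentially a direct revenue computation.
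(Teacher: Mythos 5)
Your proof is correct and follows essentially the same route as the paper: identify the set of price-$1$ items with a vertex cover, compute the revenue as $2n-|P_1(p)|+m$, and match the bound with a minimum vertex cover. If anything, your write-up is slightly more complete than the paper's, which leaves the matching lower-bound construction implicit and moves directly from ``$V'$ is a vertex cover'' to plugging ${\sf VC}(G)$ into the revenue formula.
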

\begin{proof}
From the previous lemma, we can assume that the pricing $p^*$ sells
items to every poor consumer. In other words, if $V' = \set{v:
p^*({\bf I}_v)=1}$, it must be the case that $V'$ is a vertex cover:
otherwise, let $e=(u,w)$ be an edge which is not covered by any
vertex in $V'$, so $C_e$ is only interested in items with price $2$,
which he cannot afford. This contradicts the assumption that $p^*$
sells items to every poor consumer.

The revenue collected from poor consumers is exactly $m$. Each rich
consumer $C_v$ in the vertex cover gets the item with price $1$ while
others get the items with price $2$, so the total revenue is $m + {\sf
VC}(G) + 2(n - {\sf VC}(G))$.
\end{proof}

This theorem immediately implies the gap between {\sc Yes-Instance} and {\sc No-Instance} for $d$-{\sf UUDP-MIN}. The only detail we left out is the computation of the embedding $\phi$, and this is where the hardness proofs
of $3$-{\sf UUDP-MIN} and $4$-{\sf UUDP-MIN} depart (other steps are
exactly the same). For $3$-dimensional case, we start from planar
graphs whose adjacency poset can be embedded into $\R^3$. Since
planar vertex cover has a polynomial-time approximation scheme, we
only get \NP-hardness here. For $4$-dimensional case, we start from
vertex cover in cubic graphs, which is known to be \APX-hard, but
unfortunately we can only embed its adjacency poset into $\R^4$,
thus obtaining the hardness of $4$-{\sf UUDP-MIN}.

\paragraph{\NP-Hardness of $3$-{\sf UUDP-MIN}}

To show the \NP-hardness, we start from {\sf Vertex Cover} in planar graphs, which is known to be \NP-complete~\cite{DBLP:journals/tcs/GareyJS76}. We will use the following theorem,
due to Schnyder~\cite{Schnyder89}.

\begin{theorem}
Let $(V \cup E, \preceq_G)$ be an incident poset of planar graph
$G$. Then there exists an embedding $\phi$ from the poset into
$\R^3$.
\end{theorem}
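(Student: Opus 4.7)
My plan is to follow Schnyder's original approach via Schnyder woods. First, I would reduce to the case where $G$ is a maximal planar graph (a triangulation). If $G \subseteq G'$, then an embedding $\phi'$ of the incidence poset of $G'$ restricts to one for $G$: the only relations in the incidence poset are $e \preceq v$ when $v$ is an endpoint of $e$, and these are identical in $G$ and $G'$ for the surviving elements, so restriction preserves ``iff''. Hence I may triangulate $G$, fix a plane embedding, and label the three outer vertices $a_1, a_2, a_3$ in counterclockwise order.

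Next, I would invoke (or construct) a Schnyder wood on the interior edges: an orientation and 3-coloring so that (i) every interior vertex has exactly one outgoing edge of each color $1,2,3$, (ii) around each interior vertex the incident edges appear in the cyclic pattern $\text{out}_1, \text{in}_3, \text{out}_2, \text{in}_1, \text{out}_3, \text{in}_2$, and (iii) the color-$i$ edges form a spanning tree $T_i$ of the interior vertices rooted at $a_i$. Existence is proved by a shelling/canonical-ordering argument on the triangulation. For each interior vertex $v$, the three monochromatic paths $P_i(v)$ from $v$ up to $a_i$ partition the remaining interior into three regions $R_1(v), R_2(v), R_3(v)$, where $R_i(v)$ is the region not touching $a_i$. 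Define
\[
\phi(v) = (v_1, v_2, v_3), \qquad v_i := \#\{\text{triangular faces in } R_i(v)\},
\]
and suitable extreme coordinates for $a_1, a_2, a_3$ (e.g., $\phi(a_1)=(f,0,0)$, etc., where $f$ is the number of interior faces). For an edge $e = \{u,v\} \in E$, set $\phi(e) := (\min(u_1,v_1), \min(u_2,v_2), \min(u_3,v_3))$.

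Verifying the embedding then amounts to three things. The first two are immediate from the definition of $\phi(e)$: $\phi(e) \le \phi(u)$ and $\phi(e) \le \phi(v)$ coordinatewise, so $e \preceq_G u$ and $e \preceq_G v$ are respected. Distinct vertices must be incomparable: this follows from the identity $v_1 + v_2 + v_3 = f - 1$ (each interior face lies in exactly one region of $v$, minus the one containing $v$ itself), combined with the fact that no vertex lies in its own region. The main obstacle, as expected, is the converse direction for edges: I must show that for any edge $e=\{u,v\}$ and any vertex $w \ne u,v$, there is an index $i$ with $\min(u_i,v_i) > w_i$, i.e., both $u$ and $v$ lie strictly outside $R_i(w)$ on a single side.

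The key geometric lemma, which I would prove by a path-crossing argument on the Schnyder wood, is the \emph{region lemma}: $v_i \le w_i$ if and only if $v \in R_i(w) \cup P_{i-1}(w) \cup P_{i+1}(w)$ (indices mod 3), and strict inequality holds iff $v$ is strictly in the interior of that closed region. Granting this, fix any vertex $w \ne u,v$: since the three closed regions of $w$ cover the whole triangulation and meet only on the paths $P_j(w)$, the vertex $u$ lies in some $R_j(w)$; applying the lemma twice then forces $u_i > w_i$ (and symmetrically $v_i > w_i$) for the \emph{unique} index $i$ such that neither $u$ nor $v$ is contained in the closed region $\overline{R_i(w)}$. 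Showing such an $i$ exists is exactly where planarity and the cyclic condition (ii) of the Schnyder wood enter: because $u$ and $v$ are adjacent, the edge $uv$ cannot cross all three separating paths $P_1(w), P_2(w), P_3(w)$ emanating from $w$, and the parity of crossings around $w$ (controlled by (ii)) guarantees that $u$ and $v$ share a common ``opposite'' region. This completes the verification that $\phi$ is a valid embedding of the incidence poset into $\mathbb{R}^3$.
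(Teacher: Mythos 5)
The paper offers no proof of this statement: it is quoted directly from Schnyder~\cite{Schnyder89} and used as a black box (only its polynomial-time constructibility, also due to Schnyder, is invoked separately). So there is no in-paper argument to compare against; what you have written is a reconstruction of Schnyder's own proof via realizers (Schnyder woods) and region counts, and as an outline it is essentially correct. The reduction to triangulations by restricting the embedding of a supergraph is valid, since the incidence relations among surviving elements are unchanged; the face-count coordinates are the standard ones; taking $\phi(e)$ to be the coordinatewise minimum of its endpoints makes $e\preceq_G u$ and $e\preceq_G v$ automatic; and the crux is exactly the two facts you isolate, namely strict monotonicity of the region counts under region containment and the existence, for every edge $uv$ and every third vertex $w$, of a common index $i$ with $u,v\notin\overline{R_i(w)}$.

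Three caveats. First, the two pillars you say you ``would prove'' --- existence of a Schnyder wood on every triangulation and the region/path lemma --- constitute essentially all of the technical content of Schnyder's theorem, so as written this is a correct plan rather than a complete proof. Second, your justification of the constant-sum identity is off: a face does not ``contain $v$''; the sum $v_1+v_2+v_3$ is the same for every interior vertex simply because each bounded face lies in exactly one of the three regions, and only this constancy (not its exact value) is used for vertex incomparability, which must then be upgraded from ``$\phi(u)\le\phi(w)$ implies $\phi(u)=\phi(w)$'' to actual incomparability via the strict region-containment lemma. Third, a full verification of the ``iff'' in the definition of embedding also requires that distinct edges be incomparable and that no vertex lie below an edge; both follow by short arguments from what you establish (e.g., $\phi(e)\le\phi(e')\le\phi(u')$ for an endpoint $u'$ of $e'$ that is not an endpoint of $e$ contradicts your edge--vertex claim), but they should be recorded to close the argument.
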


Schnyder shows later that the crucial step in the theorem
can be computed in polynomial time \cite{DBLP:conf/soda/Schnyder90}, which
immediately implies the following theorem.

\begin{theorem}
$3$-{\sf UUDP-MIN} is \NP-hard even when the consumer budgets are
either $1$ or $2$.
\end{theorem}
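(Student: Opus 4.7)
The plan is to directly assemble the machinery developed earlier in this appendix section. I would start from an instance $G=(V,E)$ of \textsf{Vertex Cover} on planar graphs, which is \NP-complete by Garey--Johnson--Stockmeyer~\cite{DBLP:journals/tcs/GareyJS76}. From $G$, I would build the adjacency poset $(V \cup E, \preceq_G)$ as defined in the preamble of this section, and then invoke Schnyder's theorem (the version quoted just above the statement) to obtain an order-preserving embedding $\phi: V \cup E \to \R^3$. Schnyder's constructive proof~\cite{DBLP:conf/soda/Schnyder90} guarantees that $\phi$ can actually be computed in polynomial time, so the full reduction is polynomial-time.

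Next, I would apply the generic construction described in the section to $(G,\phi)$: create a rich consumer $C_v$ with budget $2$ at coordinates $\phi(v)$ for each $v \in V$, a poor consumer $C_e$ with budget $1$ at coordinates $\phi(e)$ for each $e \in E$, and an item $\Item_v$ at coordinates $\phi(v)$ for each $v \in V$. Because $\phi$ preserves the poset relation, we have $\Item_w \in S_{C_v}$ iff $\phi(w) \geq \phi(v)$ componentwise iff $w \preceq_G v$, so $S_{C_v} = \{\Item_v\}$ for every vertex consumer; similarly $S_{C_e} = \{\Item_u, \Item_v\}$ for $e = (u,v)$. Thus the resulting $3$-\textsf{UUDP-MIN} instance $(\cset,\iset)$ exactly realizes the intended structure, and all consumer budgets lie in $\{1,2\}$ as required.

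To conclude, I would directly invoke the theorem proved just above the statement: the optimal revenue $\opt(\cset,\iset)$ equals $2n - \mathsf{VC}(G) + m$, where $n = |V|$ and $m = |E|$. Since $n$ and $m$ are fixed by the input, any polynomial-time algorithm computing $\opt(\cset,\iset)$ would yield $\mathsf{VC}(G)$ in polynomial time, contradicting the \NP-hardness of planar vertex cover. Hence $3$-\textsf{UUDP-MIN} is \NP-hard even under the promised budget restriction.

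Since every technical ingredient (the revenue formula, the poset construction, the Schnyder embedding, and the \NP-hardness of the starting problem) is already established earlier in the excerpt, there is no real obstacle remaining; the only thing to double-check when writing up the proof is that the Schnyder embedding can be scaled/shifted to lie in $\R_{\geq 0}^3$ (a trivial coordinate translation) so that it conforms to the domain of the $d$-\textsf{UUDP-MIN} problem as defined in Section~1.
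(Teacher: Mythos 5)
Your proposal is correct and follows essentially the same route as the paper: reduce from planar \textsf{Vertex Cover}, embed the adjacency poset into $\R^3$ via Schnyder's polynomial-time construction, and apply the revenue formula $2n-\mathsf{VC}(G)+m$. (One trivial slip: with the paper's convention $a \preceq_G b$ iff $\phi(a) \leq \phi(b)$ componentwise, the condition $\phi(w) \geq \phi(v)$ corresponds to $v \preceq_G w$ rather than $w \preceq_G v$, but since vertices are maximal elements this does not affect the conclusion $S_{C_v}=\{\Item_v\}$.)
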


\paragraph{\APX-Hardness of $4$-{\sf UUDP-MIN}}

We will be using the fact that {\sf Vertex Cover} in cubic graphs
is \APX-hard \cite{DBLP:conf/ciac/AlimontiK97}, stated in the
language convenient for our use below.

\begin{theorem}
For some $0 < \alpha < \beta < 1$, it is \NP-hard to distinguish
between (i) the graph that has a vertex cover of size at most
$\alpha n$, and (ii) the graph whose minimum vertex cover is at
least $\beta n$.
\end{theorem}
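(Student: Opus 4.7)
The plan is to derive this gap hardness for cubic \textsf{Vertex Cover} by composing three standard ingredients: (i) a PCP-based gap hardness for a constraint satisfaction problem with bounded occurrences; (ii) the classical clause--literal reduction to \textsf{Vertex Cover}; and (iii) a degree-regularization gadget that pushes the graph down to $3$-regular without collapsing the gap. The starting point will be \textsf{Max-3SAT-$B$}, the version of \textsf{Max-3SAT} in which every variable occurs in at most $B$ clauses for some absolute constant $B$. By the PCP theorem together with expander-based occurrence reduction, there exist constants $0 < s < 1 \le c$ such that it is $\NP$-hard to distinguish satisfiable \textsf{Max-3SAT-$B$} instances from those in which at most an $s$ fraction of clauses can be simultaneously satisfied. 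This is the gap I will transport.

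First I would apply the textbook reduction from \textsf{Max-3SAT} to \textsf{Vertex Cover}: for each clause $C_j$ on literals $\ell_{j,1}, \ell_{j,2}, \ell_{j,3}$ introduce a triangle on three vertices, and add a matching edge between every pair of vertices representing complementary literals $x$ and $\bar x$. A vertex cover of this graph of size $2m + k$ corresponds to an assignment that leaves $k$ clauses satisfied (one vertex per clause is omitted from the cover iff the corresponding literal is set true). Because each variable occurs in $O(B)$ clauses, every vertex in this intermediate graph has degree at most some absolute constant $\Delta = \Delta(B)$. Gap hardness for \textsf{Max-3SAT-$B$} thus transfers to a gap hardness for \textsf{Vertex Cover} on bounded-degree graphs, which already gives $\NP$-hard distinguishing between covers of size $\le \alpha' n$ and $\ge \beta' n$ for some $0 < \alpha' < \beta' < 1$.

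The core obstacle is then the cubic-regularity constraint. I would handle this via a local degree-regularization gadget: replace each vertex $v$ of degree $d > 3$ by a small auxiliary structure (for instance a cycle of $d$ ``copies'' of $v$, each incident to one of the original neighbors, possibly padded by degree-$2$ attachments to fill in), and for vertices of degree less than $3$ attach matched pendants of a fixed gadget (a short odd cycle or triangle) that force a precise number of cover vertices inside the gadget in any optimum. One then shows an exact offset: there is a constant $\kappa = \kappa(\Delta)$ depending only on $\Delta$ such that the minimum vertex cover of the regularized graph equals $\tau(G) + \kappa \cdot |V(G)|$, and the new graph has $N = \Theta(n)$ vertices with $N/n$ bounded by a constant. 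Because both the additive shift and the blow-up are linear in $n$, the multiplicative gap $(\beta' - \alpha')$ survives (possibly with different constants), yielding constants $0 < \alpha < \beta < 1$ with the desired $\NP$-hardness for cubic graphs.

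The main technical hurdle I would expect is the correctness of the regularization gadget: one must verify that in every optimum cover of the new graph, each gadget contributes exactly $\kappa$ vertices and no ``cross-subsidy'' occurs between the gadget and its external neighbor, so that independent sets on the original graph lift to independent sets on the regularized graph of predictable size. This is usually proved by a local exchange argument showing that any cover can be rewritten, without increasing size, so that the intersection with each gadget is canonical; the odd cycles/triangles in the gadget ensure a fixed minimum contribution. Once this is in place, setting $\alpha = \frac{\alpha' + \kappa (N/n)}{N/n}$ and $\beta = \frac{\beta' + \kappa (N/n)}{N/n}$ (rescaled to the new vertex count $N$) yields the claimed gap instance of cubic \textsf{Vertex Cover}, completing the reduction.
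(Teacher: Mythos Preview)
The paper does not actually prove this theorem: it is stated as a known fact and attributed to Alimonti and Kann \cite{DBLP:conf/ciac/AlimontiK97}, and then used as a black box in the reduction to $4$-{\sf UUDP-MIN}. So there is no ``paper's own proof'' to compare your proposal against; you are volunteering a proof where the paper is content to cite one. Your overall three-step outline (PCP $\Rightarrow$ gap \textsf{Max-3SAT-$B$} $\Rightarrow$ bounded-degree \textsf{Vertex Cover} $\Rightarrow$ cubic \textsf{Vertex Cover}) is exactly the route taken in the cited reference, so in spirit you are reproducing the right argument.

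Two concrete issues in your sketch are worth flagging. First, the arithmetic in step~(ii) is flipped: in the triangle-plus-consistency-edge construction, a vertex cover of size $2m + k$ corresponds to an assignment that leaves $k$ clauses \emph{unsatisfied} (equivalently, satisfies $m-k$ clauses), not $k$ clauses satisfied. This is a sign slip, not a structural problem, but as written it would invert the gap. Second, the particular degree-regularization gadget you propose---a cycle of $d$ copies of a degree-$d$ vertex---does not give the fixed additive offset $\kappa$ that your analysis needs: if the original vertex is in the cover you must take all $d$ copies to cover the external edges, whereas if it is not you pay only $\lceil d/2 \rceil$ to cover the cycle, so the contribution of the gadget depends on the cover and the claimed identity $\tau(G') = \tau(G) + \kappa|V(G)|$ fails. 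You correctly flag this as the delicate step, but the fix is not just a local exchange argument on the cycle; the actual Alimonti--Kann construction uses a more carefully designed gadget (and an $L$-reduction analysis rather than an exact offset) precisely to avoid this dependence.
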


Now we assume that our input graph $G$ is a cubic graph and use the
following theorem to embed the adjacency poset of $G$ into $\R^4$.

\begin{theorem}[Schnyder]
An adjacency poset of any $4$-colorable graph can be embedded into
$\R^4$. Moreover, the embedding is computable in polynomial time.
\end{theorem}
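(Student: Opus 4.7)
The plan is to construct four linear extensions $L_1, L_2, L_3, L_4$ of the adjacency poset $(V \cup E, \preceq_G)$ whose intersection is exactly $\preceq_G$, and then to define the embedding $\phi : V \cup E \to \R^4$ by $\phi(x)[i] := \mathrm{rank}_{L_i}(x)$. Since $a \preceq_G b$ iff $a \leq_{L_i} b$ for every $i$, this automatically gives $a \preceq_G b$ iff $\phi(a)[i] \leq \phi(b)[i]$ for all $i$, which is precisely the embedding property required. To build the $L_i$'s, I first obtain a proper $4$-coloring $c : V \to \{1,2,3,4\}$; for the cubic graphs actually used in the hardness reduction, such a coloring is furnished in polynomial time by (the algorithmic version of) Brooks' theorem.

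Each linear extension $L_i$ is designed around the color class $V_i := c^{-1}(i)$. Because $V_i$ is independent in $G$, every edge of $G$ has at most one endpoint in $V_i$. I will partition $V \cup E$ into four blocks---external edges (incident to $V_i$), vertices in $V_i$, internal edges (no endpoint in $V_i$), and vertices outside $V_i$---and interleave them so that each $v \in V_i$ comes immediately after the external edges incident to $v$, with the internal edges and the remaining vertices placed afterwards. Regardless of the internal orderings within the blocks, the resulting sequence is a valid linear extension of $\preceq_G$; the internal orderings are then chosen to separate as many incomparable pairs $(e,v)$ as possible.

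The crux is to prove that for every incomparable pair $(e,v)$---that is, $v$ is not an endpoint of $e$---some $L_i$ satisfies $v <_{L_i} e$. I plan to argue this by a case analysis on $c(v)$ and on the colors $\{a,b\}$ of the two endpoints of $e$, choosing the index $i$ as a function of both $c(v)$ and $\{a,b\}$. The hard part will be that the naive rule ``assign each pair to $L_{c(v)}$'' is inconsistent: whenever two vertices $v, v' \in V_i$ both have incident edges, that rule would force the cyclic constraint $v <_{L_i} v' <_{L_i} v$ on $L_i$. The fix is to let the assignment depend on $\{a,b\}$ as well, distributing the responsibility for each pair across several $L_i$'s; this is exactly where the four-color hypothesis is genuinely used, since three colors do not leave enough slack to simultaneously break all such cycles.

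Finally, polynomial-time computability is immediate once the coloring is at hand: each $L_i$ is a topological sort computable in $O(|V|+|E|)$ time, and the four ranks of each element are then the coordinates of $\phi(x)$.
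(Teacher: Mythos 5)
The paper does not actually prove this statement; it is imported as a black box from Schnyder's work \cite{Schnyder89,DBLP:conf/soda/Schnyder90}, so your argument has to stand entirely on its own. Your framing is the right one: an embedding of $(V\cup E,\preceq_G)$ into $\R^4$ is exactly a realizer of size $4$, i.e.\ four linear extensions whose intersection is $\preceq_G$, and the rank map then gives the coordinates. The problem is that the construction you commit to cannot produce such a realizer. In your $L_i$, every vertex $v\notin V_i$ sits in the final block, \emph{after every edge of the graph}; hence for a critical pair $(v,e)$ with $v\notin e$, the only extension that can possibly witness $v<e$ is $L_{c(v)}$. If $e$ is incident to some $v'\in V_{c(v)}$ with $v'\neq v$, then $e$ is placed immediately before $v'$ in $L_{c(v)}$, so witnessing $v<e$ forces $v<_{L_{c(v)}}v'$. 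Running over all such pairs, you need \emph{every} vertex of $V_i$ to precede \emph{every other} positive-degree vertex of $V_i$ in the single order on $V_i$ --- impossible as soon as a color class contains two vertices of positive degree, which happens in every cubic graph on at least five vertices. This is precisely the cycle you flag yourself, but your proposed remedy ("let the assignment depend on $\{a,b\}$, distribute responsibility across several $L_i$'s") is vacuous under your block structure: no $L_j$ with $j\neq c(v)$ can take responsibility for any pair $(v,e)$, because $v$ lies above all edges there. The case analysis that would constitute the actual content of the theorem is never carried out, and it cannot be carried out without abandoning the architecture you fixed in the second step.

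A further warning sign that the "one linear extension per color class" paradigm is not the right mechanism: it would suggest $\dim(P_G)\le\chi(G)$ in general, which is false already for the $4$-cycle ($2$-colorable, but its incidence poset has dimension $3$; a short check of the eight critical pairs shows two linear extensions cannot realize it). So whatever argument underlies Schnyder's theorem, it must exploit more than a partition into independent sets, and your proof needs a genuinely new idea rather than a refinement of the interleaving. Two smaller points: (i) for the theorem as literally stated ("any $4$-colorable graph \dots\ in polynomial time") you cannot begin by computing a proper $4$-coloring, since that is \NP-hard in general; your restriction to the cubic graphs of the reduction handles the application but not the statement. (ii) You also owe a sentence on the non-critical incomparable pairs (vertex--vertex, edge--edge, and $e<v$ for non-incident pairs); in your construction these happen to be easy, but a realizer must reverse every incomparable pair, not only the critical ones.
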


It only requires a straightforward computation to prove the
following theorem.
\begin{theorem}
$4$-{\sf UUDP-MIN} is \APX-hard even when the consumer budgets are
either $1$ or $2$.
\end{theorem}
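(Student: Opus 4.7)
The plan is to instantiate the general reduction template already set up in this section with the $d=4$ parameters, and then verify that the gap is preserved as a constant factor. First, start from an instance $G=(V,E)$ of {\sf Vertex Cover} on cubic graphs, and invoke Schnyder's embedding theorem (stated just above) to obtain, in polynomial time, an embedding $\phi:V\cup E\rightarrow \R^4$ of the adjacency poset $(V\cup E,\preceq_G)$ such that $a\preceq_G b$ iff $\phi(a)[i]\leq \phi(b)[i]$ for all $i\in[4]$. We use $\phi$ to place consumers and items exactly as in the general template: a rich consumer $C_v$ of budget $2$ at $\phi(v)$, a poor consumer $C_e$ of budget $1$ at $\phi(e)$, and an item $\Item_v$ at $\phi(v)$. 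The consideration set of each $C_e$ (for $e=(u,v)$) then equals $\{\Item_u,\Item_v\}$ and the consideration set of each $C_v$ equals $\{\Item_v\}$, because the embedding preserves $\preceq_G$ in both directions.

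Next, I would invoke the two structural lemmas already proven in this section: (i) the reduction from $p$ to $p'$ so that every poor consumer is served, and (ii) the resulting identity $\opt(\cset,\iset)=2n-{\sf VC}(G)+m$. Since $G$ is cubic we have $m=3n/2$, so $\opt(\cset,\iset)=(7/2)n-{\sf VC}(G)$. Combining this with the APX-hardness of cubic Vertex Cover recalled above gives constants $0<\alpha<\beta<1$ such that it is \NP-hard to distinguish between ${\sf VC}(G)\leq \alpha n$ and ${\sf VC}(G)\geq \beta n$, which translate to $\opt\geq(7/2-\alpha)n$ and $\opt\leq(7/2-\beta)n$ respectively. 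Because $\alpha<\beta$, the ratio $(7/2-\alpha)/(7/2-\beta)$ is a constant strictly greater than $1$, which yields \APX-hardness of $4$-{\sf UUDP-MIN}.

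The only genuinely non-routine step is the appeal to Schnyder's theorem in the form stated: we need the adjacency poset of a cubic (hence $4$-colorable by Brooks' theorem) graph to be realizable as a suborder of the coordinatewise order on $\R^4$, with the embedding computable in polynomial time. Everything else—the gadget construction, the ``serve every poor consumer'' swap, and the counting identity for the optimum—has already been carried out in the general template, so I would simply cite those steps verbatim rather than re-derive them. The main obstacle, if any, is bookkeeping: I must ensure that the points $\phi(v),\phi(e)\in\R^4$ produced by the embedding have non-negative rational coordinates of polynomial bit-complexity so that the produced $4$-{\sf UUDP-MIN} instance is a legitimate input; this is immediate from the standard combinatorial construction underlying Schnyder's algorithm, in which coordinates are integers bounded by a polynomial in $|V|+|E|$.
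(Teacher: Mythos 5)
Your proposal is correct and follows essentially the same route as the paper: reduce from {\sf Vertex Cover} on cubic graphs, embed the adjacency poset into $\R^4$ via Schnyder's theorem for $4$-colorable graphs, reuse the two structural lemmas to get $\opt = 2n - {\sf VC}(G) + m$, and derive a constant gap (your explicit $m = 3n/2$ is just the paper's $\gamma = 3/2$). The added remarks on polynomial bit-complexity of the embedding and $4$-colorability of cubic graphs are fine but not points the paper dwells on.
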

\begin{proof}
In the \yi, we can collect the revenue of $(2-\alpha) n +m$.
However, in the \ni, the revenue is at most $(2-\beta)n +m$. Since
the graph is cubic, we may assume that $m =\gamma n$ for some $1\leq
\gamma <2$. Hence we have a gap of
$(2-\alpha+\gamma)/(2-\beta+\gamma)$.
\end{proof}

\subsection{\NP-hardness of $2$-{\sf SMP}}

Highway problem can be defined as follows: We are given a line $P = (v_0,\ldots, v_n)$ consisting of $n$ edges and $n+1$ vertices and a set of consumers $\cset$ where each consumer $\Consumer$ corresponds to a subpath $P_{\Consumer}$ of $P$ and is equipped with budget $B_\Consumer$. Our goal is to set price to edges so as to maximize the revenue, where each consumer $\Consumer$ buys a path $P_{\Consumer}$ if she can afford the whole path; otherwise she buys nothing.

\begin{lemma}
There is a polynomial-time algorithm that transforms an instance of Highway problem to an instance of $2$-{\sf SMP}.
\end{lemma}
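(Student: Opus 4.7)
The plan is to exhibit an explicit, polynomial-time mapping that places each edge of the highway as a point in $\mathbb{R}^2_{\geq 0}$ and each consumer's subpath as a ``lower-left corner'' point whose dominance region is exactly that subpath. Concretely, given a Highway instance on the path $P=(v_0,\ldots,v_n)$ with edges $e_1=(v_0,v_1),\ldots,e_n=(v_{n-1},v_n)$ and consumer set $\cset$, I will construct a $2$-{\sf SMP} instance $(\cset',\iset')$ as follows. For each edge $e_i$, create an item $\Item_i$ with coordinates $(i,\,n-i+1)$. For each Highway consumer $\Consumer$ whose subpath is $P_\Consumer=e_i,e_{i+1},\ldots,e_j$, create a $2$-{\sf SMP} consumer $\Consumer'$ with coordinates $(i,\,n-j+1)$ and budget $B_{\Consumer'}=B_\Consumer$. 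This transformation is clearly computable in polynomial time.

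Next I will verify that the consideration set $S_{\Consumer'}$ in the $2$-{\sf SMP} instance is in one-to-one correspondence with $P_\Consumer$. By the definition in Eq.~\eqref{eq:SC for UDP-MIN}, an item $\Item_k=(k,n-k+1)$ lies in $S_{\Consumer'}$ iff $k\geq i$ and $n-k+1\geq n-j+1$, i.e.\ iff $i\leq k\leq j$. Hence $S_{\Consumer'}=\{\Item_i,\Item_{i+1},\ldots,\Item_j\}$, which is exactly the set of edges of $P_\Consumer$.

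Finally, I will argue that the two instances are revenue-equivalent under the natural bijection between pricings. Given any price function $p:\iset'\to\reals_{\geq 0}$, define an edge pricing for the Highway instance by $p(e_i):=p(\Item_i)$; conversely, any Highway edge pricing induces a $2$-{\sf SMP} pricing. Under this correspondence, consumer $\Consumer'$ pays $\sum_{\Item\in S_{\Consumer'}}p(\Item)=\sum_{k=i}^{j}p(e_k)$, which is precisely the total price of the subpath $P_\Consumer$. In particular $\Consumer'$ can afford $S_{\Consumer'}$ iff $\Consumer$ can afford $P_\Consumer$, and in that case both contribute the same amount to the revenue. Summing over all consumers yields equal total revenues, so the transformation is not merely a reduction but an exact (revenue-preserving) embedding of Highway into $2$-{\sf SMP}.

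There is essentially no obstacle here: the only ``trick'' is to choose the second coordinate so that the dominance cone picks out a contiguous interval of edges, which the substitution $(i,n-i+1)$ accomplishes via the decreasing second coordinate. This construction also justifies the remark made earlier in the paper that $2$-{\sf SMP} generalizes the Highway pricing problem, and together with our {\sf QPTAS} for $2$-{\sf SMP} it subsumes the {\sf QPTAS} of~\cite{ElbassioniSZ07}.
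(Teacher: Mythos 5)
Your construction is exactly the paper's: place item $\Item_i$ at $(i,n+1-i)$ along the antidiagonal and each consumer at the lower-left corner of her edge interval, so the dominance region picks out precisely the contiguous subpath. The proposal is correct and matches the paper's proof, with the added (welcome but routine) verification of the consideration sets and revenue equivalence that the paper leaves as "easy to see."
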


\begin{proof}
For each $i=1,\ldots, n$, each edge $(v_{i-1}, v_i)$, we create an item $\Item_i$ at coordinates $(i, n+1-i)$. Then for each consumer $\Consumer$ whose path is $P_\Consumer = (v_s,\ldots, v_t)$, we create a consumer point at $(s+1, n+1-t)$. It is easy to see that the consideration set remains unchanged.
\end{proof}

\subsection{\APX-hardness of $4$-{\sf SMP}}

We perform a reduction from {\sf Graph Vertex Pricing} on bipartite graphs. In this problem, we are given a graph $G=(V,E)$, where each vertex corresponds to item and each edge $e \in E$ corresponds to a consumer, additionally equipped with budget $B_e$. Each consumer edge is interested in items that correspond to her incident vertices. Our goal is to set a price $p: V \rightarrow \R$ so as to maximize our revenue.

Given an instance $(G, \set{B_e}_{e \in E})$ of {\sf Graph Vertex Pricing} where graph $G$ is a bipartite graph $(U \cup W, E)$, we create an instance of $4$-{\sf SMP} as follows. Suppose we have $U = \set{u_1,\ldots, u_{|U|}}$ and $W = \set{w_1,\ldots, w_{|W|}}$. For each vertex $u_i \in U$, we have a corresponding item $\Item^u_{i}$ with coordinates $(i, |U|+1-i, \infty,\infty)$. Similarly, for each vertex $w_j \in W$, we have a corresponding item $\Item^w_{j}$ with coordinates $(\infty,\infty,j, |W|+1-j)$. Finally, for each edge $(u_i, w_j) \in E$, we have a consumer $\Consumer_{ij} = (i, |U|+1-i, j, |W|+1-j)$, whose budget is the same as the budget of edge $(u_i,w_j)$. The following claim is almost immediate.

\begin{claim}
For each consumer $\Consumer_{ij}$, we have that $S_{\Consumer_{ij}} = \set{\Item^u_i, \Item^w_j}$
\end{claim}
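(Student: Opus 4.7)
The plan is to unpack the definition of the consideration set $S_{\Consumer_{ij}} = \{\Item : \Item[k] \geq \Consumer_{ij}[k] \text{ for all } 1 \leq k \leq 4\}$ and check both inclusions directly.

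First I would verify the ``$\supseteq$'' direction. Plugging in $\Item^u_i = (i, |U|+1-i, \infty, \infty)$ and comparing coordinate-wise with $\Consumer_{ij} = (i, |U|+1-i, j, |W|+1-j)$, the first two coordinates match exactly and the last two coordinates of $\Item^u_i$ are $\infty$, so all four inequalities hold trivially. Symmetrically, $\Item^w_j$ is dominated by $\Consumer_{ij}$: the last two coordinates match and the first two are $\infty$. Hence both items lie in $S_{\Consumer_{ij}}$.

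The main work is the ``$\subseteq$'' direction, and the key observation is that the map $i \mapsto (i, |U|+1-i)$ is an antichain in $\R^2$: if $\Item^u_k$ with $k \neq i$ satisfied $\Item^u_k[1] \geq \Consumer_{ij}[1]$ and $\Item^u_k[2] \geq \Consumer_{ij}[2]$, these two inequalities would read $k \geq i$ and $|U|+1-k \geq |U|+1-i$, forcing $k \leq i$ and $k \geq i$, hence $k=i$, a contradiction. So no $\Item^u_k$ with $k \neq i$ lies in $S_{\Consumer_{ij}}$. The same antichain argument on coordinates 3 and 4 rules out every $\Item^w_\ell$ with $\ell \neq j$. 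These two exclusions cover all items in $\iset$, which gives the claimed equality.

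The only potentially delicate point is the use of $\infty$ as a coordinate; I would note that $\infty$ is just shorthand for a value larger than every consumer coordinate that appears, so all comparisons involving $\infty$ are automatically satisfied and no real case analysis is needed for the ``cross'' coordinates. Nothing else is subtle, so the proof will be a short coordinate check once the antichain observation is isolated.
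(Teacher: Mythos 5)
Your proof is correct and follows the same route as the paper: both inclusions are checked coordinate-wise, and the exclusion of $\Item^u_{i'}$ for $i'\neq i$ via the antichain structure of $i\mapsto (i,|U|+1-i)$ is exactly the paper's case split ($i'<i$ violates the first coordinate, $i'>i$ the second), just phrased as a single pair of opposing inequalities. No gaps.
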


\begin{proof}
It is easy to see that $\set{\Item^u_i, \Item^w_j} \subseteq S_{\Consumer_{ij}}$. Notice that for $i'<i$, we have $\Consumer_{ij}[1] > \Item^u_{i'}[1]$, so any such item $\Item^u_{i'}$ cannot belong to $S_{\Consumer_{ij}}$. Similarly for $i' >i$, we have $\Consumer_{ij}[2] > \Item^u_{i'}[2]$, so such an item cannot belong to $S_{\Consumer_{ij}}$. By using similar arguments for items of the form $\Item^w_{j'}$ for $j' \neq j$, we reach the conclusion that $S_{\Consumer_{ij}} = \set{\Item^u_i, \Item^w_j}$.
\end{proof}

Since the $4$-{\sf SMP} instance is equivalent to the instance of {\sf Graph Vertex Pricing}, the maximum revenue is preserved. Using the \APX-hardness result of {\sf Graph Vertex Pricing} on bipartite graphs~\cite{KhandekarKMS09}, we conclude that $4$-{\sf SMP} is \APX-hard.

\subsection{Hardness Results in Higher Dimensions}\label{sec:higher
dimensions}

In this section, we present the proof of Theorem~\ref{theorem: higher dimension}. Let $A = (\iset, \cset)$ be an instance of {\sf UUDP-MIN} where every consumer ${\bf C}$ has its consideration set $S_\Consumer$ of size at most $B$. Let $\iset = \set{{\bf I}_1,\ldots, {\bf I}_n}$. For each $i \in [d]$, we pick a random permutation $\pi_i: [n] \rightarrow [n]$, so we have $d$ permutations $\pi_1,\ldots, \pi_d$. The function $\phi$ on items $\iset$ can be defined as $\phi({\bf I}_j)[i] = \pi_i(j)$, and we extend the function to the set of consumers as follows: $\phi({\bf C})[i] = \min_{j \in S_C} \pi_i(j)$. Now we have a well-defined function $\phi$.

\begin{lemma}
With probability at least $1- 1/n$, for all consumer ${\bf C} \in \cset$, the consideration set $S'_\Consumer$ defined by $S'_\Consumer = \set{{\bf I}_j: \phi({\bf I}_j) \mbox{ dominates } \phi({\bf C})}$ is exactly $S_\Consumer$.
\end{lemma}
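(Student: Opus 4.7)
The plan is to verify the two inclusions separately. The forward inclusion $S_\Consumer \subseteq S'_\Consumer$ holds deterministically and without any randomness: if $\Item_j \in S_\Consumer$, then for every coordinate $i$ we have $\pi_i(j) \geq \min_{j' \in S_\Consumer} \pi_i(j') = \phi(\Consumer)[i]$, so $\phi(\Item_j)$ dominates $\phi(\Consumer)$ componentwise. Thus the only work is to show $S'_\Consumer \subseteq S_\Consumer$ with probability at least $1 - 1/n$.

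For the reverse inclusion, fix a consumer $\Consumer \in \cset$ and an item $\Item_j \notin S_\Consumer$; I want to show that with high probability there is some coordinate $i$ such that $\pi_i(j) < \min_{j' \in S_\Consumer} \pi_i(j')$, i.e.\ $j$ attains the strict minimum of $\pi_i$ on $S_\Consumer \cup \{j\}$. For a single random permutation $\pi_i$, the identity of the minimizer of $\pi_i$ on the set $S_\Consumer \cup \{j\}$ is uniform on that set, so the probability that $j$ is the minimizer is exactly $1/(|S_\Consumer|+1) \geq 1/(B+1)$. Because the $d$ permutations $\pi_1,\ldots,\pi_d$ are drawn independently, the events ``$j$ is not the minimizer in coordinate $i$'' are independent across $i$, so
\[
\Pr\bigl[\phi(\Item_j) \text{ dominates } \phi(\Consumer)\bigr] \;\leq\; \Bigl(1-\tfrac{1}{B+1}\Bigr)^d \;\leq\; e^{-d/(B+1)}.
\]

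Finally, I take a union bound over all pairs $(\Consumer,\Item_j)$ with $\Item_j \notin S_\Consumer$, of which there are at most $|\cset|\cdot|\iset| \leq n^2$ (we may assume both $|\iset|$ and $|\cset|$ are polynomially bounded in $n$; if $|\cset|$ is larger, we can cluster identical consumers beforehand). With $d = O(B^2 \log n)$, and in fact already with $d \geq 3(B+1)\ln n$, we get $e^{-d/(B+1)} \leq n^{-3}$, so the total failure probability is at most $n^2 \cdot n^{-3} = 1/n$. The only ``obstacle'' is really a bookkeeping one: ensuring we apply the union bound over all bad pairs simultaneously and that $|\cset|$ can be assumed polynomial in $n$ (otherwise we replace $n$ in the bound by $\max(|\cset|,|\iset|)$ and choose $d$ accordingly, still within the $O(B^2\log n)$ budget). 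The slack of a factor of $B$ between $B\log n$ and the stated $B^2\log n$ is comfortable and presumably used elsewhere in the reduction.
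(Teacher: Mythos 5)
Your proposal is correct and follows essentially the same route as the paper: the forward inclusion is deterministic, the per-coordinate bad-event probability is $1-1/(B+1)$ by symmetry of the minimizer, independence of the $d$ permutations gives $(1-1/(B+1))^d$, and a union bound finishes. The paper resolves your bookkeeping concern about $|\cset|$ exactly as your parenthetical suggests, by union-bounding over the at most $n^B$ distinct consideration sets rather than over consumers, which is precisely where the $B^2$ (rather than $B$) in $d=O(B^2\log n)$ is consumed.
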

\begin{proof}
Since we define $\phi({\bf C})$ to be the minimum of $\phi({\bf I}_j)$ over all items in $S_\Consumer$, we have $S_\Consumer \sse S'_\Consumer$. Let $k$ be the index of an item that does not belong to $S_\Consumer$. We show the following claim.

\begin{claim}
\label{claim: tiny prob}
The probability that $\phi({\bf I}_k)$ dominates $\phi({\bf C})$ is at most $1/n^{B+2}$.
\end{claim}
\begin{proof}
Fix some $i \in [d]$. The bad event that $\pi_i(k)\geq \min_{j \in
S_C} \pi_i(j)$ happens only if $\pi_i$ does not put $k$ in the last
position among $S_C \cup \set{k}$. This probability is exactly
$(1-1/(B+1))$. Therefore, the bad event happens for all values of
$i$ with probability at most $(1-1/(B+1))^d \leq 1/n^{B+2}$ for $d=
O(B^2 \log n)$.
\end{proof}

This claim immediately implies the lemma: By the union bound, the
probability that $\phi({\bf I}_k)$ dominates $\phi({\bf C})$ is at
most $1/n^{B+1}$. So we have that $\pr{}{S_\Consumer \neq S'_\Consumer} \leq
1/n^{B+2}$. There are at most $n^B$ possible consideration sets of
size at most $B$, so by union bounds, the probability that a bad
event $S_C \neq S'_\Consumer$ happens for some consumer ${\bf C}$ is at most
$1/n$.
\end{proof}

\paragraph{$n$ attributes capture general problem} Finally, we end this section with the proof that $n$-{\sf UUDP-MIN} captures the whole generality of {\sf UUDP-MIN}: Consider an instance $(\cset, \iset, \set{S_\Consumer}_{\Consumer\in \cset})$ of {\sf UUDP-MIN}. Denote the set of items by $\iset = \set{\Item_1,\ldots, \Item_n}$. Notice that we can define the coordinates of each consumer by $\Consumer[i] = 0$ if $\Item_i \in  S_\Consumer$, and $\Consumer[i] = 1$ otherwise. We define the coordinates of each item as $\Item_i[i] = 0$ and $\Item_i[j] =1$ for all $j \neq i$. It is easy to check that the consideration sets are preserved by this reduction.

\end{document}